\newcommand{\D}{{\mathrm{d}}}
\newtheorem{theorem}{Theorem}
\newtheorem{proposition}{Proposition}
\newtheorem{definition}{Definition}
\address{Department of Mathematics, University of Leicester, University Road, Leicester, LE1 7RH, UK; E-Mail: ag153@le.ac.uk}
\abstract{$H$-theorem states that the entropy production is nonnegative and, therefore, the entropy of
a closed system should monotonically change in time. In information processing, the entropy production
is positive for random transformation of signals (the information processing lemma). Originally,
the $H$-theorem and the information processing lemma were proved for the classical
Boltzmann-Gibbs-Shannon entropy and for the correspondent divergence (the relative entropy).
Many new entropies and divergences have been proposed during last decades and for all of them the
$H$-theorem is needed. This note proposes a simple and general criterion to check whether
the $H$-theorem is valid for a convex divergence $H$ and demonstrates that some of the popular divergences obey
no $H$-theorem. We consider systems with $n$ states $A_i$ that obey first order kinetics (master equation).
A convex function $H$ is a Lyapunov function for all master equations with given equilibrium if and only if
its conditional minima properly describe the equilibria of pair transitions $A_i \rightleftharpoons A_j$.
This theorem does not depend on the principle of detailed balance and is valid for general Markov kinetics.
Elementary analysis of pair equilibria demonstrates that the popular Bregman divergences like Euclidean distance
or Itakura-Saito distance in the space of distribution cannot be the universal Lyapunov functions for the
first-order kinetics and can increase in Markov processes. Therefore, they violate the second law and the information
processing lemma. In particular, for these measures of information (divergences) random manipulation with data
may add information to data. The main results are extended to nonlinear generalized mass action law kinetic equations.
In Appendix, a new family of the universal Lyapunov functions for the generalized mass action law kinetics is described.}
\begin{document}

\section{The Problem}

The first non-classical entropy was proposed by R\'enyi in 1960 \cite{Renyi1961}. In the
same paper he discovered the very general class of divergences, the so-called
$f$-divergences (or Csisz\'ar-Morimoto divergences because of the works of Csisz\'ar
\cite{Csiszar1963} and Morimoto \cite{Morimoto1963} published simultaneously in 1963):
\begin{equation}\label{Morimoto}
H_h(p)=H_h(P \| P^*)=\sum_i p^*_i h\left(\frac{p_i}{p_i^*}\right)
\end{equation}
where $P=(p_i)$ is a probability distribution, $P^*$ is an equilibrium distribution,
$h(x)$ is a convex function defined on the open ($x>0$) or closed $x\geq 0$ semi-axis. We
use here the notation $H_h(P \| P^*)$ to stress the dependence of $H_h$ on both $p_i$ and
$p^*_i$.

These divergences have the form of the relative entropy or, in the thermodynamic
terminology, the (negative) free entropy, the Massieu-Planck functions \cite{Callen1985},
or $F/RT$ where $F$ is the free energy. They measure the deviation of the current
distribution $P$ from the equilibrium $P^*$.

After 1961, many new entropies and divergences were invented and applied to real
problems, including Burg entropy \cite{Burg1972}, Cressie-Red family of power divergences
\cite{CR1984}, Tsallis entropy \cite{Tsa1988,Abe}, families of $\alpha$-, $\beta$- and
$\gamma$-divergences \cite{CichockiAmari2010} and many others (see the review papers
\cite{EstMor1995,GorGorJudge2010}). Many of them have the $f$-divergence form, but some
of them do not. For example, the squared Euclidean distance from $P$ to $P^*$ is not, in
general, a $f$-divergence unless all $p^*_i$ are equal (equidistribution). Another
example gives the Itakura-Saito distance:
\begin{equation}\label{Itakura-Saito}
\sum_i \left(\frac{p_i}{p_i^*}-\ln \frac{p_i}{p_i^*}- 1\right)
\end{equation}
The idea of Bregman divergences \cite{Bregman1967} provides a new general source of
divergences different from the $f$-divergences. Any strictly convex function $F$ in an
closed convex set $V$ satisfies the Jensen inequality
\begin{equation}
D_F(p,q)=F(p)-F(q)-(\nabla_q F(q), p-q) >0
\end{equation}
if $p\neq q$, $p,q \in V$. This positive quantity $D_F(p,q)$ is the Bregman divergence
associated with $F$. For example, for a positive quadratic form $F(x)$ the Bregman
distance is just $D_F(p,q)=F(p-q)$. In particular, if $F$ is the squared Euclidean length
of $x$ then $D_F(p,q)$ is the squared Euclidean distance. If $F$ is the Burg entropy,
$F(x)=-\sum_i \ln p_i$, then $D_F(p,q)$ is the Itakura-Saito distance. The Bregman
divergences have many attractive properties. For example, the mean vector minimizes the
expected Bregman divergence from the random vector \cite{Banerjee2005}. The Bregman
divergences are convenient for numerical optimization because generalized Pythagorean
identity \cite{CsizarMatus2012}. Nevertheless, for information processing and for many
physical applications one more property is crucially important.

The divergence between the current distribution and equilibrium should monotonically
decrease in Markov processes. It is the ultimate requirement for use of the divergence in
information processing and in non-equilibrium thermodynamics and kinetics. In physics,
the first result of this type was Boltzmann's $H$-theorem proven for nonlinear kinetic
equation. In information theory, Shannon \cite{Shannon1948} proved this theorem for the
entropy (``the data processing lemma") and Markov chains.

In his well-known paper \cite{Renyi1961}, R\'enyi also proved that $H_h(P \| P^*)$
monotonically decreases in Markov processes (he gave the detailed proof for the classical
relative entropy and then mentioned that for the $f$-divergences it is the same). This
result, elaborated further by Csisz\'ar \cite{Csiszar1963} and Morimoto
\cite{Morimoto1963}, embraces many later particular $H$-theorems for various entropies
including the Tsallis entropy and the R\'enyi entropy (because it can be transformed into
the form (\ref{Morimoto}) by a monotonic function, see for
example~\cite{GorGorJudge2010}). The generalized data processing lemma was proven
\cite{Cohen1993,CohenIwasa1993}: for every two positive probability distributions $P,Q$
the divergence $H_h(P \| Q)$ decreases under action of a stochastic matrix $A=(a_{ij})$
\begin{equation}
H_h(AP \| AQ)\leq \overline{\alpha}(A) H_h(P \| Q)
\end{equation}
where
\begin{equation}
\overline{\alpha}(A)=\frac{1}{2}\max_{i,k} \left\{\sum_j |a_{ij}-a_{kj}| \right\}
\end{equation}
is the ergodicity contraction coefficient, $0 \leq \overline{\alpha}(A) \leq 1$. Here,
neither $Q$ nor $P$ must be the equilibrium distribution: divergence between any two
distributions decreases in Markov processes.

Under some additional conditions, the property to decrease in Markov processes
characterizes the $f$-divergences \cite{ENTR3,Amari2009}. For example, if a divergence
decreases in all Markov processes, does not change under permutation of states and can be
represented as a sum over states (has the trace form), then it is the $f$-divergence
\cite{ENTR3,GorGorJudge2010}.

The dynamics of distributions in the continuous time Markov processes is described by the
master equation. Thus, the $f$-divergences are the Lyapunov functions for the master
equation. The important property of the divergences $H_h(P \| P^*)$ is that they are {\em
universal} Lyapunov functions. That is, they depend on the current distribution $P$ and
on the equilibrium $P^*$ but do not depend on the transition probabilities directly.

For each new divergence we have to analyze its behavior in Markov processes and to prove
or refute the $H$-theorem. For this purpose, we need a simple and general criterion. It
is desirable to avoid any additional requirements like the trace form or symmetry. In
this paper we develop this criterion.

It is obvious that the equilibrium $P^*$ is a global minimum of any universal Lyapunov
function $H(P)$ in the simplex of distributions (see the model equation below). In brief,
the {\em general $H$-theorem} states that a convex function $H(P)$ is a universal
Lyapunov function for the master equation if and only if its conditional minima correctly
describe the partial equilibria for pairs of transitions $A_i \rightleftharpoons A_j$.
These partial equilibria are given by proportions $p_i/p_i^*=p_j/p_j^*$. They should be
solutions to the problem
\begin{equation}\label{ConditMin}
\begin{split}
&H(P)\rightarrow \min \; \mbox{ subject to } p_k \geq 0 \, (k=1,\ldots, n), \\
& \sum_{k=1}^n p_k=1, \;
\mbox{ and given values of } p_l \, (l\neq i,j)
\end{split}
\end{equation}
These solutions are minima of $H(P)$ on segments $p_i+p_j=1-\sum_{l\neq i,j} p_l$,
$p_{i,j}\geq0$. They depend on $n-2$ parameters $p_l\geq 0$ ($l\neq i,j$, $\sum_{l\neq
i,j} p_l<1$).

Using this general $H$-theorem we analyze several Bregman divergences that are not
$f$-divergences and demonstrate that they do not allow the $H$-theorem even for systems
with three states. We present also the generalizations of the main results for
Generalized Mass Action Law (GMAL) kinetics.

\section{Three Forms of Master Equation and the Decomposition Theorem}

We consider continuous time Markov chains with $n$ states $A_1, \ldots , A_n$. The {\em
Kolmogorov equation or master equation} for the probability distribution $P$ with the
coordinates $p_i$ (we can consider $P$ as a vector-column $P=[p_1,\ldots, p_n]^T$) is
\begin{equation}\label{MAsterEq0}
\frac{\D p_i}{\D t}= \sum_{j, \, j\neq i} (q_{ij}p_j-q_{ji}p_i) \;\; (i=1,\ldots, n)
\end{equation}
where $q_{ij}$ ($i,j=1,\ldots, n$, $i\neq j$) are nonnegative. In this notation, $q_{ij}$
is the {\em rate constant} for the transition $A_j \to A_i$. Any set of nonnegative
coefficients $q_{ij}$ ($i\neq j$) corresponds to a master equation. Therefore, the class
of the master equations can be represented as a nonnegative orthant in
$\mathbb{R}^{n(n-1)}$ with coordinates $q_{ij}$ ($i\neq j$). Equations of the same class
describe any first order kinetics in perfect mixtures. The only difference between the
general first order kinetics and master equation for the probability distribution is in
the balance conditions: the sum of probabilities should be 1, whereas the sum of
variables (concentrations) for the general first order kinetics may be any positive
number.

It is useful to mention that the {\em model equation} with equilibrium $P^*$ and
relaxation time $\tau$
\begin{equation}\label{model}
\frac{\D p_i}{\D t}= \frac{1}{\tau}(p_i^*-p_i) \;\; (i=1,\ldots, n)
\end{equation}
is a particular case of master equation for normalized variables $p_i$ ($p_i\geq 0$,
$\sum_i p_i=1$). Indeed, let us take in Equation (\ref{MAsterEq0})
$q_{ij}=\frac{1}{\tau}p_i^*$.

The {\em graph of transitions} for a Markov chain is a directed graph. Its vertices
correspond to the states $A_i$ and the edges correspond to the transitions $A_j \to A_i$
with the positive transition coefficients, $q_{ij}>0$. The digraph of transitions is {\em
strongly connected} if there exists an oriented path from any vertex $A_i$ to every other
vertex $A_j$ ($i\neq j$). The continuous-time Markov chain is {\em ergodic} if there
exists a unique strictly positive equilibrium distribution $P^*$ ($p^*_i>0$, $\sum_i
p^*_i=1$) for master equation (\ref{MAsterEq0}) \cite{MeynNets2007,MeynMarkCh2009}.
Strong connectivity of the graph of transitions is necessary and sufficient for
ergodicity of the corresponding Markov chain.

A digraph is {\em weakly connected} if the underlying undirected graph obtained by
replacing directed edges by undirected ones is connected. The maximal weakly connected
components of a digraph are called connected (or weakly connected) components. The
maximal strongly connected subgraphs are called strong components. The necessary and
sufficient condition for the {\em existence} of a strongly positive equilibrium for
master equation (\ref{MAsterEq0}) is: {\em the weakly connected components of the
transition graph are its strong components}. An equivalent form of this condition is:
{\em if there exists a directed path from $A_i$ to $A_j$, then there exists a directed
path from $A_j$ to $A_i$}. In chemical kinetics this condition is sometimes called the
``weak reversibility'' condition \cite{Feinberg1974,SzederkHang2012}. This implies that
the digraph is the union of disjoint strongly connected digraphs. For each strong
component of the transition digraph the normalized equilibrium is unique and the
equilibrium for the whole graph is a convex combination of positive normalized equilibria
for its strong components. If $m$ is the number of these components then the set of
normalized positive equilibria of master equation ($P^*$: $p^*_i>0$, $\sum_i p^*_i=1$) is
a relative interior of a $m-1$-dimensional polyhedron in the unit simplex $\Delta_n$. The
set of non-normalized positive equilibria ($P^*$: $p^*_i>0$) is a relative interior of a
$m$-dimensional cone in the positive orthant $\mathbb{R}_+^n$.

We reserve notation $\mathbb{R}_+^n$ for the positive orthant and for the nonnegative
orthant we use $\overline{\mathbb{R}_+^n}$ (the closure of $\mathbb{R}_+^n$)

The Markov chain in Equation (\ref{MAsterEq0}) is {\em weakly ergodic} if it allows the
only conservation law: the sum of coordinates, $\sum_i p_i \equiv const$. Such a system
forgets its initial condition: the distance between any two trajectories with the same
value of the conservation law tends to zero when time goes to infinity. Among all
possible norms, the $l_1$ distance ($\|P-Q \|_{l_1}=\sum_i |p_i-q_i|$) plays a special
role: it does not increase in time for any first order kinetic system in master equation
(\ref{MAsterEq0}) and strongly monotonically decreases to zero for normalized probability
distributions ($\sum_i p_i=\sum_i q_i=1$) and weakly ergodic chains. The difference
between weakly ergodic and ergodic systems is in the obligatory existence of a {\em
strictly positive} equilibrium for an ergodic system. A Markov chain is weakly ergodic if
and only if for each two vertices $A_i, \: A_j \: (i \neq j)$ we can find such a vertex
$A_k$ that is reachable by oriented paths both from $A_i$ and from $A_j$. This means that
the following structure exists \cite{GorbanPathSumm2011}:
\begin{equation}\label{elementBridge}
A_i \to \ldots \to A_k \leftarrow \ldots \leftarrow A_j \ .
\end{equation}
One of the paths can be degenerated: it may be $i=k$ or $j=k$.

Now, let us restrict our consideration to the set of the Markov chains with the given
positive equilibrium distribution $P^*$ ($p^*_i>0$). We do not assume that this
distribution is compulsory unique. The transition graph should be the union of disjoint
strongly connected digraphs (in particular, it may be strongly connected). Using the
known positive equilibrium $P^*$ we can rewrite master equation (\ref{MAsterEq0}) in the
following form
\begin{equation}\label{MAsterEq1}
\frac{\D p_i}{\D t}= \sum_{j, \, j\neq i}
q_{ij}p^*_j\left(\frac{p_j}{p_j^*}-\frac{p_i}{p_i^*}\right) \
\end{equation}
where $p_i^*$ and $q_{ij}$ are connected by the {\em balance equation}
\begin{equation}\label{MasterEquilibrium}
\sum_{j, \, j\neq i} q_{ij}p^*_j = \left(\sum_{j, \, j\neq i}
q_{ji}\right)p^*_i \; \mbox{ for all }i=1,\ldots, n
\end{equation}

For the next transformation of master equation we join the mutually reverse transitions
in pairs $A_i \rightleftharpoons A_j$ in pairs (say, $i>j$) and introduce the {\em
stoichiometric vectors} $\gamma^{ji}$ with coordinates:
\begin{equation}\label{gamma}
\gamma^{ji}_k=\left\{\begin{array}{ll}
-1 &\mbox{ if } k=j,\\
1 &\mbox{ if } k=i,\\
0 &\mbox{ otherwise}
\end{array}
\right.
\end{equation}
Let us rewrite the master equation (\ref{MAsterEq0}) in the {\em quasichemical form}:
\begin{equation}\label{QuasiChemKolgen}
\frac{\D P}{\D t}=\sum_{i>j}(w^+_{ij}-w^-_{ij})\gamma^{ji}
\end{equation}
where $w_{ij}^+=q_{ij}{p_j^*}\frac{p_j}{p_j^*}$ is the rate of the transitions $A_j \to
A_i$ and $w_{ij}^-=q_{ji}{p_i^*}\frac{p_i}{p_i^*}$ is the rate of the reverse process
$A_j \leftarrow A_i$ ($i>j$).

The reversible systems with detailed balance form an important class of first order
kinetics. The {\em detailed balance} condition reads \cite{VanKampen}: at equilibrium,
$w^+_{ij}=w^-_{ij}$, {\em i.e.},
\begin{equation}\label{detBal}
q_{ij}p^*_j=q_{ji}p^*_i \,(=w_{ij}^*) \;\; i,j=1,\ldots, n
\end{equation}
Here, $w_{ij}^*$ is the {\em equilibrium flux} from $A_i$ to $A_j$ and back.

For the systems with detailed balance the quasichemical form of the master equation is
especially simple:
\begin{equation}\label{QuasiChemKol}
\frac{\D P}{\D t}=\sum_{i>j}w_{ij}^*\left(\frac{p_j}{p_j^*}-\frac{p_i}{p_i^*}\right) \gamma^{ji}
\end{equation}
It is important that any set of nonnegative equilibrium fluxes $w_{ij}^*$ ($i>j$) defines
by Equation (\ref{QuasiChemKol}) a system with detailed balance with a given positive
equilibrium $P^*$. Therefore, the set of all systems with detailed balance presented by
Equation (\ref{QuasiChemKol}) and a given equilibrium may be represented as a nonnegative
orthant in $\mathbb{R}^{\frac{n(n-1)}{2}}$ with coordinates $w_{ij}^*$ ($i>j$).

The {\em decomposition theorem} \cite{GorbanEq2012arX,GorbanCAMWA2013} states that for
any given positive equilibrium $P^*$ and any positive distribution $P$ the set of
possible values ${\D P}/{\D t}$ for Equations (\ref{QuasiChemKolgen}) under the balance
condition (\ref{MasterEquilibrium}) coincides with the set of possible values ${\D P}/{\D
t}$ for Equations (\ref{QuasiChemKol}) under detailed balance condition (\ref{detBal}).

In other words, for every general system of the form (\ref{QuasiChemKolgen}) with
positive equilibrium $P^*$ and any given non-equilibrium distribution $P$ there exists a
system with detailed balance of the form (\ref{QuasiChemKol}) with the same equilibrium
and the same value of the velocity vector ${\D P}/{\D t}$ at point $P$. Therefore, the
sets of the universal Lyapunov function for the general master equations and for the
master equations with detailed balance coincide.

\section{General $H$-Theorem}

Let $H(P)$ be a convex function on the space of distributions. It is a Lyapunov function
for a master equations with the positive equilibrium $P^*$ if $\D H(P(t))/ \D t \leq 0$
for any positive normalized solution $P(t)$. For a system with detailed balance given by
Equation (\ref{QuasiChemKol})
\begin{equation}\label{entropyprodDelBal}
\frac{\D H(P(t))}{\D t}=-\sum_{i>j}w_{ij}^*\left(\frac{p_j}{p_j^*}-\frac{p_i}{p_i^*}\right)
\left(\frac{\partial H(P)}{\partial p_j}-\frac{\partial H(P)}{\partial p_i}\right)
\end{equation}
The inequality $\D H(P(t))/ \D t \leq 0$ is true for all nonnegative values of $w_{ij}^*$
if and only is it holds for any term in Equation (\ref{entropyprodDelBal}) separately.
That is, for any pair $i,j$ ($i>j$) the convex function $H(P)$ is a Lyapunov function for
the system (\ref{QuasiChemKol}) where one and only one $w_{ij}^*$ is not zero.

A convex function on a straight line is a Lyapunov function for a one-dimensional system
with single equilibrium if and only if the equilibrium is a minimizer of this function.
This elementary fact together with the previous observation gives us the criterion for
universal Lyapunov functions for systems with detailed balance. Let us introduce the {\em
partial equilibria criterion}:
\begin{definition}[Partial equilibria criterion]\label{def:PatEqCrit}A convex function $H(P)$ on the simplex $\Delta_n$ of probability distributions satisfies the partial equilibria criterion with a positive equilibrium $P^*$ if the proportion $p_i/p_i^*=p_j/p_j^*$ give the minimizers in the problem (\ref{ConditMin}).
\end{definition}
\begin{proposition}\label{prop:DBMarkH}A convex function $H(P)$ on the simplex $\Delta_n$ of probability distributions is a Lyapunov function for all master equations with the given equilibrium $P^*$ that obey the principle of detailed balance if and only if it satisfies the partial equilibria criterion with the equilibrium $P^*$.
\end{proposition}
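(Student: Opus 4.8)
The plan is to prove both directions by reducing the full master equation to its one-dimensional pair-transition building blocks, using the machinery already set up in the excerpt. The key structural facts I would lean on are: (i) the entropy production formula (\ref{entropyprodDelBal}), which expresses $\D H/\D t$ as a sum over pairs $i>j$ of terms weighted by the nonnegative equilibrium fluxes $w_{ij}^*$; and (ii) the observation, already stated, that the sign condition $\D H/\D t \leq 0$ holds for all nonnegative $w_{ij}^*$ if and only if each summand is separately nonpositive. This ``term-by-term'' decoupling is the crux: because the $w_{ij}^*$ range independently over the whole nonnegative orthant $\mathbb{R}_+^{n(n-1)/2}$, one can switch on a single pair and switch off all others, so universality over the class of detailed-balance systems forces each pair term to obey the inequality on its own.

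First I would fix a pair $(i,j)$ with $i>j$ and set all fluxes except $w_{ij}^*$ to zero. Then (\ref{QuasiChemKol}) becomes a one-dimensional dynamical system: only $p_i$ and $p_j$ change, their sum $p_i+p_j$ is conserved, and all other $p_l$ ($l\neq i,j$) stay fixed. The motion is confined to the segment defined by $p_i+p_j=1-\sum_{l\neq i,j}p_l$ with $p_{i,j}\geq 0$, which is exactly the feasible set of the conditional-minimization problem (\ref{ConditMin}). Along this segment the single term in (\ref{entropyprodDelBal}) reads $-w_{ij}^*(p_j/p_j^*-p_i/p_i^*)(\partial_j H-\partial_i H)$, and this is precisely (up to the positive factor $w_{ij}^*$) the product of the driving force $(p_j/p_j^*-p_i/p_i^*)$ and the restriction of the directional derivative of $H$ along the segment. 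The sign of this product is controlled by whether $H$ decreases toward the point where $p_j/p_j^*=p_i/p_i^*$; I would invoke the quoted elementary fact that a convex function on a line is a Lyapunov function for such a one-dimensional relaxation if and only if the equilibrium point is its minimizer on that line.

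For the \emph{necessity} direction (Lyapunov $\Rightarrow$ partial equilibria criterion), I would run the above argument in reverse: if $H$ is universal, then in particular each single-pair system has $\D H/\D t\leq 0$, so by the one-dimensional convex fact the unique stationary point of that pair system—namely the proportion $p_i/p_i^*=p_j/p_j^*$—must minimize $H$ on the segment, which is exactly what (\ref{ConditMin}) demands. For the \emph{sufficiency} direction, I would assume the criterion holds for every pair and every choice of the fixed coordinates $p_l$; then each pair term in (\ref{entropyprodDelBal}) is nonpositive, hence the whole sum is nonpositive for arbitrary nonnegative $w_{ij}^*$, giving $\D H/\D t\leq 0$ for every detailed-balance master equation with equilibrium $P^*$.

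The main obstacle I anticipate is handling the boundary of the simplex carefully: the minimizer characterization for a convex $H$ must be read as a variational inequality rather than a vanishing-gradient condition when the conditional minimum is attained at $p_i=0$ or $p_j=0$, and the driving force $(p_j/p_j^*-p_i/p_i^*)$ need not vanish there. A secondary technical point is ensuring that the equivalence ``$\sum$ nonpositive for all nonnegative weights $\Leftrightarrow$ each summand nonpositive'' is applied correctly: it relies on the weights $w_{ij}^*$ being free to take any value in $\mathbb{R}_+^{n(n-1)/2}$, which is guaranteed by the remark after (\ref{QuasiChemKol}) that any nonnegative flux configuration defines an admissible detailed-balance system. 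Once these edge cases are dispatched, the proof is a clean reduction to the one-dimensional lemma.
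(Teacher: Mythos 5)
Your proposal is correct and follows essentially the same route as the paper: the term-by-term decoupling of the entropy production formula (\ref{entropyprodDelBal}) over the independent nonnegative fluxes $w_{ij}^*$, reduction to a one-dimensional pair system on the segment of (\ref{ConditMin}), and the elementary fact that a convex function on a line is a Lyapunov function for a single-equilibrium relaxation iff the equilibrium is its minimizer. Your extra care about boundary minimizers is a sensible refinement but does not change the argument.
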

Combination of this Proposition with the decomposition theorem \cite{GorbanEq2012arX}
gives the same criterion for general master equations without hypothesis about detailed
balance
\begin{proposition}\label{prop:CBMarkH}
A convex function $H(P)$ on the simplex $\Delta_n$ of probability distributions is a
Lyapunov function for all master equations with the given equilibrium $P^*$ if and only
if it satisfies the partial equilibria criterion with the equilibrium $P^*$.
\end{proposition}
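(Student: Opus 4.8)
The plan is to derive Proposition \ref{prop:CBMarkH} as a direct consequence of the already-established Proposition \ref{prop:DBMarkH} together with the decomposition theorem, so that essentially no new computation is needed. The key observation is that being a Lyapunov function is a pointwise condition on the admissible velocity vectors: writing $\D H/\D t = (\nabla H(P), \D P/\D t)$, a convex (hence, on the relative interior of the simplex, differentiable) function $H$ is a Lyapunov function for a given class of master equations precisely when $(\nabla H(P), v) \leq 0$ holds at every positive distribution $P$ for every velocity $v = \D P/\D t$ realizable at $P$ by some equation of that class. Note that all such $v$ are tangent to the simplex, since each $\gamma^{ji}$ has coordinate sum zero, so the inner product depends only on the tangential part of $\nabla H$ and is unambiguous.

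First I would make the quantifier structure explicit. For the fixed equilibrium $P^*$, let $\mathcal{V}_{\mathrm{gen}}(P)$ denote the set of velocities $\D P/\D t$ produced at $P$ by general master equations of the form (\ref{QuasiChemKolgen}) under the balance condition (\ref{MasterEquilibrium}), and let $\mathcal{V}_{\mathrm{db}}(P)$ denote the set produced at $P$ by detailed-balance equations of the form (\ref{QuasiChemKol}) under (\ref{detBal}). Then a universal Lyapunov function for the general class is exactly a convex $H$ with $(\nabla H(P), v) \leq 0$ for all $P$ and all $v \in \mathcal{V}_{\mathrm{gen}}(P)$, and likewise for the detailed-balance class with $\mathcal{V}_{\mathrm{db}}(P)$.

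The central step is to invoke the \emph{decomposition theorem}, which asserts that $\mathcal{V}_{\mathrm{gen}}(P) = \mathcal{V}_{\mathrm{db}}(P)$ at every positive $P$. One inclusion of this equivalence is in fact trivial: detailed-balance equations form a subclass of the general ones, so a universal Lyapunov function for the general class is automatically one for the detailed-balance class, whence by Proposition \ref{prop:DBMarkH} it satisfies the partial equilibria criterion. The force of the decomposition theorem lies in the converse. If $H$ satisfies the criterion, Proposition \ref{prop:DBMarkH} gives $(\nabla H(P), v) \leq 0$ for all $v \in \mathcal{V}_{\mathrm{db}}(P)$; since $\mathcal{V}_{\mathrm{gen}}(P) = \mathcal{V}_{\mathrm{db}}(P)$, passing to the larger general class introduces no new admissible velocity and hence no new constraint on $H$, so $(\nabla H(P), v) \leq 0$ for all $v \in \mathcal{V}_{\mathrm{gen}}(P)$ as well. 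This closes the biconditional.

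I expect the main obstacle to be one of rigor rather than of idea: one must confirm that the decomposition theorem supplies equality of the \emph{realizable velocity sets} pointwise in $P$, and not merely some weaker coincidence, and one must justify that $(\nabla H(P), v)$ is well defined. Convexity of $H$ on the simplex yields differentiability on the relative interior, which is exactly where the positive equilibria $P^*$ and positive distributions $P$ reside, and the entropy-production identity (\ref{entropyprodDelBal}) already presupposes this; any behavior on the boundary can be recovered by continuity. With these points verified, the statement follows as a clean corollary of the two cited results.
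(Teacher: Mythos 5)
Your proposal is correct and follows exactly the paper's own route: the paper derives Proposition~\ref{prop:CBMarkH} by combining Proposition~\ref{prop:DBMarkH} with the decomposition theorem, which it states precisely as the pointwise coincidence of the realizable velocity sets ${\D P}/{\D t}$ for the general and detailed-balance classes at every positive $P$. Your additional care about the quantifier structure and differentiability on the relative interior is a faithful elaboration of the same argument, not a different one.
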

These two propositions together form the general $H$-theorem.
\begin{theorem}\label{theor:MarkH}The partial equilibria criterion with a positive equilibrium $P^*$ is a necessary condition for a convex function to be the universal Lyapunov function for all master equations with detailed balance and equilibrium $P^*$ and a sufficient condition for this function to be the universal Lyapunov function for all master equations with equilibrium $P^*$.
\end{theorem}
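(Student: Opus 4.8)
The plan is to establish the two halves of the statement by working entirely from the entropy-production identity (\ref{entropyprodDelBal}) on the detailed-balance class and then transporting the conclusion to general kinetics through the decomposition theorem. The decisive observation is that the equilibrium fluxes $w_{ij}^*$ ($i>j$) vary independently over the whole nonnegative orthant, so $\D H/\D t\leq 0$ for every detailed-balance system with equilibrium $P^*$ if and only if each summand in (\ref{entropyprodDelBal}) is separately nonpositive. The ``if'' direction is immediate; the ``only if'' follows by selecting, for each fixed pair, the system in which a single $w_{ij}^*$ is nonzero. Hence the Lyapunov requirement decouples into the pairwise conditions
\begin{equation}
\left(\frac{p_j}{p_j^*}-\frac{p_i}{p_i^*}\right)\left(\frac{\partial H}{\partial p_j}-\frac{\partial H}{\partial p_i}\right)\geq 0
\end{equation}
one for every pair $i>j$ and every admissible choice of the remaining coordinates $p_l$ ($l\neq i,j$).

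Next I would read each such pairwise condition as the Lyapunov condition for a one-dimensional flow. Freezing the coordinates $p_l$ ($l\neq i,j$) and moving along $\gamma^{ji}$ confines $P$ to the segment $p_i+p_j=1-\sum_{l\neq i,j}p_l$, on which $H$ restricts to a convex function of one variable and the single-flux dynamics $\D P/\D t=w_{ij}^*(p_j/p_j^*-p_i/p_i^*)\gamma^{ji}$ has the unique equilibrium $p_i/p_i^*=p_j/p_j^*$; since $P^*$ is strictly positive this proportion point lies in the interior of every segment. Invoking the elementary fact recorded just before Definition \ref{def:PatEqCrit}---a convex function on a line is a Lyapunov function for a one-dimensional system with a single equilibrium exactly when that equilibrium minimizes it---the displayed pairwise inequality is equivalent to the proportion point being the solution of the conditional minimization problem (\ref{ConditMin}). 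Ranging over all pairs and all fixed coordinates, this is precisely the partial equilibria criterion of Definition \ref{def:PatEqCrit}. This establishes Proposition \ref{prop:DBMarkH}, and in particular that the criterion is necessary for $H$ to be a universal Lyapunov function on the detailed-balance class, which is the necessity half of the theorem.

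For the sufficiency half I would invoke the decomposition theorem. If $H$ satisfies the partial equilibria criterion, the argument just given shows $\D H/\D t\leq 0$ at every positive $P$ for every detailed-balance system with equilibrium $P^*$. Now take any general master equation (\ref{QuasiChemKolgen}) with the same equilibrium and any positive $P$; by the decomposition theorem there is a detailed-balance system of the form (\ref{QuasiChemKol}), with the same equilibrium, producing the identical velocity $\D P/\D t$ at $P$. Because $\D H/\D t=(\nabla H(P),\D P/\D t)$ depends only on $P$ and on this velocity, its value for the general system equals its nonpositive value for the matched detailed-balance system, so $\D H/\D t\leq 0$. Thus $H$ is a universal Lyapunov function for all master equations with equilibrium $P^*$, which is the content of Proposition \ref{prop:CBMarkH} and completes the sufficiency claim.

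The main obstacle, and the step deserving the most care, is the one-dimensional characterization: by convexity the derivative of the restriction of $H$ along each segment is monotone, hence changes sign at most once, and the pairwise inequality must be checked at \emph{every} point of the segment, not merely near the proportion point; the crux is verifying that this global sign condition forces the sign change of the restricted derivative to occur exactly at the proportion point, uniformly over all pairs and all values of the frozen coordinates. Once this decoupling and the convex one-dimensional criterion are secured, the detailed-balance equivalence is immediate and the extension to general Markov kinetics is a direct application of the decomposition theorem.
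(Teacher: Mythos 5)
Your argument reproduces the paper's own route: decouple the entropy production formula (\ref{entropyprodDelBal}) over the independent nonnegative fluxes $w_{ij}^*$, reduce each summand to the one-dimensional Lyapunov condition on the segment of problem (\ref{ConditMin}) via the elementary fact about convex functions on a line (giving Proposition \ref{prop:DBMarkH} and hence necessity on the detailed-balance class), and then transfer sufficiency to general master equations through the decomposition theorem (giving Proposition \ref{prop:CBMarkH}). This is correct and essentially identical to the paper's proof, including the correct identification of where convexity is actually used.
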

Let us stress that here the partial equilibria criterion provides a necessary condition
for systems with detailed balance (and, therefore, for the general systems without
detailed balance assumption) and a sufficient condition for the general systems (and,
therefore, for the systems with detailed balance too).

\begin{figure}[t]\centering{
\includegraphics[height=0.35\textwidth]{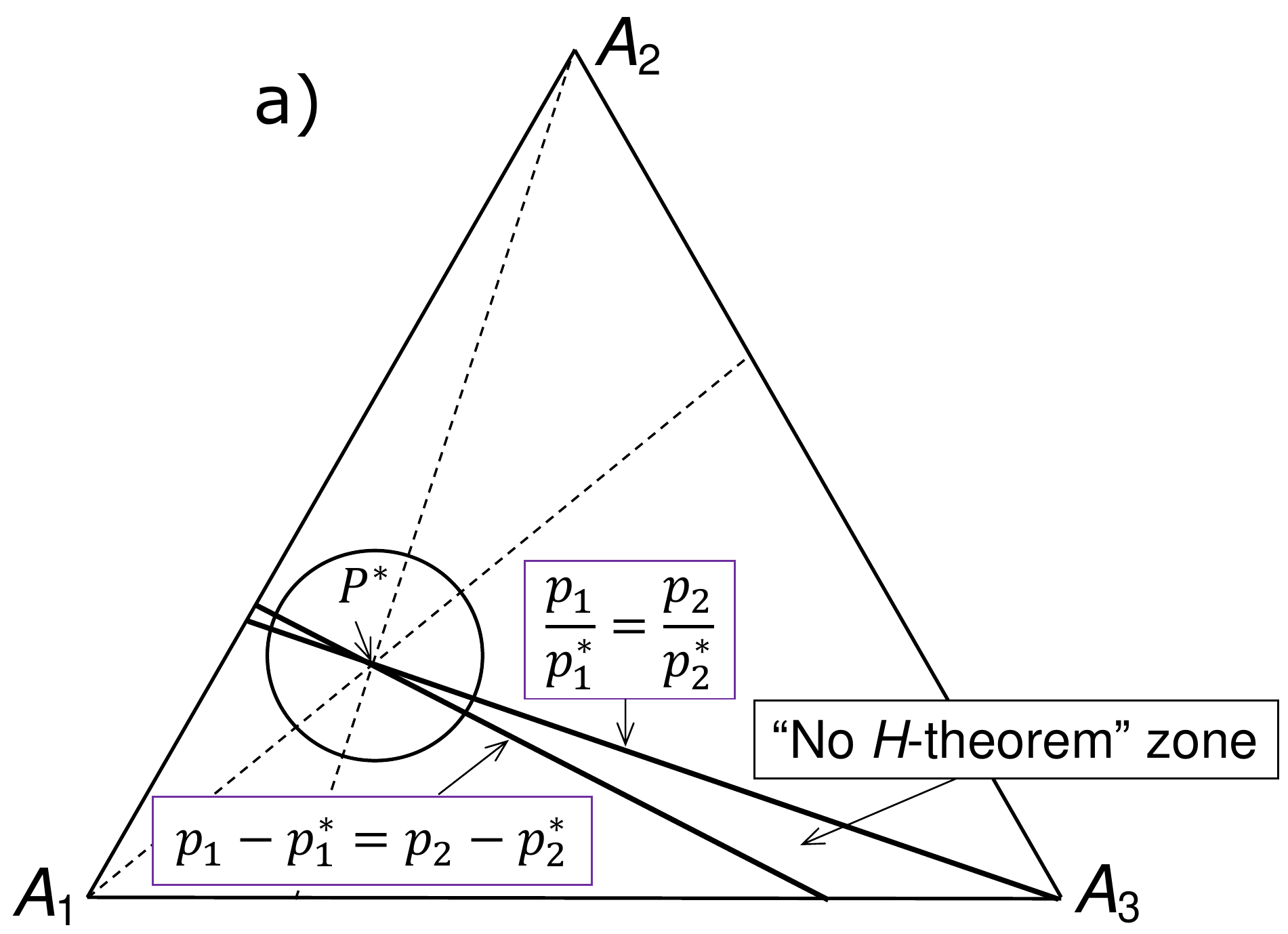} \hspace{5mm}
\includegraphics[height=0.35\textwidth]{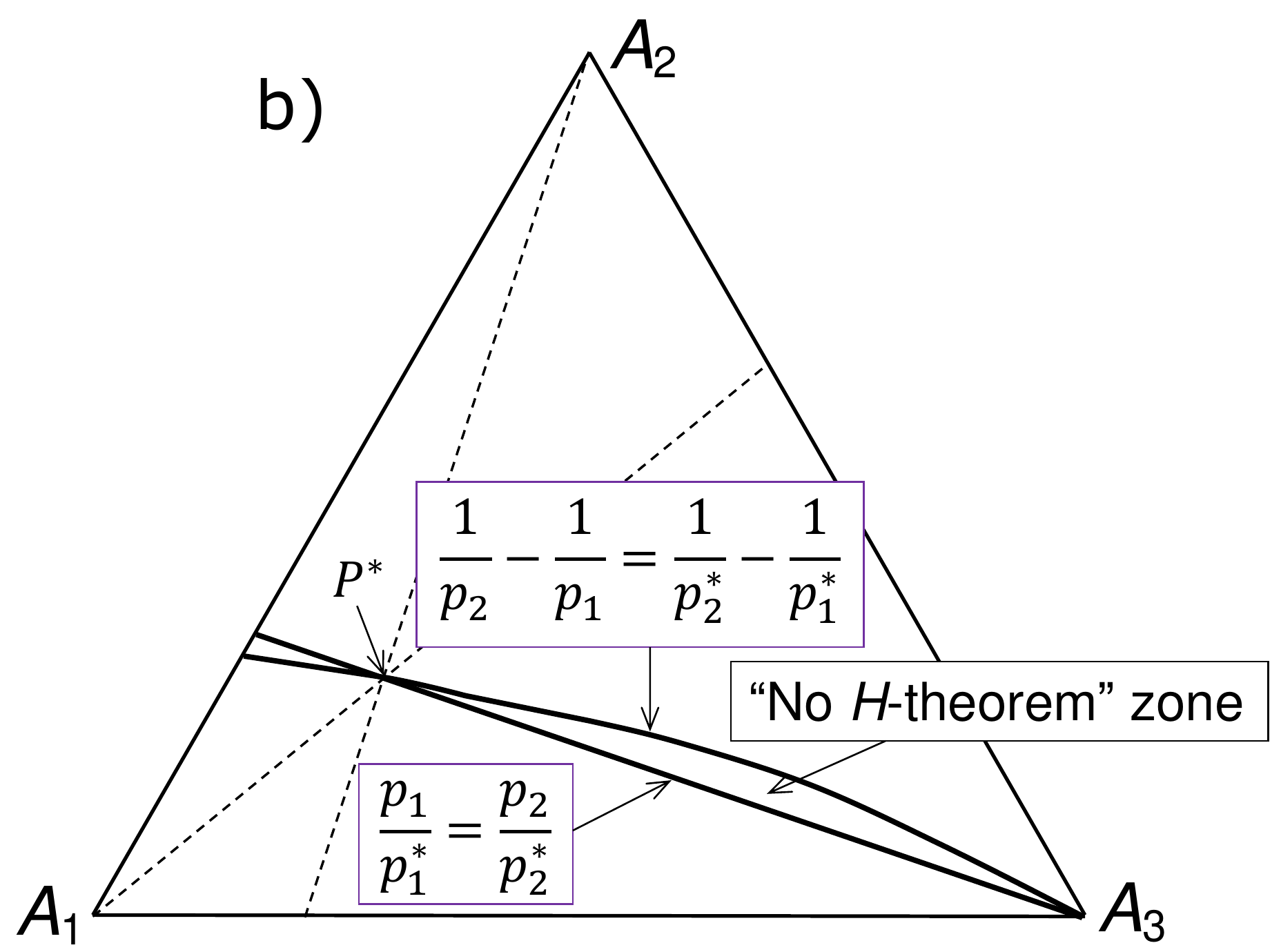}}
\caption{The triangle of distributions for the system with three states $A_1$, $A_2$, $A_3$ and the equilibrium $p_1^*=\frac{4}{7}$, $p_2^*=\frac{2}{7}$, $p_3^*=\frac{1}{7}$. The lines of partial equilibria $A_i \rightleftharpoons A_j$ given by the proportions $p_i/p_i^*=p_j/p_j^*$ are shown, for $A_1 \rightleftharpoons A_2$ by solid straight lines (with one end at the vertex $A_3$), for $A_2 \rightleftharpoons A_3$ and for $A_1 \rightleftharpoons A_3$ by dashed lines. The lines of conditional minima of $H(P)$ in problem (\ref{ConditMin}) are presented for the partial equilibrium $A_1 \rightleftharpoons A_2$ (a) for the squared Euclidean distance (a circle here is an example of the $H(P)$ level set) and (b) for the Itakura-Saito distance. Between these lines and the line of partial equilibria the ``no $H$-theorem zone" is situated. In this zone, $H(P)$ increases in time for some master equations with equilibrium $P^*$. Similar zones (not shown) exist near other partial equilibrium lines too. Outside these zones, $H(P)$ monotonically decreases in time for any master equation with equilibrium $P^*$. \label{fig:examples}}
\end{figure}

\section{Examples}

The simplest Bregman divergence is the squared Euclidean distance between $P$ and $P^*$,
$\sum_i (p_i-p_i^*)^2$. The solution to the problem (\ref{ConditMin}) is:
$p_i-p_i^*=p_j-p_j^*$. Obviously, it differs from the proportion required by the partial
equilibria criterion $\frac{p_i}{p_j}=\frac{p_i^*}{p_j^*}$ (Figure~\ref{fig:examples}a).

For the Itakura-Saito distance (\ref{Itakura-Saito}) the solution to the problem
(\ref{ConditMin}) is: $\frac{1}{p_i}-\frac{1}{p_i^*}=\frac{1}{p_j}-\frac{1}{p_j^*}$. It
also differs from the proportion required (Figure~\ref{fig:examples}b).

If the single equilibrium in 1D system is not a minimizer of a convex function $H$ then
$\D H/\D t >0$ on the interval between the equilibrium and minimizer of $H$ (or
minimizers if it is not unique). Therefore, if $H(P)$ does not satisfy the partial
equilibria criterion then in the simplex of distributions there exists an area bordered
by the partial equilibria surface for $A_i \rightleftharpoons A_j$ and by the minimizers
for the problem (\ref{ConditMin}), where for some master equations $\D H/\D t >0$
(Figure~\ref{fig:examples}). In particular, in such an area $\D H/\D t >0$ for the simple
system with two mutually reverse transitions, $A_i \rightleftharpoons A_j$, and the same
equilibrium.

If $H$ satisfies the partial equilibria criterion, then the minimizers for the problem
(\ref{ConditMin}) coincide with the partial equilibria surface for $A_i
\rightleftharpoons A_j$, and the ``no $H$-theorem zone" vanishes.

The partial equilibria criterion allows a simple geometric interpretation. Let us
consider a sublevel set of $H(P)$ in the simplex $\Delta_n$: $U_h=\{P\in \Delta_n \ | \
H(P) \leq h \}$. Let the level set be $L_h=\{P\in \Delta_n \ | \ H(P) = h \}$. For the
partial equilibrium $A_i \rightleftharpoons A_j$ we use the notation $E_{ij}$. It is
given by the equation $p_i/p_i^*=p_j/p_j^*$. The geometric equivalent of the partial
equilibrium condition is: for all $i,j$ ($i\neq j$) and every $P\in L_h \cap E_{ij}$ the
straight line $P+\lambda \gamma_{ij}$ ($\lambda \in \mathbb{R}$) is a supporting line of
$U_h$. This means that this line does not intersect the interior of $U_h$.

We illustrate this condition on the plane for three states in
Figure~\ref{fig:CorrectLevelSet}. The level set of $H$ is represented by the dot-dash
line. It intersects the lines of partial equilibria (dashed lines) at points $B_{1,2,3}$
and $C_{1,2,3}$. For each point $P$ from these six intersections ($P=B_i$ and $P=C_i$)
the line $P+\lambda \gamma_{jk}$ ($\lambda \in \mathbb{R}$) should be a supporting line
of the sublevel set (the region bounded by the dot-dash line). Here, $i,j,k$ should all
be different numbers. Segments of these lines form a hexagon circumscribed around the
level set (\mbox{Figure~\ref{fig:CorrectLevelSet}b}).

The points of intersection $B_{1,2,3}$ and $C_{1,2,3}$ cannot be selected arbitrarily on
the lines of partial equilibria. First of all, they should be the vertices of a convex
hexagon with the equilibrium $P^*$ inside. Secondly, due to the partial equilibria
criterion, the intersections of the straight line $P+\lambda \gamma_{ij}$ with the
partial equilibria $E_{ij}$ are the conditional minimizers of $H$ on this line, and
therefore should belong to the sublevel set $U_{H(P)}$. If we apply this statement to
$P=B_i$ and $P=C_i$, then we will get two projections of this point onto partial
equilibria $E_{ij}$ parallel to $\gamma_{ij}$ (Figure~\ref{fig:CorrectLevelSet}a). These
projections should belong to the hexagon with the vertices $B_{1,2,3}$ and $C_{1,2,3}$.
They produce a six-ray star that should be inscribed into the level set.

In Figure~\ref{fig:CorrectLevelSet} we present the following characterization of the
level set of a Lyapunov function for the Markov chains with three states. This convex set
should be circumscribed around the six-ray star (Figure~\ref{fig:CorrectLevelSet}a) and
inscribed in the hexagon of the supporting lines (Figure~\ref{fig:CorrectLevelSet}b).

\begin{figure}[t]\centering{
\includegraphics[height=0.35\textwidth]{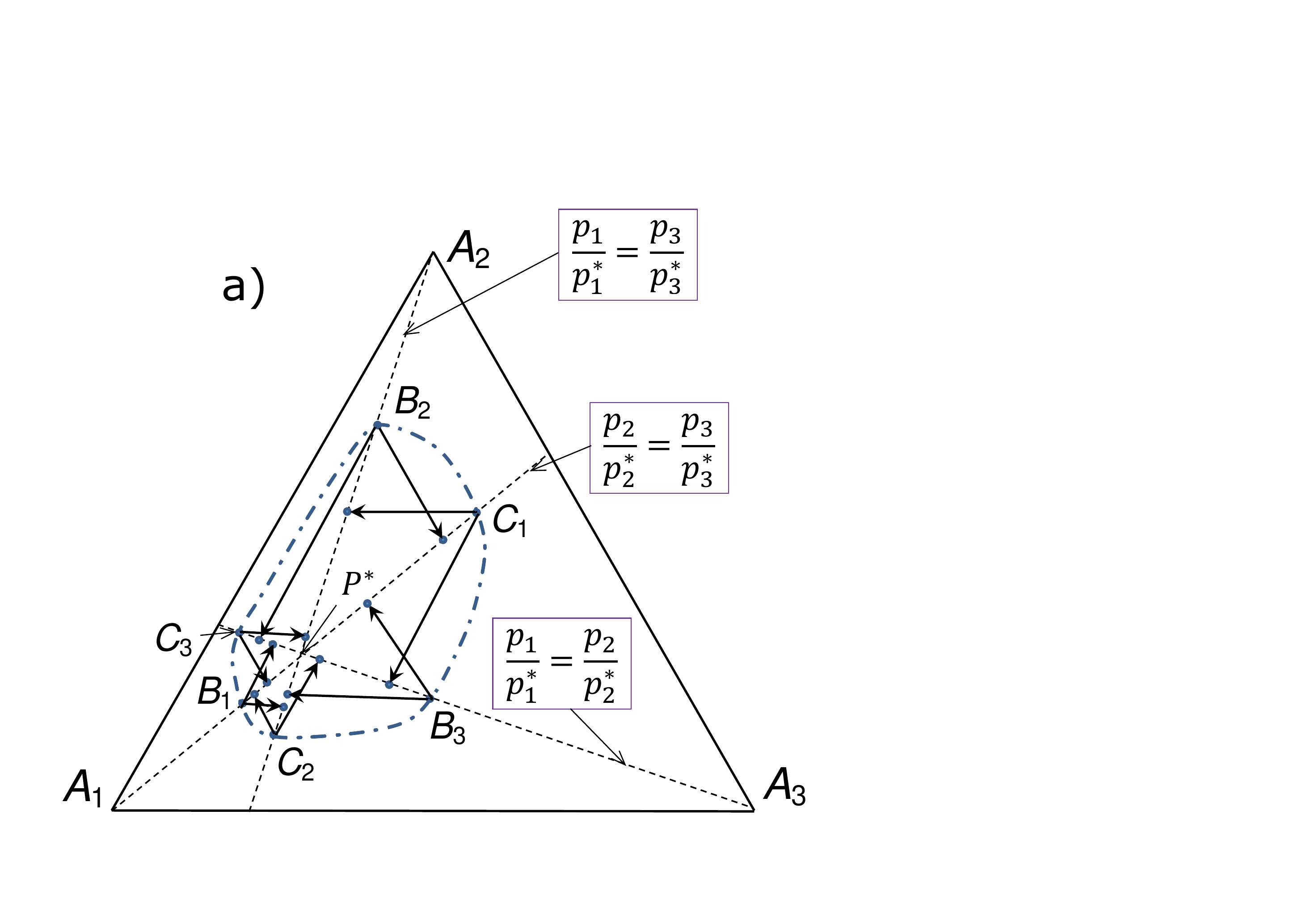} \hspace{5mm}
\includegraphics[height=0.35\textwidth]{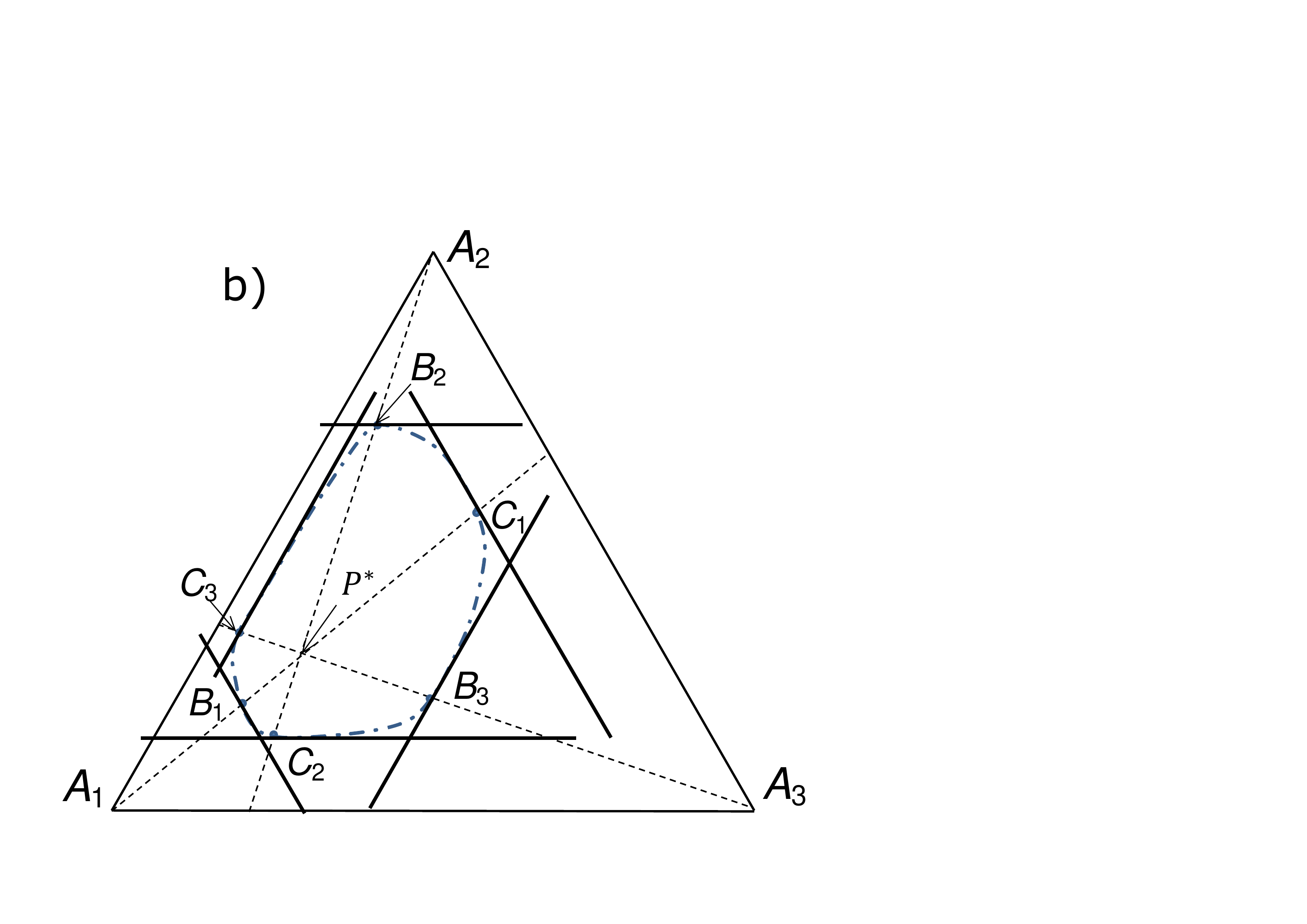}}
\caption{{\it Geometry of the Lyapunov function level set.} The triangle of distributions for the system
with three states $A_1$, $A_2$, $A_3$ and the equilibrium $p_1^*=\frac{4}{7}$, $p_2^*=\frac{2}{7}$, $p_3^*=\frac{1}{7}$.
The lines of partial equilibria $A_i \rightleftharpoons A_j$ given by the proportions $p_i/p_i^*=p_j/p_j^*$ are shown by dashed lines.
The dash-dot line is the level set of a Lyapunov function $H$. It intersects the lines of partial equilibria at points $B_{1,2,3}$ and
$C_{1,2,3}$. (The points $B_i$ are close to the vertices $A_i$, the points $C_i$ belong to the same partial equilibrium but
on another side of the equilibrium $P^*$.) For each point $B_i$, $C_i$ the corresponding partial equilibria of two transitions
$A_i \rightleftharpoons A_j$ ($j\neq i$) are presented (a). These partial equilibria should belong to the sublevel set of $H$.
They are the projections of $B_i$, $C_i$ onto the lines of partial equilibria $A_i \rightleftharpoons A_j$ ($j\neq i$)
with projecting rays parallel to the sides $[A_i,A_j]$ of the triangle ({\em i.e.}, to the
stoichiometric vectors $\gamma^{ji}$ (\ref{gamma})). The six-ray star with vertices $B_i$, $C_i$ should be inside the dash-dot contour (a).
Therefore, the projection of $B_i$ onto the partial equilibrium $A_i \rightleftharpoons A_j$ should belong to the segment $[C_k,P^*]$
and the projection of $C_i$ onto the partial equilibrium $A_i \rightleftharpoons A_j$ should belong to the segment $[B_k,P^*]$ (a).
The lines parallel to the sides $A_j,A_k$ of the triangle should be supporting lines of the level set of $H$ at points $B_i$, $C_i$ ($i,j,k$
are different numbers) (b). Segments of these lines form a circumscribed hexagon around the level set (b).\label{fig:CorrectLevelSet}}
\end{figure}

All the $f$-divergences given by Equation (\ref{Morimoto}) satisfy the partial equilibria
criterion and are the universal Lyapunov functions but the reverse is not true: the class
of universal Lyapunov functions is much wider than the set of the $f$-divergences. Let us
consider the set ``$PEC$'' of convex functions $H(P\|P^*)$, which satisfy the partial
equilibria criterion. It is closed with respect to the following operations
\begin{itemize}
\item{Conic combination: if $H_j(P\|P^*)\in PEC$ then $\sum_j \alpha_j H_j(P\|P^*)\in
    PEC$ for nonnegative coefficients $\alpha_j \geq0$.}
\item{Convex monotonic transformation of scale: if $H(P\|P^*)\in PEC$ then
    $F(H(P\|P^*))\in PEC$ for any convex monotonically increasing function of one
    variable $F$.}
\end{itemize}
Using these operations we can construct new universal Lyapunov functions from a given
set. \linebreak For example, $$\frac{1}{2}\sum_i\frac{(p_i-p_i^*)^2}{p_i^*}+ \prod_j
\exp\frac{(p_j-p_j^*)^2}{2p_j^*}$$ is a universal Lyapunov function that does not have
the $f$-divergence form because the first sum is an $f$-divergence given by Equation
(\ref{Morimoto}) with $h(x)=\frac{1}{2}(x-1)^2$ and the product is the exponent of this
$f$-divergence (exp is convex and monotonically increasing function).

The following function satisfies the partial equilibria criterion for every
$\varepsilon>0$.
\begin{equation}\label{NonclassicExampleEnt}
\frac{1}{2}\sum_i\frac{(p_i-p_i^*)^2}{p_i^*}+ \frac{\varepsilon}{4n^2} \prod_{i,j, i\neq j} {({p_i}{p_j^*}-{p_j}{p_i^*})^2}
\end{equation}
It is convex for $0<\varepsilon<1$. (Just apply the Gershgorin theorem \cite{Golub} to
the Hessian and use that all $p_i,p_i^*\leq 1$.) Therefore, it is a universal Lyapunov
function for master equation in $\Delta_n$ if \mbox{$0<\varepsilon<1$}. The partial
equilibria criterion together with the convexity condition allows us to construct many
such~examples.

\section{General $H$-Theorem for Nonlinear Kinetics}
\vspace{-12pt}

\subsection{Generalized Mass Action Law}

Several formalisms are developed in chemical kinetics and non-equilibrium thermodynamics
for the construction of general kinetic equations with a given ``thermodynamic Lyapunov
functional''. The motivation of this approach ``from thermodynamics to kinetics'' is
simple \cite{G1,YBGE}: (i) the thermodynamic data are usually more reliable than data
about kinetics and we know the thermodynamic functions better than the details of kinetic
equations, and (ii) positivity of entropy production is a fundamental law and we prefer
to respect it ``from scratch'', by the structure of kinetic equations.

GMAL is a method for the construction of dissipative kinetic equations for a given
thermodynamic potential $H$. Other general thermodynamic approaches
\cite{GENERIC,Grmela2012,GiovangigliMatus2012} give similar results for a given
stoichiometric algebra. Below we introduce GMAL following
\cite{G1,YBGE,GorbanShahzad2011}.

The {\em list of components} is a finite set of symbols $A_1, \ldots, A_n$.

A {\em reaction mechanism} is a finite set of the {\em stoichiometric equations} of
elementary reactions:
\begin{equation}\label{stoichiometricequation}
\sum_i\alpha_{\rho i}A_i \to \sum_i \beta_{\rho i} A_i \,
\end{equation}
where $\rho =1, \ldots, m$ is the reaction number and the {\em stoichiometric
coefficients} $\alpha_{\rho i}$, $\beta_{\rho i}$ are nonnegative numbers. Usually, these
numbers are assumed to be integer but in some applications the construction should be
more flexible and admit real nonnegative values. Let $\alpha_{\rho }$, $\beta_{\rho}$ be
the vectors with coordinates $\alpha_{\rho i}$, $\beta_{\rho i}$ correspondingly.

A {\em stoichiometric vector} $\gamma_{\rho}$ of the reaction in Equation
(\ref{stoichiometricequation}) is a $n$-dimensional vector
$\gamma_{\rho}=\beta_{\rho}-\alpha_{\rho}$ with coordinates
\begin{equation}
\gamma_{\rho i}=\beta_{\rho i}-\alpha_{\rho i} \,
\end{equation}
that is, ``gain minus loss'' in the $\rho$th elementary reaction. We assume $\alpha_{\rho
}\neq \beta_{\rho}$ to avoid trivial reactions with zero $\gamma_{\rho}$.

One of the standard assumptions is existence of a strictly positive stoichiometric
conservation law, a vector $b=(b_i)$, $b_i>0$ such that $\sum_i b_i \gamma_{\rho i}=0$
for all $\rho$. This may be the conservation of mass, of the total probability, or of the
total number of atoms, for example.

A nonnegative extensive variable $N_i$, the {\em amount} of $A_i$, corresponds to each
component. We call the vector $N$ with coordinates $N_i$ ``the composition vector''. The
concentration of $A_i$ is an intensive variable $c_i=N_i/V$, where $V>0$ is the volume.
The vector $c=N/V$ with coordinates $c_i$ is the vector of concentrations.

Let us consider a domain $U$ in $n$-dimensional real vector space $\mathbb{R}^n$ with
coordinates $N_1,\ldots, N_n\geq 0$ ($U\subset \overline{\mathbb{R}_+^n}$). For each
$N_i$, a dimensionless entropy (or {\em free entropy}, for example, Massieu, Planck, or
Massieu-Planck potential that corresponds to the selected conditions \cite{Callen1985})
$S(N)$ is defined in $U$. ``Dimensionless'' means that we use $S/R$ instead of physical
$S$. This choice of units corresponds to the informational entropy ($p\log p$ instead of
$k_{\rm B} p\ln p$).

The dual variables, potentials, are defined as the partial derivatives of $H=-S$:
\begin{equation}\label{DualVariables}
\check{\mu}_i=\frac{\partial H}{\partial N_i}, \;\; \check{\mu}=\nabla_N H
\end{equation}
This definition differs from the chemical potentials \cite{Callen1985} by the factor
${1}/{RT}$. We keep the same sign as for the chemical potentials, and this differs from
the standard Legendre transform for $S$. (It is the Legendre transform for the function
$H=-S$.) The standard condition for the reversibility of the Legendre transform is strong
positive definiteness of the Hessian of $H$.

For each reaction, a {\em nonnegative} quantity, reaction rate $r_{\rho}$ is defined. We
assume that this quantity has the following structure (compare with Equations (4), (7),
and (14) in \cite{Grmela2012} and Equation (4.10) in \cite{GiovangigliMatus2012}):
\begin{equation}\label{GeneralReactionRate}
r_{\rho}=\varphi_{\rho}\exp(\alpha_{\rho}, \check{\mu} )
\end{equation}
where $(\alpha_{\rho}, \check{\mu} )=\sum_i \alpha_{\rho i} \check{\mu}_i$ is the
standard inner product. Here and below, $\exp(\; \ , \;)$ is the exponent of the standard
inner product. The {\em kinetic factor} $\varphi_{\rho}\geq $ is an intensive quantity
and the expression $\exp(\alpha_{\rho}, \check{\mu} )$ is the {\em Boltzmann factor} of
the $\rho$th elementary reaction.

In the standard formalism of chemical kinetics the reaction rates are intensive variables
and in kinetic equations for $N$ an additional factor---the volume---appears. For
heterogeneous systems, there may be several ``volumes'' (including interphase surfaces).

A nonnegative extensive variable $N_i$, the amount of $A_i$, corresponds to each
component. We call the vector $N$ with coordinates $N_i$ ``the composition vector''.
$N\in \overline{\mathbb{R}_+^n}$. The concentration of $A_i$ is an intensive variable
$c_i=N_i/V$, where $V>0$ is the volume. If the system is heterogeneous then there are
several ``volumes'' (volumes, surfaces, {\em etc.}), and in each volume there are the
composition vector and the vector of concentrations \cite{YBGE,GorbanShahzad2011}. Here
we will consider homogeneous systems.

The kinetic equations for a homogeneous system in the absence of external fluxes are
\begin{equation}\label{KinUrChem}
\frac{\D N}{\D t}=V \sum_{\rho}r_{\rho} \gamma_{\rho}= V \sum_{\rho} \gamma_{\rho} \varphi_{\rho}\exp(\alpha_{\rho}, \check{\mu} )
\end{equation}
If the volume is not constant then the equations for concentrations include $\dot{V}$ and
have different form (this is typical for combustion reactions, for example).

The classical Mass Action Law gives us an important particular case of GMAL given by
Equation (\ref{GeneralReactionRate}). Let us take the perfect free entropy
\begin{equation}\label{PerfectFreeEntropy}
S=-\sum_iN_i\left(\ln\left(\frac{c_i}{c_i^*}\right)-1\right)
\end{equation}
where $c_i=N_i/V\geq 0$ are concentrations and $c_i^*>0$ are the standard equilibrium
concentrations.

For the perfect entropy function presented in Equation (\ref{PerfectFreeEntropy})
\begin{equation}\label{MALmu}
\check{\mu}_i=\ln\left(\frac{c_i}{c_i^*}\right) \, , \;
\exp(\alpha_{\rho}, \check{\mu} )=\prod_i
\left(\frac{c_i}{c_i^*}\right)^{\alpha_{\rho i}}
\end{equation}
and for the GMAL reaction rate function given by  (\ref{GeneralReactionRate}) we get
\begin{equation}\label{MALreaction rate}
r_{\rho}=\varphi_{\rho}\prod_i
\left(\frac{c_i}{c_i^*}\right)^{\alpha_{\rho i}}
\end{equation}
The standard assumption for the Mass Action Law in physics and chemistry is that
$\varphi$ and $c^*$ are functions of temperature: $\varphi_{\rho}=\varphi_{\rho}(T)$ and
$c^*_i=c^*_i(T)$. To return to the kinetic constants notation and in particular to first
order kinetics in the quasichemical form presented in Equation (\ref{QuasiChemKolgen}),
we should write:
\begin{equation}
\frac{\varphi_{\rho}}{\prod_i {c_i^*}^{\alpha_{\rho i}}}=k_{\rho}
\end{equation}

\subsection{General Entropy Production Formula}

Thus, the following entities are given: the set of components $A_i$ ($i=1,\ldots,n$), the
set of $m$ elementary reactions presented by stoichiometric equations
(\ref{stoichiometricequation}), the thermodynamic Lyapunov function $H(N,V,\ldots)$
\cite{Callen1985,YBGE,Hangos2010}, where dots (marks of omission) stand for the
quantities that do not change in time under given conditions, for example, temperature
for isothermal processes or energy for isolated systems. The GMAL presents the reaction
rate $r_{\rho}$ in Equation (\ref{GeneralReactionRate}) as a product of two factors: the
Boltzmann factor and the kinetic factor. Simple algebra gives for the time derivative of
$H$:
\begin{equation}\label{entropyproductionGNEKIN}
\begin{split}
\frac{\D H}{\D t}&=\sum_i \frac{\partial H}{\partial N_i} \frac{\D N_i}{\D t} \\
&= \sum_i \check{\mu}_i V \sum_{\rho}\gamma_{\rho i} \varphi_{\rho}\exp(\alpha_{\rho},
\check{\mu} ) \\
&=V \sum_{\rho }(\gamma_{\rho }, \check{\mu})\varphi_{\rho}\exp(
\alpha_{\rho }, \check{\mu})
\end{split}
\end{equation}

An auxiliary function $\theta(\lambda) $ of one variable $\lambda\in [0,1]$ is convenient
for analysis of $\D S/ \D t$ (see \mbox{\cite{G1,GorbanShahzad2011,OrlovRozonoer1984}}):
\begin{equation}\label{auxtheta}
\theta(\lambda)=\sum_{\rho}\varphi_{\rho}\exp[(\check{\mu},(\lambda
\alpha_{\rho}+(1-\lambda)\beta_{\rho}))]
\end{equation}
With this function, $\dot{H}$ defined by Equation (\ref{entropyproductionGNEKIN}) has a
very simple form:
\begin{equation}\label{EntropProdtheta}
\frac{\D H}{\D t}=-V\left.\frac{\D \theta(\lambda)}{\D
\lambda}\right|_{\lambda=1}
\end{equation}

The auxiliary function $\theta(\lambda) $ allows the following interpretation. Let us
introduce the deformed stoichiometric mechanism with the stoichiometric vectors,
\begin{equation}
\alpha_{\rho}(\lambda)=\lambda
\alpha_{\rho}+(1-\lambda)\beta_{\rho}\, , \; \beta_{\rho}(\lambda)=\lambda
\beta_{\rho}+(1-\lambda)\alpha_{\rho}
\end{equation}
which is the initial mechanism when $\lambda=1$, the reverted mechanism with interchange
of $\alpha$ and $\beta$ when $\lambda=0$, and the trivial mechanism (the left and right
hand sides of the stoichiometric equations coincide) when $\lambda=1/2$. Let the deformed
reaction rate be $r_{\rho}(\lambda)=\varphi_{\rho}\exp(\alpha_{\rho}(\lambda),
\check{\mu} )$ (the genuine kinetic factor is combined with the deformed Boltzmann
factor). Then $\theta(\lambda)=\sum_{\rho}r_{\rho}(\lambda)$.

It is easy to check that $\theta''(\lambda)\geq 0 $ and, therefore, $\theta(\lambda) $ is
a convex function.

The inequality
\begin{equation}\label{AccordanceIneq}
\theta'(1)\geq 0
\end{equation}
is {\em necessary and sufficient} for accordance between kinetics and thermodynamics
(decrease of free energy or positivity of entropy production). This inequality is a
condition on the kinetic factors. Together with the positivity condition $\varphi_{\rho}
\geq 0$, it defines a convex cone in the space of vectors of kinetic factors
$\varphi_{\rho}$ ($\rho=1,\ldots , m$). There exist two less general and more restrictive
{\em sufficient} conditions: detailed balance and complex balance (known also as
semidetailed or cyclic balance).

\subsection{Detailed Balance}

The detailed balance condition consists of two assumptions: (i) for each elementary
reaction $\sum_i\alpha_{\rho i}A_i \to \sum_i \beta_{\rho i} A_i$ in the mechanism
(\ref{stoichiometricequation}) there exists a reverse reaction $\sum_i\alpha_{\rho i}A_i
\leftarrow \sum_i \beta_{\rho i} A_i$. Let us join these reactions in pairs
\begin{equation}\label{reversibleMechanism}
\sum_i\alpha_{\rho i}A_i \rightleftharpoons \sum_i \beta_{\rho i} A_i
\end{equation}
After this joining, the total number of stoichiometric equations decreases. We
distinguish the reaction rates and kinetic factors for direct and inverse reactions by
the upper plus or minus:
\begin{equation}
r_{\rho}^+=\varphi_{\rho}^+\exp(\alpha_{\rho},\check{\mu})\, , \; r_{\rho}^-=\varphi_{\rho}^-\exp(\beta_{\rho},\check{\mu})
\end{equation}

The kinetic equations take the form
\begin{equation}\label{KinUrChemRev}
\frac{\D N}{\D t}=V \sum_{\rho}(r^+_{\rho}-r^-_{\rho}) \gamma_{\rho}
\end{equation}
The condition of detailed balance in GMAL is simple and elegant:
\begin{equation}\label{detailed balance}
\varphi_{\rho}^+=\varphi_{\rho}^-
\end{equation}
For the systems with detailed balance we can take
$\varphi_{\rho}=\varphi_{\rho}^+=\varphi_{\rho}^-$ and write for the reaction rate:
\begin{equation}\label{DBreactionrate}
r_{\rho}= r^+_{\rho}-r^-_{\rho}=\varphi_{\rho}
(\exp(\alpha_{\rho},\check{\mu})-\exp(\beta_{\rho},\check{\mu}))
\end{equation}
M. Feinberg called this kinetic law the ``Marselin-De Donder'' kinetics
\cite{Feinberg1972_a}.

Under the detailed balance conditions, the auxiliary function $\theta(\lambda)$ is
symmetric with respect to change $\lambda \mapsto (1-\lambda)$. Therefore,
$\theta(1)=\theta(0)$ and, because of convexity of $\theta(\lambda)$, the inequality
holds: $\theta'(1)\geq 0$. Therefore, $\dot{H}\leq 0$ and kinetic equations obey the
second law of thermodynamics.

The explicit formula for $\dot{H}\leq 0$ has the well known form since Boltzmann proved
his $H$-theorem in 1872:
\begin{equation}
\frac{\D H}{\D t}=-V \sum_{\rho} (\ln r_{\rho}^+-\ln r_{\rho}^-) (r_{\rho}^+-r_{\rho}^-)\leq 0
\end{equation}

A convenient equivalent form of $\dot{H}\leq 0$ is proposed in \cite{GorMirYab2013}:
\begin{equation}\label{DissTanh}
\frac{\D H}{\D t}=-V \sum_{\rho} (r_{\rho}^+ + r_{\rho}^-)\mathbb{A}_{\rho} \tanh \frac{\mathbb{A}_{\rho}}{2} \leq 0
\end{equation}
where $$\mathbb{A}_{\rho}=-(\gamma_{\rho},\check{\mu})\;(=-{(\gamma_{\rho},\mu)}/{RT},
\mbox{ where } \mu \mbox{ is the chemical potential})$$ is a normalized {\em affinity}.
In this formula, the kinetic information is collected in the nonnegative factors, the
sums of reaction rates $(r_{\rho}^+ + r_{\rho}^-)$. The purely thermodynamic multiplier
$\mathbb{A}\tanh ({\mathbb{A}}/{2})\geq 0$ is positive for non-zero $\mathbb{A}$. For
small $|\mathbb{A}|$, the expression $\mathbb{A} \tanh ({\mathbb{A}}/{2})$ behaves like
$\mathbb{A}^2/2$ and for large $|\mathbb{A}|$ it behaves like the absolute value,
$|\mathbb{A}|$.

The detailed balance condition reflects ``microreversibility'', that is,
time-reversibility of the dynamic microscopic description and was first introduced by
Boltzmann in 1872 as a consequence of the reversibility of collisions in Newtonian
mechanics.

\subsection{Complex Balance}

The complex balance condition was invented by Boltzmann in 1887 for the Boltzmann
equation \cite{Boltzmann1887} as an answer to the Lorentz objections \cite{Lorentz1887}
against Boltzmann's proof of the $H$-theorem. Stueckelberg demonstrated in 1952 that this
condition follows from the Markovian microkinetics of fast intermediates if their
concentrations are small \cite{Stueckelberg1952}. Under this asymptotic assumption this
condition is just the probability balance condition for the underlying Markov process.
(Stueckelberg considered this property as a consequence of ``unitarity'' in the
$S$-matrix terminology.) It was known as the semidetailed or cyclic balance condition.
This condition was rediscovered in the framework of chemical kinetics by Horn and Jackson
in 1972 \cite{HornJackson1972} and called the complex balance condition. Now it is used
for chemical reaction networks in chemical engineering \cite{SzederkHangos2011}. Detailed
analysis of the backgrounds of the complex balance condition is given in
\cite{GorbanShahzad2011}.

Formally, the complex balance condition means that $\theta(1)\equiv \theta(0)$ for all
values of $\check{\mu}$. We start from the initial stoichiometric equations
(\ref{stoichiometricequation}) without joining the direct and reverse reactions. The
equality $\theta(1)\equiv \theta(0)$ reads
\begin{equation}\label{0=1}
\sum_{\rho}\varphi_{\rho}\exp(\check{\mu},\alpha_{\rho})=
\sum_{\rho}\varphi_{\rho}\exp(\check{\mu},\beta_{\rho})
\end{equation}

Let us consider the family of vectors $\{\alpha_{\rho},\beta_{\rho}\}$ ($\rho=1, \ldots
,m$). Usually, some of these vectors coincide. Assume that there are $q$ different
vectors among them. Let $y_1, \ldots, y_q$ be these vectors. For each $j=1, \ldots, q$ we
take
\begin{equation}
R_j^+=\{\rho\, | \,\alpha_{\rho}=y_j\}\, , \; R_j^-=\{\rho\, |
\,\beta_{\rho}=y_j\}
\end{equation}

We can rewrite Equation (\ref{0=1}) in the form
\begin{equation}\label{0=1incomplex}
\sum_{j=1}^q \exp(\check{\mu},y_j)\left[\sum_{\rho\in R_j^+}\varphi_{\rho}- \sum_{\rho\in R_j^-}\varphi_{\rho}\right]=0
\end{equation}
The Boltzmann factors $\exp(\check{\mu},y_j)$ are linearly independent functions.
Therefore, the natural way to meet these conditions is: for any $j=1, \ldots, q$
\begin{equation}\label{complexbalanceGENKIN}
\sum_{\rho\in R_j^+}\varphi_{\rho}- \sum_{\rho\in R_j^-}\varphi_{\rho}=0
\end{equation}
This is the general {\em complex balance condition}. This condition is sufficient for the
inequality $\dot{H}=\theta'(1) \leq 0$, because it provides the equality
$\theta(1)=\theta(0)$ and $\theta(\lambda)$ is a convex function.

It is easy to check that for the first order kinetics given by Equation (\ref{MAsterEq1})
(or Equation (\ref{QuasiChemKolgen})) with positive equilibrium, the complex balance
condition is just the balance equation (\ref{MasterEquilibrium}) and always holds.

\subsection{Cyclic Decomposition of the Systems with Complex Balance}

The complex balance conditions defined by Equation (\ref{complexbalanceGENKIN}) allow a
simple geometric interpretation. Let us introduce the digraph of transformation of
complexes. The vertices of this digraph correspond to the formal sums $(y,A)$
(``complexes''), where $A$ is the vector of components, and $y\in \{y_1, \ldots, y_q\}$
are vectors $\alpha_{\rho}$ or $\beta_{\rho}$ from the stoichiometric equations of the
elementary reactions (\ref{stoichiometricequation}). The edges of the digraph correspond
to the elementary reactions with non-zero kinetic factor.

Let us assign to each edge $(\alpha_{\rho},A)\to (\beta_{\rho},A)$ the auxiliary {\em
current}---the kinetic factor $\varphi_{\rho}$. For these currents, the complex balance
condition presented by Equation (\ref{complexbalanceGENKIN}) is just Kirchhoff's first
rule: the sum of the input currents is equal to the sum of the output currents for each
vertex. (We have to stress that these auxiliary currents are not the actual rates of
transformations.)

Let us use for the vertices the notation $\Theta_j$: $\Theta_j=(y_j,A)$, ($j=1,\ldots
,q$) and denote $\varphi_{lj}$ the fluxes for the edge $\Theta_j \to \Theta_l$.

The {\em simple cycle} is the digraph $\Theta_{i_1}\to \Theta_{i_2} \to \ldots \to
\Theta_{i_k} \to \Theta_{i_1}$, where all the complexes $\Theta_{i_l}$ ($l=1, \ldots, k$)
are different. We say that the simple cycle is normalized if all the corresponding
auxiliary fluxes are unit: $\varphi_{i_{j+1} \, i_j } = \varphi_{i_{1} \, i_k }=1$.

The graph of the transformation of complexes cannot be arbitrary if the system satisfies
the complex balance condition \cite{Feinberg1974}.

\begin{proposition}\label{prop:Cyclic}If the system satisfies the complex balance condition  (i.e. Equation (\ref{complexbalanceGENKIN}) holds) then every edge of the digraph of transformation of complexes is included into a simple cycle.
\end{proposition}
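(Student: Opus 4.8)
The plan is to reinterpret the complex balance condition (\ref{complexbalanceGENKIN}) as Kirchhoff's current law for the auxiliary fluxes $\varphi_{lj}$ on the digraph of complexes and then to run a standard cut (flow-conservation) argument. As already observed in the text, Equation (\ref{complexbalanceGENKIN}) states exactly that at every vertex $\Theta_j$ the total outgoing flux equals the total incoming flux, that is $\sum_l \varphi_{lj}=\sum_l \varphi_{jl}$. Moreover every edge of the digraph carries strictly positive flux, because the edges correspond precisely to elementary reactions with non-zero kinetic factor $\varphi_\rho>0$.

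First I would fix an arbitrary edge $e\colon\Theta_j\to\Theta_l$ and reduce the statement to a reachability claim: it is enough to exhibit a directed path from $\Theta_l$ back to $\Theta_j$ along edges of the digraph, since appending $e$ to a simple such path closes it into a simple cycle through $e$ (the endpoints are distinct because $\alpha_\rho\neq\beta_\rho$, so $e$ is not a self-loop).

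To prove reachability I would use a cut argument. Let $S$ be the set of all vertices reachable from $\Theta_l$ by directed paths, so that $\Theta_l\in S$. By the definition of $S$ no edge leaves $S$---if $u\in S$ and $u\to w$ is an edge then $w\in S$---so the total flux from $S$ to its complement is zero. Summing the balance identity $\sum_l\varphi_{lj}=\sum_l\varphi_{jl}$ over all $j$ with $\Theta_j\in S$, every edge internal to $S$ is counted once on the outgoing side (at its tail) and once on the incoming side (at its head) and therefore cancels; what remains is the equality between the flux out of $S$ and the flux into $S$. Since the flux out of $S$ vanishes, the flux into $S$ vanishes as well.

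The only point that needs care is the contradiction extracted from this last equality. Suppose $\Theta_j\notin S$. Then the edge $e\colon\Theta_j\to\Theta_l$ runs from the complement of $S$ into $S$ (as $\Theta_l\in S$) and carries flux $\varphi_e>0$, so the flux into $S$ is at least $\varphi_e>0$, contradicting its vanishing. Hence $\Theta_j\in S$, the desired directed path from $\Theta_l$ to $\Theta_j$ exists, and the simple cycle through $e$ follows as above. Finiteness of the complex set $\{y_1,\ldots,y_q\}$ guarantees that $S$ is well defined and that a simple path can always be extracted from the reachability relation, so no further analytic input is required.
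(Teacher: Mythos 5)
Your proof is correct and follows essentially the same route as the paper: both establish Kirchhoff's rule for subsets by summing the vertex balance equations, apply it to the set of vertices reachable from the head of the given edge (which has no outgoing edges across the cut), and derive a contradiction from the strictly positive incoming flux carried by that edge. The only difference is cosmetic --- you spell out the extraction of a simple path and the non-self-loop point, which the paper leaves implicit.
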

\begin{proof}
First of all, let us formulate Kirchhoff's first rule (\ref{complexbalanceGENKIN}) for
subsets: if the digraph of transformation of complexes satisfies Equation
(\ref{complexbalanceGENKIN}), then for any set of complexes $\Omega$
\begin{equation}\label{KirghoffSets}
\sum_{\Theta \in \Omega, \Phi \notin \Omega} \varphi_{\Theta \Phi} =\sum_{\Theta \in \Omega, \Phi \notin \Omega} \varphi_{\Phi \Theta}
\end{equation}
where $\varphi_{\Phi \Theta}$ is the positive kinetic factor for the reaction $\Theta\to
\Phi$ if it belongs to the reaction mechanism ({\em i.e.}, the edge $\Theta\to \Phi$
belongs to the digraph of transformations) and $\varphi_{\Phi \Theta}=0$ if it does not.
Equation~(\ref{KirghoffSets}) is just the result of summation of Equations
(\ref{complexbalanceGENKIN}) for all $(y_j,A)=\Theta \in \Omega$.

We say that a state $\Theta_j$ is reachable from a state $\Theta_k$ if $k=i$ or there
exists a non-empty chain of transitions with non-zero coefficients that starts at
$\Theta_k$ and ends at $\Theta_j$: $\Theta_k \to \ldots \to \Theta_j$. Let $\Theta_{i
\downarrow}$ be the set of states reachable from $\Theta_i$. The set $\Theta_{i
\downarrow}$ has no output edges.

Assume that the edge $\Theta_j \to \Theta_i$ is not included in a simple cycle, which
means $\Theta_j \notin \Theta_{i \downarrow}$. Therefore, the set $\Omega=\Theta_{i
\downarrow}$ has the input edge ($\Theta_j \to \Theta_i$) but no output edges and cannot
satisfy Equation (\ref{KirghoffSets}). This contradiction proves the
proposition.\end{proof}

This property (every edge is included in a simple cycle) is equivalent to the so-called
``weak reversibility'' or to the property that every weakly connected component of the
digraph is its strong component.

For every graph with the system of fluxes, which obey Kirchhoff's first rule, the cycle
decomposition theorem holds. It can be extracted from many books and papers
\cite{MeynNets2007,GorbanEq2012arX,Kalpazidou2006}. Let us recall the notion of extreme
ray. A ray with direction vector $x\neq 0$ is a set $\{\lambda x\}$ ($\lambda \geq 0$). A
ray $l$ is an extreme ray of a cone $\mathbf{Q}$ if for any $u \in l$ and any $x,y \in
\mathbf{Q}$, whenever $u = (x + y)/2$, we must have $x,y\in l$. If a closed convex cone
does not include a whole straight line then it is the convex hull of its extreme rays
\cite{Rockafellar1970}.

Let us consider a digraph $Q$ with vertices $\Theta_i$, the set of edges $E$ and the
system of auxiliary fluxes along the edges $\varphi_{ij}\geq 0$ ($(j,i)\in E$). The set
of all nonnegative functions on $E$, $\varphi: (j,i) \mapsto \varphi_{ij}$, is a
nonnegative orthant $\mathbb{R}^{|E|}_+$. Kirchhoff's first rule (Equation
(\ref{complexbalanceGENKIN})) together with nonnegativity of the kinetic factors define a
cone of the systems with complex balance $\mathcal{Q} \subset \mathbb{R}^{|E|}_+$.

\begin{proposition}[Cycle decomposition of systems with complex balance]
\label{prop:Cycle decomposition}Every extreme ray of $\mathcal{Q}$ has a direction vector
that corresponds to a simple normalized cycle $\Theta_{i_1}\to \Theta_{i_2} \to \ldots
\to \Theta_{i_k} \to \Theta_{i_1}$, where all the complexes $\Theta_{i_l}$ ($l=1, \ldots,
k$) are different, all the corresponding fluxes are unit, $\varphi_{i_{j+1} \, i_j } =
\varphi_{i_{1} \, i_k }=1$, and other fluxes are zeros.
\end{proposition}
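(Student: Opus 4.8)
The plan is to combine Proposition~\ref{prop:Cyclic}, which is already available, with an elementary perturbation argument, exactly in the spirit of the classical fact that the extreme rays of a circulation cone are the simple cycles. First I would observe that $\mathcal{Q}\subset\mathbb{R}^{|E|}_+$ contains no straight line, so it is a pointed closed convex cone; by the cited result of Rockafellar it is then the convex hull of its extreme rays, and it therefore suffices to describe an arbitrary extreme ray and show that its direction vector is a positive multiple of a normalized simple cycle.

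So let $\varphi$ be a nonzero direction vector of an extreme ray, and let $Q_\varphi$ be the subgraph carried by the edges on which $\varphi$ is strictly positive. I would first check that $Q_\varphi$ still satisfies the complex balance condition: the balance equation~(\ref{complexbalanceGENKIN}) at each vertex is a sum of fluxes, and dropping the edges with zero flux changes nothing, so Kirchhoff's first rule persists on the support. Proposition~\ref{prop:Cyclic} then applies to $Q_\varphi$ and guarantees that every support edge lies on a simple cycle $C\subseteq Q_\varphi$.

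The heart of the argument is the perturbation. Fixing such a cycle $C$, let $u$ be its indicator vector ($u_e=1$ on $C$, $u_e=0$ elsewhere); as a simple cycle it satisfies Kirchhoff's rule, so $u\in\mathcal{Q}$. Because $\varphi$ is strictly positive on all of $C$, for small enough $\varepsilon>0$ both $\varphi+\varepsilon u$ and $\varphi-\varepsilon u$ stay nonnegative and continue to satisfy Kirchhoff's rule, hence lie in $\mathcal{Q}$. Writing $\varphi=\tfrac{1}{2}[(\varphi+\varepsilon u)+(\varphi-\varepsilon u)]$ and invoking extremality, both summands must be nonnegative multiples of $\varphi$, which forces $u\parallel\varphi$, so the support of $\varphi$ is precisely the single cycle $C$. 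Finally, on a simple cycle each vertex has exactly one incoming and one outgoing support-edge, so Kirchhoff's rule equates consecutive fluxes and, going once around $C$, shows that $\varphi$ is constant on $C$. Thus $\varphi=c\,u$ with $c>0$, a positive multiple of the normalized simple cycle, which is the assertion.

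I expect the only genuinely delicate point to be the bookkeeping in the perturbation step: confirming that subtracting $\varepsilon u$ drives no coordinate negative, and that the parallelism conclusion really pins the support down to one cycle rather than leaving room for a larger support. Everything else---restricting the balance equation to the support, and the constancy of a circulation on a simple cycle---is routine. That same constancy remark also yields the converse direction for free: if a normalized simple cycle $u$ were split as $(x+y)/2$ in $\mathcal{Q}$, the supports of $x,y$ would sit inside $C$, a proper subset of $C$ supports no nonzero circulation (its endpoints would violate conservation), and on all of $C$ any circulation is constant---so $x,y$ lie on the ray $\{\lambda u\}$ and $u$ is indeed extreme.
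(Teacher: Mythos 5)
Your proof is correct and follows essentially the same route as the paper's: both rest on Proposition~\ref{prop:Cyclic} to find a simple cycle inside the support of the extreme ray, and then perturb by $\pm\varepsilon$ times the normalized cycle indicator to force, via extremality, that the support is exactly that cycle with constant flux. The extra details you supply (pointedness of $\mathcal{Q}$, constancy of a circulation on a simple cycle, and the converse that each normalized simple cycle is indeed extreme) are correct but go beyond what the stated proposition requires.
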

\begin{proof}
Let a function $\phi:E\to \mathbb{R}_+$ be an extreme ray of $\mathcal{Q}$ and
$\mbox{supp}\phi=\{(j,i)\in E \, | \,\phi_{ij}>0\}$. Due to Proposition~\ref{prop:Cyclic}
each edge from $\mbox{supp}\phi$ is included in a simple cycle formed by edges from
$\mbox{supp}\phi$. Let us take one this cycle $\Theta_{i_1}\to \Theta_{i_2} \to \ldots
\to \Theta_{i_k} \to \Theta_{i_1}$. Denote the fluxes of the corresponding simple
normalized cycle by $\psi$. It is a function on $E$: $\psi_{i_{j+1} \, i_j } =
\psi_{i_{1} \, i_k }=1$ and $\psi_{ij}=0$ if $(i,j)\in E$ but $(i,j)\neq (i_{j+1}, i_j)$
and $(i,j)\neq (i_{1},i_k)$ ($i,j=1,\ldots, k$, $i\neq j$).

Assume that $\mbox{supp}\phi$ includes at least one edge that does not belong to the
cycle $\Theta_{i_1}\to \Theta_{i_2} \to \ldots \to \Theta_{i_k} \to \Theta_{i_1}$. Then,
for sufficiently small $\kappa>0$, $\phi \pm \kappa \psi \in \mathcal{Q}$ and the vector
$\phi \pm \kappa \psi$ is not proportional to $\phi$. This contradiction proves the
proposition.\end{proof}

This decomposition theorem explains why the complex balance condition was often called
the ``cyclic balance condition''.

\subsection{Local Equivalence of Systems with Detailed and Complex Balance}

The class of systems with detailed balance is the proper subset of the class of systems
with complex balance. A simple (irreversible) cycle of the length $k>2$ gives a simplest
and famous example of the complex balance system without detailed balance condition.

For Markov chains, the complex balance systems are all the systems that have a positive
equilibrium distribution presented by Equation (\ref{MasterEquilibrium}), whereas the
systems with detailed balance form the proper subclass of the Markov chains, the
so-called reversible chains.

In nonlinear kinetics, the systems with complex balance provide the natural
generalization of the Markov processes. They deserve the term ``nonlinear Markov
processes'', though it is occupied by a much wider notion \cite{Kolokoltzov}. The systems
with detailed balance form the proper subset of this class.

Nevertheless, in some special sense the classes of systems with detailed balance and with
the complex balance are equivalent. Let us consider a thermodynamic state given by the
vector of potentials $\check{\mu}$ defined by Equation (\ref{DualVariables}). Let all the
reactions in the reaction mechanism be reversible ({\em i.e.}, for every transition
$\Theta_i \to \Theta_j$ the reverse transition $\Theta_i \leftarrow \Theta_j$ is allowed
and the corresponding edge belongs to the digraph of complex transformations). Calculate
the right hand side of the kinetic equations (\ref{KinUrChemRev}) with the detailed
balance condition given by Equation (\ref{detailed balance}) for a given value of
$\check{\mu}$ and all possible values of $\varphi_{\rho}^+=\varphi_{\rho}^-$. The set of
these values of $\dot{N}$ is a convex cone. Denote this cone $\mathbf{Q}_{\rm
DB}(\check{\mu})$. For the same transition graph, calculate the right hand side of the
kinetic equation (\ref{KinUrChem}) under the complex balance condition
(\ref{complexbalanceGENKIN}). The set of these values of $\dot{N}$ is also a convex cone.
Denote it $\mathbf{Q}_{\rm CB}(\check{\mu})$. It is obvious that $\mathbf{Q}_{\rm
DB}(\check{\mu}) \subseteq \mathbf{Q}_{\rm CB}(\check{\mu})$. Surprisingly, these cones
coincide. In \cite{GorbanEq2012arX} we proved this fact on the basis of the
Michaelis-Menten-Stueckelberg theorem \cite{GorbanShahzad2011} about connection of the
macroscopic GMAL kinetics and the complex balance condition with the Markov microscopic
description and under some asymptotic assumptions. Below a direct proof is presented.
\begin{theorem}[Local equivalence of detailed and complex balance]\label{theor:equivalence}
\begin{equation}
\mathbf{Q}_{\rm DB}(\check{\mu}) = \mathbf{Q}_{\rm CB}(\check{\mu})
\end{equation}
\end{theorem}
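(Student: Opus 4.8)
The plan is to prove the nontrivial inclusion $\mathbf{Q}_{\rm CB}(\check{\mu})\subseteq\mathbf{Q}_{\rm DB}(\check{\mu})$, the reverse being the obvious one already noted. First I would observe that $\mathbf{Q}_{\rm CB}(\check{\mu})$ is the image of the complex-balance cone $\mathcal{Q}$ of kinetic factors under the \emph{linear} map $L_{\check{\mu}}:\varphi\mapsto V\sum_{\rho}\varphi_{\rho}\exp(\alpha_{\rho},\check{\mu})\gamma_{\rho}$. Since $\mathcal{Q}$ sits in the nonnegative orthant, it is a pointed polyhedral cone and hence the conical hull of its extreme rays; by Proposition~\ref{prop:Cycle decomposition} these extreme rays are exactly the simple normalized cycles $\Theta_{i_1}\to\cdots\to\Theta_{i_k}\to\Theta_{i_1}$. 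A linear map carries a conical hull onto the conical hull of the images, so $\mathbf{Q}_{\rm CB}(\check{\mu})$ is generated by the velocity vectors $L_{\check{\mu}}(\text{cycle})$. Because $\mathbf{Q}_{\rm DB}(\check{\mu})$ is itself a convex cone, it then suffices to prove that the velocity produced by every single simple normalized cycle already lies in $\mathbf{Q}_{\rm DB}(\check{\mu})$.

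For a fixed cycle write $u_j=\exp(y_j,\check{\mu})$; with unit fluxes its velocity is $w=V\sum_{l}u_{i_l}(y_{i_{l+1}}-y_{i_l})$ (cyclic indices). I would establish $w\in\mathbf{Q}_{\rm DB}(\check{\mu})$ by induction on the length $k$. For $k=2$ the cycle is a single reversible pair and $w=V(u_{i_1}-u_{i_2})(y_{i_2}-y_{i_1})$ is exactly one detailed-balance term of Equation~(\ref{DBreactionrate}) with kinetic factor $1$, hence a detailed-balance velocity. For $k\geq3$ I would pick a complex $y_{i_p}$ with minimal Boltzmann factor $u_{i_p}=\min_l u_{i_l}$ and verify by a short direct computation that the combined contribution of its two incident edges equals the contribution of the \emph{shortcut} $y_{i_{p-1}}\to y_{i_{p+1}}$ (again a unit-flux simple cycle, now of length $k-1$) plus one detailed-balance exchange on the pair $\{y_{i_p},y_{i_{p+1}}\}$ taken with kinetic factor $\varphi=(u_{i_{p-1}}-u_{i_p})/(u_{i_{p+1}}-u_{i_p})$. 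Minimality of $u_{i_p}$ makes numerator and denominator nonpositive, so $\varphi\geq0$ and the extra term is admissible, while the inductive hypothesis handles the shorter cycle. Equivalently, a ``layer-cake'' decomposition of $w$ over the superlevel sets $\{l:u_{i_l}>h\}$ writes the circulation as an integral of elementary downhill pair transfers $y_{\rm low}-y_{\rm high}$, each of which is a detailed-balance direction.

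I expect this single-cycle step to be the main obstacle, for two reasons. The detailed-balance exchanges produced by the shortcut use reversible pairs \emph{among the complexes of the cycle that need not be edges of the cycle itself} (chords), so the detailed-balance system realizing $w$ must be built on the full reversible mechanism spanned by the complexes; one has to be explicit that this is what a ``detailed-balance velocity'' is allowed to range over. Moreover, ties at the minimum ($u_{i_p}=u_{i_{p+1}}$) make the naive kinetic factor indeterminate, so I would either always choose a minimal complex possessing a strictly larger neighbour (one exists unless all $u_{i_l}$ coincide, in which case $w=0$ trivially) or pass to the layer-cake form, which discards the measure-zero set of critical levels, closing the argument by the closedness of both cones. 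As a cross-check I would record the conceptual shortcut: in ``complex coordinates'' a complex-balanced GMAL system is a first-order master equation on the digraph of complexes whose uniform distribution is an equilibrium, since Kirchhoff's rule~(\ref{complexbalanceGENKIN}) is precisely stationarity of the uniform state; the statement is then an instance of the decomposition theorem of Section~2 applied on the complex graph and pushed back to $N$-space by the linear complex map $u\mapsto\sum_j u_j y_j$.
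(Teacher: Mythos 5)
Your argument is essentially the paper's own proof: reduce to simple normalized cycles via Proposition~\ref{prop:Cycle decomposition}, then induct on the cycle length by peeling off a complex with minimal Boltzmann factor into a shortcut cycle of length $k-1$ plus one reversible detailed-balance pair with kinetic factor $(u_{i_{p-1}}-u_{i_p})/(u_{i_{p+1}}-u_{i_p})$, which is exactly the paper's $\varphi=\bigl(\exp(\check{\mu},y_k)-\exp(\check{\mu},y_1)\bigr)/\bigl(\exp(\check{\mu},y_2)-\exp(\check{\mu},y_1)\bigr)$. Your two caveats are correct refinements of points the paper leaves implicit --- the shortcuts and the resulting reversible pairs are chords, so $\mathbf{Q}_{\rm DB}(\check{\mu})$ must indeed be understood over all reversible pairs of complexes, and ties at the minimum must be broken by choosing a minimizer whose successor is strictly larger (the paper does exactly this) --- with the one slip that minimality makes the numerator and denominator of $\varphi$ nonnegative, not nonpositive, though the conclusion $\varphi\geq 0$ stands either way.
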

\begin{proof}
Because of the cycle decomposition (Proposition~\ref{prop:Cycle decomposition}) it is
sufficient to prove this theorem for simple normalized cycles. Let us use induction on
the cycle length $k$. For $k=2$ the transition graph is $\Theta_1 \rightleftharpoons
\Theta_2$ and the detailed balance condition (\ref{detailed balance}) coincides with the
complex balance condition (\ref{complexbalanceGENKIN}). Assume that for the cycles of the
length below $k$ the theorem is proved. Consider a normalized simple cycle $\Theta_1 \to
\Theta_2 \to \ldots \Theta_k \to \Theta_1$, $\Theta_i = (y_i, A)$. The corresponding
kinetic equations are
\begin{equation}\label{CycleKinetics}
\begin{split}
\frac{\D N}{\D t}=&(y_2-y_1)\exp(\check{\mu},y_1)+ (y_3-y_2) \exp(\check{\mu},y_2)
+\ldots \\ &+ (y_{k-1}-y_k)\exp(\check{\mu},y_{k-1})+(y_{k}-y_1)\exp(\check{\mu},y_{k})
\end{split}
\end{equation}
At the equilibrium, all systems with detailed balance or with complex balance give
$\dot{N}=0$. Assume that the state $\check{\mu}$ is non-equilibrium and therefore not all
the Boltzmann factors $\exp(\check{\mu},y_i)$ are equal. Select $i$ such that the
Boltzmann factor $\exp(\check{\mu},y_i)$ has minimal value, while for the next position
in the cycle this factor becomes bigger. We can use a cyclic permutation and assume that
the factor $\exp(\check{\mu},y_1)$ is the minimal one and
$\exp(\check{\mu},y_2)>\exp(\check{\mu},y_1)$.

Let us find a kinetic factor $\varphi$ such that the reaction system consisting of two
cycles, a cycle of the length 2 with detailed balance $\Theta_1
\underset{\varphi}{\overset{\varphi}{\rightleftharpoons}} \Theta_2$ (here the kinetic
factors are shown above and below the arrows) and a simple normalized cycle of the length
$k-1$, $\Theta_2 \to \ldots \Theta_k \to \Theta_2$, gives the same $\dot{N}$ at the state
$\check{\mu}$ as the initial scheme. We obtain from  Equation (\ref{CycleKinetics}) the
following necessary and sufficient condition
$$(y_1-y_k)\exp(\check{\mu},y_k)+(y_2-y_1)\exp(\check{\mu},y_1)=(y_2-y_k)\exp(\check{\mu},y_k)+\varphi
(y_2-y_1)(\exp(\check{\mu},y_1)-\exp(\check{\mu},y_2))$$. It is sufficient to equate here
the coefficients at every $y_i$ ($i=1,2,k$). The result is
$$\varphi=\frac{\exp(\check{\mu},y_k)-\exp(\check{\mu},y_1)}{\exp(\check{\mu},y_2)-\exp(\check{\mu},y_1)}$$
By the induction assumption we proved that theorem for the cycles of arbitrary length
and, therefore, it is valid for all reaction schemes with complex balance.
\end{proof}

The cone $\mathbf{Q}_{\rm DB}(\check{\mu})$ of the possible values of $\dot{N}$ in
Equation (\ref{KinUrChemRev}) is a polyhedral cone with finite set of extreme rays at any
non-equilibrium state $\check{\mu}$ for the systems with detailed balance. Each of its
extreme rays has the direction vector of the form
\begin{equation}
\gamma_{\rho}\mbox{sign}(\exp(\check{\mu},\alpha_{\rho})-\exp(\check{\mu},\beta_{\rho}))
\end{equation}
This follows from the form of the reaction rate presented by Equation
(\ref{DBreactionrate}) for the kinetic equations (\ref{KinUrChemRev}). Following
Theorem~\ref{theor:equivalence}, the cone of the possible values of $\dot{N}$ for systems
with complex balance has the same set of extreme rays. Each extreme ray corresponds to a
single reversible elementary reaction with the detailed balance condition (\ref{detailed
balance}).

\subsection{General $H$-Theorem for GMAL}

Consider GMAL kinetics with the given reaction mechanism presented by stoichiometric
equations (\ref{reversibleMechanism}) and the detailed balance condition (\ref{detailed
balance}). The reaction rates of the elementary reaction for the kinetic equations
(\ref{KinUrChemRev}) are proportional to the nonnegative parameter $\varphi_{\rho}$ in
Equation (\ref{DBreactionrate}). These $m$ nonnegative numbers $\varphi_{\rho}$
($\rho=1,\ldots, m$) are independent in the following sense: for any set of values
$\varphi_{\rho} \geq 0$ the kinetic equations~(\ref{KinUrChemRev}) satisfy the
$H$-theorem in the form of Equation (\ref{DissTanh}):
\begin{equation}\label{DissTanhPhi}
\frac{\D H}{\D t}=-V \sum_{\rho} \varphi_{\rho}(\exp(\alpha_{\rho},\check{\mu})
+ \exp(\beta_{\rho},\check{\mu}))\mathbb{A}_{\rho} \tanh \frac{\mathbb{A}_{\rho}}{2} \leq 0
\end{equation}
Therefore, nonnegativity is the only a priori restriction on the values of
$\varphi_{\rho}$ ($\rho=1,\ldots, m$).

One Lyapunov function for the GMAL kinetics with the given reaction mechanism and the
detailed balance condition obviously exists. This is the thermodynamic Lyapunov function
$H$ used in GMAL construction. For ideal systems (in particular, for master equation) $H$
has the standard form $\sum_i N_i(\ln(c_i/c_i^*)-1)$ given by Equation
(\ref{PerfectFreeEntropy}). Usually, $H$ is assumed to be convex and some singularities
(like $c\ln c$) near zeros of $c$ may be required for positivity preservation in kinetics
($\dot{N}_i\geq 0$ if $c_i=0$). The choice of the thermodynamic Lyapunov function for
GMAL construction is wide. We consider kinetic equations in a compact convex set $U$ and
assume $H$ to be convex and continuous in $U$ and differentiable in the relative interior
of $U$ with derivatives continued by continuity to $U$.

Assume that we select the thermodynamic Lyapunov function $H$ and the reaction mechanism
in the form (\ref{reversibleMechanism}). Are there other universal Lyapunov functions for
GMAL kinetics with detailed balance and given mechanism? ``Universal'' here means
``independent of the choice of the nonnegative kinetic factors''.

For a given reaction mechanism we introduce the partial equilibria criterion by analogy
to Definition~\ref{def:PatEqCrit}. Roughly speaking, a convex function $F$ satisfies this
criterion if its conditional minima correctly describe the partial equilibria of
elementary reactions.

For each elementary reaction $\sum_i\alpha_{\rho i}A_i \rightleftharpoons \sum_i
\beta_{\rho i} A_i$ from the reaction mechanism given by the stoichiometric equations
(\ref{reversibleMechanism}) and any $X\in U$ we define an interval of a straight line
\begin{equation}
I_{X, \rho}= \{X+\lambda \gamma_{\rho} \, |
\, \lambda \in \mathbb{R}\} \cap U.
\end{equation}

\begin{definition}[Partial equilibria criterion for GMAL]\label{def:partEquilGMAL}A convex function $F(N)$ on $U$ satisfies the partial
equilibria criterion with a given thermodynamic Lyapunov function $H$ and reversible
reaction mechanism given by stoichiometric equations (\ref{reversibleMechanism}) if
\begin{equation}
\underset{{N\in I_{X, \rho}}}{\operatorname{argmin}} H(N) \subseteq \underset{{N\in I_{X, \rho}}}{\operatorname{argmin}} F(N)
\end{equation}
for all $X\in U$, $\rho=1,\ldots ,m$.
\end{definition}

\begin{theorem}\label{theorem:DetBalGenHthGMAL}A convex function $F(N)$ on $U$ is a Lyapunov function for all kinetic equations (\ref{KinUrChemRev})
with the given thermodynamic Lyapunov function $H$ and reaction rates presented by
Equation (\ref{DBreactionrate}) (detailed balance) if and only if it satisfies the
partial equilibria criterion (Definition~\ref{def:partEquilGMAL}).
\end{theorem}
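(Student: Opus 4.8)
The plan is to follow the proof of Proposition~\ref{prop:DBMarkH} almost verbatim, replacing the pairwise transition terms of the master equation by the per-reaction terms of the detailed-balance GMAL kinetics. The three ingredients are: (i) an explicit production formula for $F$, (ii) the independence of the kinetic factors $\varphi_\rho$, which decouples the sign condition into one inequality per reaction, and (iii) a one-dimensional convexity argument on each line $I_{X,\rho}$. First I would differentiate $F$ along solutions of Equation~(\ref{KinUrChemRev}) with the rates from Equation~(\ref{DBreactionrate}). Since $\D N/\D t=V\sum_\rho\varphi_\rho(\exp(\alpha_\rho,\check{\mu})-\exp(\beta_\rho,\check{\mu}))\gamma_\rho$, the chain rule gives
$$\frac{\D F}{\D t}=V\sum_\rho\varphi_\rho\,(\exp(\alpha_\rho,\check{\mu})-\exp(\beta_\rho,\check{\mu}))\,(\gamma_\rho,\nabla_N F),$$
where on the boundary of $U$ the gradient $\nabla_N F$ is taken in the continued-by-continuity sense. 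Because the factors $\varphi_\rho\ge0$ are free, mutually independent parameters, $\D F/\D t\le0$ holds for every admissible $\varphi$ and at every $N\in U$ if and only if each summand is separately nonpositive:
$$(\exp(\alpha_\rho,\check{\mu})-\exp(\beta_\rho,\check{\mu}))\,(\gamma_\rho,\nabla_N F)\le0\qquad(\forall N\in U,\ \forall\rho).$$

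Next I would put this pointwise condition into one-dimensional form. Recalling $\check{\mu}=\nabla_N H$ from Equation~(\ref{DualVariables}) and $\gamma_\rho=\beta_\rho-\alpha_\rho$, one has $(\alpha_\rho,\check{\mu})-(\beta_\rho,\check{\mu})=-(\gamma_\rho,\nabla_N H)$, so by monotonicity of $\exp$ the difference $\exp(\alpha_\rho,\check{\mu})-\exp(\beta_\rho,\check{\mu})$ carries the same sign as $-(\gamma_\rho,\nabla_N H)$. Fixing a line $I_{X,\rho}$ and writing $h(\lambda)=H(X+\lambda\gamma_\rho)$, $f(\lambda)=F(X+\lambda\gamma_\rho)$---both convex in $\lambda$, with $(\gamma_\rho,\nabla_N H)=h'(\lambda)$ and $(\gamma_\rho,\nabla_N F)=f'(\lambda)$---the decoupled inequality becomes $-h'(\lambda)f'(\lambda)\le0$, i.e.\ $h'(\lambda)f'(\lambda)\ge0$. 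Thus the Lyapunov property is equivalent to the statement that along every line $I_{X,\rho}$ the derivatives of $H$ and $F$ never have opposite signs.

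It remains to match this sign condition with the partial equilibria criterion of Definition~\ref{def:partEquilGMAL}, and this is the step I expect to require the only genuine care. Here I would invoke strict convexity of $H$ (the strongly positive-definite Hessian assumed for reversibility of the Legendre transform), which makes $h'$ strictly increasing, so that $\operatorname{argmin}_{I_{X,\rho}}h$ is a single point $\lambda^*$ with $h'<0$ to its left and $h'>0$ to its right (with the obvious modification if the minimum falls at an endpoint of the interval). The condition $h'f'\ge0$ then forces $f'\le0$ before $\lambda^*$ and $f'\ge0$ after it, so convexity of $f$ places a minimizer of $f$ at $\lambda^*$; conversely, if $\lambda^*\in\operatorname{argmin}_{I_{X,\rho}}f$, convexity of $f$ gives $f'\le0$ before and $f'\ge0$ after $\lambda^*$, matching the sign of $h'$. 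Hence $h'f'\ge0$ on every $I_{X,\rho}$ is equivalent to $\operatorname{argmin}_{I_{X,\rho}}H\subseteq\operatorname{argmin}_{I_{X,\rho}}F$ for all $X,\rho$, which is precisely Definition~\ref{def:partEquilGMAL}. The main obstacle is exactly this equivalence: one must handle the vanishing of $h'$ at the partial equilibrium and the endpoint cases, and one must note that strict convexity of $H$ is what guarantees a single-point minimizer---without it a function $F$ could satisfy the production inequality yet violate the set inclusion of the criterion (e.g.\ where $H$ is flat along $\gamma_\rho$ the reaction rate vanishes identically, so no dynamical constraint on $F$ arises there), so strict convexity is essential for the necessity direction.
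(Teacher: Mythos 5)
Your proof is correct and follows essentially the same route as the paper's: the independence of the nonnegative kinetic factors $\varphi_\rho$ decouples the entropy production of $F$ into one term per reaction (the paper's conic-combination argument for sufficiency and single-reaction argument for necessity), and a one-dimensional convexity analysis on each line $I_{X,\rho}$ finishes the job. You supply more detail than the paper does --- notably the explicit sign condition $h'(\lambda)f'(\lambda)\geq 0$ and the observation that strict convexity of $H$ along $\gamma_\rho$ (which the paper's standing assumptions on the Hessian provide) is what makes the argmin inclusion genuinely necessary --- but the underlying argument is the same.
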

\begin{proof}
The partial equilibria criterion is necessary because $F(N)$ should be a Lyapunov
function for a reaction mechanism that consists of any single reversible reaction from
the reaction mechanism (\ref{reversibleMechanism}). It is also sufficient because for the
whole reaction mechanism the kinetic equations (\ref{KinUrChemRev}) are the conic
combinations of the kinetic equations for single reversible reactions from the reaction
mechanism (\ref{reversibleMechanism}).
\end{proof}

For the general reaction systems with complex balance we can use the theorem about local
equivalence (Theorem~\ref{theor:equivalence}). Consider a GMAL reaction system with the
mechanism (\ref{stoichiometricequation}) and the complex balance condition.

\begin{theorem}\label{theor:ComplBalGMALH}A convex function $F(N)$ on U is a Lyapunov function for all kinetic equations
(\ref{KinUrChem}) with the given thermodynamic Lyapunov function $H$ and the complex
balance condition (\ref{complexbalanceGENKIN}) if it satisfies the partial equilibria
criterion (Definition~\ref{def:partEquilGMAL}).
\end{theorem}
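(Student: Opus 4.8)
The plan is to reduce the complex-balance case to the already-proven detailed-balance case (Theorem~\ref{theorem:DetBalGenHthGMAL}) by invoking the local equivalence of the two cones of admissible velocities, Theorem~\ref{theor:equivalence}. The central observation is that being a Lyapunov function is a pointwise condition on the velocity vector: $F$ decreases along trajectories at a state $N$ (equivalently at the corresponding potential $\check\mu=\nabla_N H$) precisely when $(\nabla F(N), \dot N)\le 0$ for the actual velocity $\dot N$ produced by the kinetics at that state. Thus $F$ is a universal Lyapunov function for all complex-balance systems with the given mechanism iff $(\nabla F(N),v)\le 0$ for every $v$ in the cone $\mathbf{Q}_{\rm CB}(\check\mu)$, at every state.

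First I would fix an arbitrary state $N\in U$ (relative interior, so that $\nabla F$ and $\check\mu$ are defined) and write the dissipation condition as the requirement that the linear functional $v\mapsto (\nabla F(N),v)$ be nonpositive on the cone $\mathbf{Q}_{\rm CB}(\check\mu)$. By Theorem~\ref{theor:equivalence} this cone coincides with $\mathbf{Q}_{\rm DB}(\check\mu)$, the cone of velocities attainable by detailed-balance systems built on reversible elementary reactions with the same stoichiometry. Hence nonpositivity of $(\nabla F(N),\cdot)$ on $\mathbf{Q}_{\rm CB}(\check\mu)$ is equivalent to nonpositivity on $\mathbf{Q}_{\rm DB}(\check\mu)$, which is exactly the statement that $F$ is a Lyapunov function for the associated detailed-balance kinetics at that state. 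Applying Theorem~\ref{theorem:DetBalGenHthGMAL} (which characterizes the detailed-balance Lyapunov property by the partial equilibria criterion) then yields that the partial equilibria criterion suffices for the complex-balance Lyapunov property. Since the criterion is a condition on $H$, $F$, and the stoichiometric directions $\gamma_\rho$ alone, it does not reference detailed versus complex balance, and the reduction is clean.

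A subtlety worth flagging is the asymmetry between the two theorems: Theorem~\ref{theorem:DetBalGenHthGMAL} is an ``if and only if'', whereas the present statement claims only sufficiency (``if''). The reason is that the local equivalence of cones gives $\mathbf{Q}_{\rm DB}\subseteq\mathbf{Q}_{\rm CB}$ as an obvious inclusion and equality as the content of Theorem~\ref{theor:equivalence}; sufficiency transfers immediately because equality of cones preserves the supporting-hyperplane condition in both directions. Necessity, however, would require that the complex-balance mechanism actually realize every extreme ray as an independently tunable term, which is not guaranteed because the complex-balance constraints (\ref{complexbalanceGENKIN}) couple kinetic factors across reactions sharing a complex; one cannot in general switch on a single cycle in isolation the way one can switch on a single reversible pair under detailed balance. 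This is precisely why the statement is phrased as sufficiency only, and I would not attempt to prove the converse.

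The main obstacle I anticipate is purely expository rather than mathematical: making rigorous the passage from ``$F$ is a Lyapunov function for all such kinetic equations'' to the pointwise cone condition, and confirming that the equivalence of cones holds uniformly at every admissible state $\check\mu$ (including the handling of the equilibrium point, where all Boltzmann factors coincide and both cones degenerate to $\{0\}$, trivially satisfying the inequality). Once the pointwise reformulation is in place, the argument is essentially a one-line invocation of Theorems~\ref{theor:equivalence} and~\ref{theorem:DetBalGenHthGMAL}, so the proof should be short.
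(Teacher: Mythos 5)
Your proposal is correct and follows essentially the same route as the paper: reduce to the detailed-balance case via the local equivalence of the cones $\mathbf{Q}_{\rm DB}(\check{\mu})=\mathbf{Q}_{\rm CB}(\check{\mu})$ (Theorem~\ref{theor:equivalence}) and then invoke Theorem~\ref{theorem:DetBalGenHthGMAL}. Your pointwise reformulation of the Lyapunov property as nonpositivity of $(\nabla F(N),\cdot)$ on the velocity cone, and your remark on why only sufficiency is claimed, simply make explicit what the paper's one-line proof leaves implicit.
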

\begin{proof}
The theorem follows immediately from Theorem~\ref{theorem:DetBalGenHthGMAL} about
Lyapunov functions for systems with detailed balance and the theorem about local
equivalence between systems with local and complex balance
(Theorem~\ref{theor:equivalence}).
\end{proof}

The general $H$-theorems for GMAL is similar to Theorem~\ref{theor:MarkH} for Markov
chains. Nevertheless, many non-classical universal Lyapunov functions are known for
master equations, for example, the $f$-divergences given by Equation (\ref{Morimoto}),
while for a nonlinear reaction mechanism it is difficult to present a single example
different from the thermodynamic Lyapunov function or its monotonic transformations. The
following family of example generalizes Equation (\ref{NonclassicExampleEnt}).
\begin{equation}
F(N)=H(N)+\varepsilon f(N)\prod_{\rho} (\exp(\alpha_{\rho},\check{\mu})-\exp(\beta_{\rho},\check{\mu}))^2
\end{equation}
where $f(N)$ is a non-negative differentiable function and $\varepsilon>0$ is a
sufficiently small number. This function satisfies the conditional equilibria criterion.
For continuous $H(N)$ on compact $U$ with the spectrum of the Hessian uniformly separated
from zero, this $F(N)$ is convex for sufficiently small $\varepsilon>0$.

\section{Generalization: Weakened Convexity Condition, Directional Convexity and Quasiconvexity}

In all versions of the general $H$-theorems we use convexity of the Lyapunov functions.
Strong convexity of the thermodynamic Lyapunov functions $H$ (or even positive
definiteness of its Hessian) is needed, indeed, to provide reversibility of the Legendre
transform $N\leftrightarrow \nabla H$.

\begin{figure}[t]
\centering{
\includegraphics[height=0.35\textwidth]{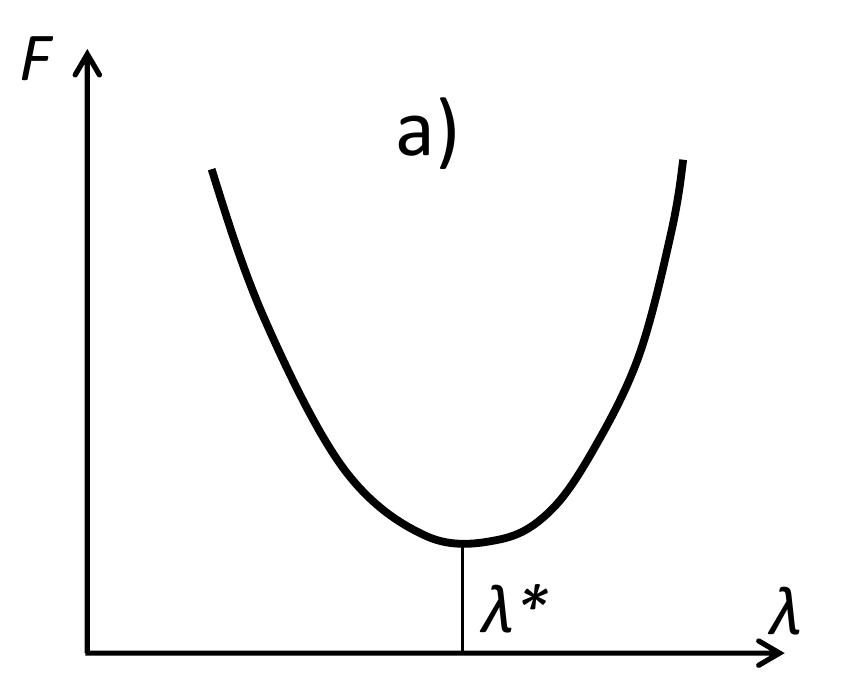} \hspace{5mm}
\includegraphics[height=0.35\textwidth]{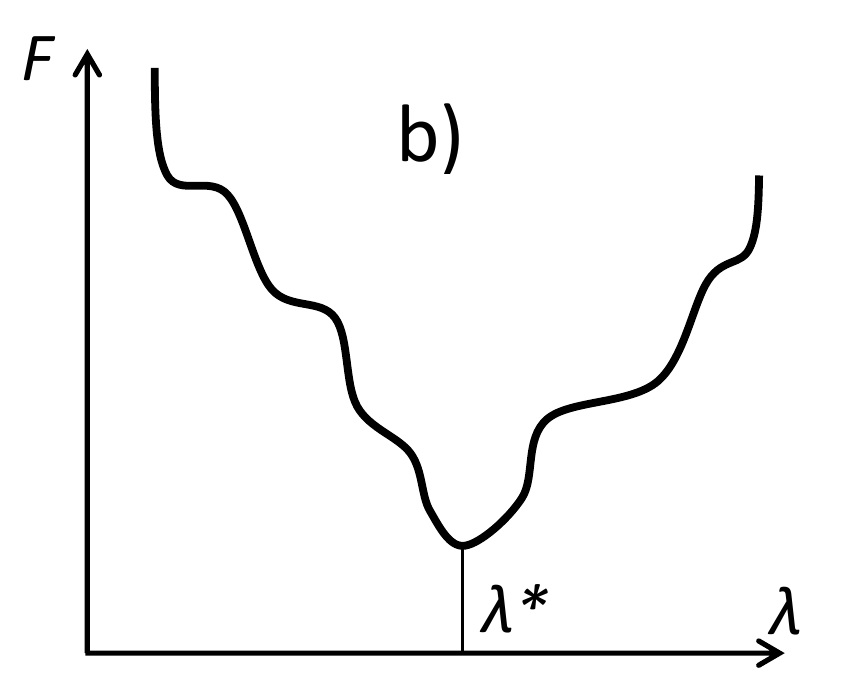}
\caption{Monotonicity on both sides of the minimizer $\lambda^*$ for convex (a) and non-convex but quasiconvex (b) functions \label{fig:ConvexQconv}}}
\end{figure}

For the kinetic Lyapunov functions that satisfy the partial equilibria criterion, we use,
actually, a rather weak consequence of convexity in restrictions on the straight lines
$X+\lambda \gamma$ (where $\lambda \in \mathbb{R}$ is a coordinate on the real line,
$\gamma $ is a stoichiometric vector of an elementary reaction): if
$\lambda^*=\underset{\lambda \in \mathbb{R}}{\operatorname{argmin}} F(X+\lambda \gamma)$
then on the half-lines (rays) $\lambda \geq \lambda^*$ and $\lambda \leq \lambda^*$
function $F(X+\lambda \gamma)$ is monotonic. It does not decrease for $\lambda \geq
\lambda^*$ and does not increase for $\lambda \leq \lambda^*$. Of course, convexity is
sufficient (Figure~\ref{fig:ConvexQconv}a) but a much weaker property is needed
(Figure~\ref{fig:ConvexQconv}b).

A function $F$ on a convex set $U$ is {\em quasiconvex} \cite{Greenberg1971} if all its
sublevel sets are convex. It means that for every $X,Y \in U$
\begin{equation}
F(\lambda X+ (1-\lambda)Y) \leq \max\{F(X), F(Y)\} \mbox{ for all } \lambda \in [0,1]
\end{equation}
In particular, a function $F$ on a segment is quasiconvex if all its sublevel sets are
segments.

Among many other types of convexity and quasiconvexity (see, for example
\cite{Ponstein1967}) two are important for the general $H$-theorem. We do not need
convexity of functions along all straight lines in $U$. It is sufficient that the
function is convex on the straight lines $X+\mathbb{R} \gamma_{\rho}$, where
$\gamma_{\rho}$ are the stoichiometric (direction) vectors of the elementary reactions.

Let $D$ be a set of vectors. A function $F$ is $D$-convex if its restriction to each line
parallel to a nonzero $v\in D$ is convex \cite{Matousek2001}. In our case, $D$ is the set
of stoichiometric vectors of the transitions, $D=\{\gamma_{\rho} \, |\, \rho=1,\ldots ,
m\}$. We can use this {\em directional convexity} instead of convexity in
\mbox{Propositions~\ref{prop:DBMarkH}, \ref{prop:CBMarkH}} and
Theorems~\ref{theor:MarkH}, \ref{theorem:DetBalGenHthGMAL}, \ref{theor:ComplBalGMALH}.

Finally, we can relax the convexity conditions even more and postulate {\em directional
quasiconvexity}~\cite{Hwang1996} for the set of directions $D=\{\gamma_{\rho} \, |\,
\rho=1,\ldots , m\}$. Propositions~\ref{prop:DBMarkH}, \ref{prop:CBMarkH} and
Theorems~\ref{theor:MarkH}, \ref{theorem:DetBalGenHthGMAL}, \ref{theor:ComplBalGMALH}
will be still true if the functions are continuous, quasiconvex in restrictions on all
lines $X+\mathbb{R} \gamma_{\rho}$ and satisfy the partial equilibria criterion.

Relations between these types of convexity are schematically illustrated in
Figure~\ref{Convexities}.

\begin{figure}[t]
\centering{
\includegraphics[height=0.35\textwidth]{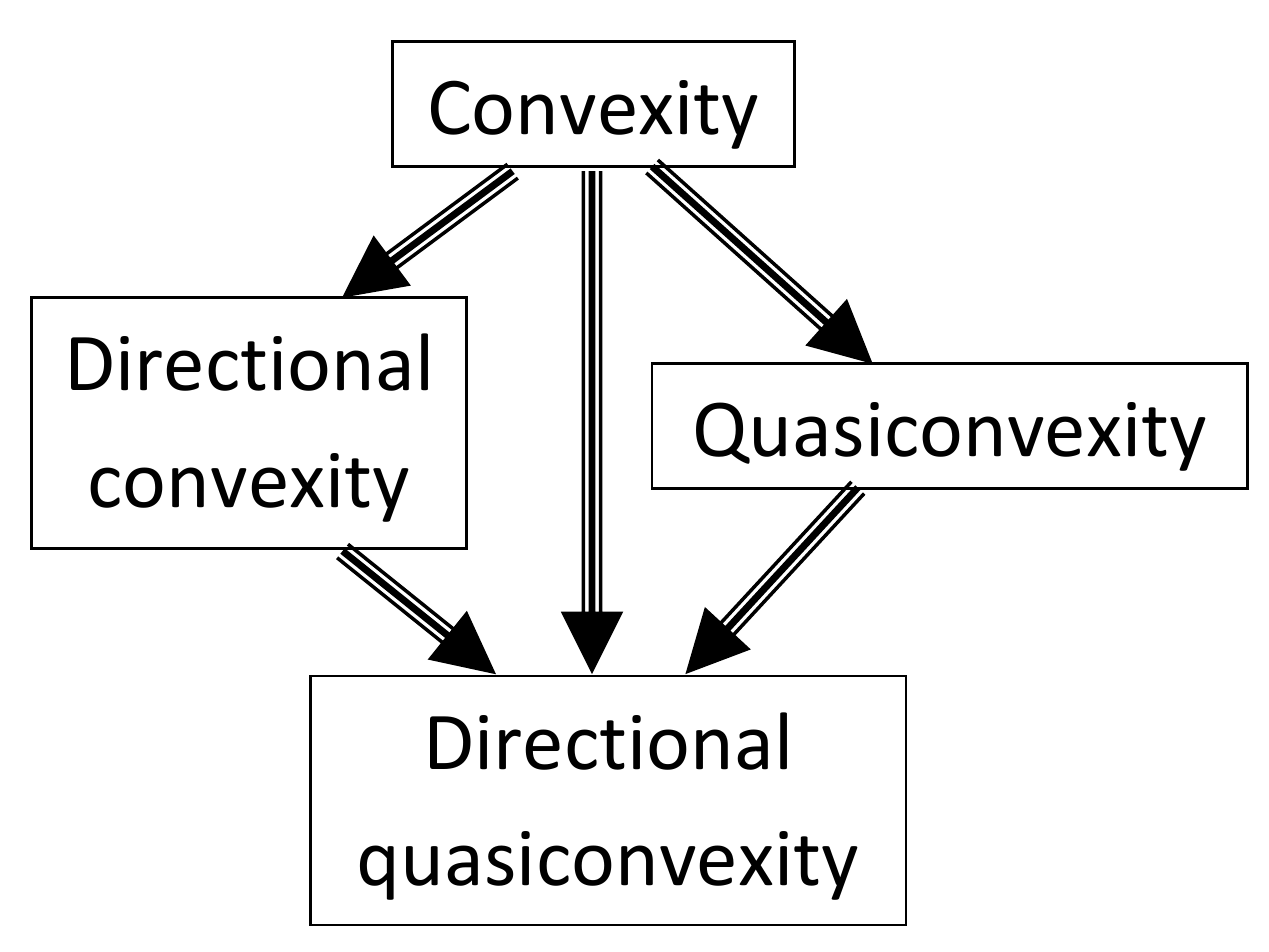}
\caption{Relations between different types of convexity \label{Convexities}}}
\end{figure}

\section{Discussion}

Many non-classical entropies are invented and applied to various problems in physics and
data analysis. In this paper, the general necessary and sufficient criterion for the
existence of $H$-theorem is proved. It has a simple and physically transparent form: the
convex divergence (relative entropy) should properly describe the partial equilibria for
transitions $A_i \rightleftharpoons A_j$. It is straightforward to check this partial
equilibria criterion. The applicability of this criterion does not depend on the detailed
balance condition and it is valid both for the class of the systems with detailed balance
and for the general first order kinetics without this assumption.

If an entropy has no $H$-theorem (that is, it violates the second law and the data
processing lemma) then there should be unprecedentedly strong reasons for its use.
Without such strong reasons we cannot employ it. Now, I cannot find an example of
sufficiently strong reasons but people use these entropies in data analysis and we have
to presume that they may have some hidden reasons and that these reasons may be
sufficiently strong. We demonstrate that this problem arises even for such popular
divergences like Euclidean distance or Itakura-Saito distance.

The general $H$-theorem is simply a reduction of a dynamical question (Lyapunov
functionals) to a static one (partial equilibria). It is not surprising that it can be
also proved for nonlinear Generalized Mass Action Law kinetics. Here kinetic systems with
complex balance play the role of the general Markov chains, whereas the systems with
detailed balance correspond to the reversible Markov chains. The requirement of convexity
of Lyapunov functions can be relaxed to the directional convexity (in the directions of
reactions) or even directional quasiconvexity.

For the reversible Markov chains presented by Equations (\ref{QuasiChemKol}) with the
classical entropy production formula (\ref{entropyprodDelBal}), every universal Lyapunov
function $H$ should satisfy inequalities
\begin{equation}
\left(\frac{p_j}{p_j^*}-\frac{p_i}{p_i^*}\right)
\left(\frac{\partial H(P)}{\partial p_j}-\frac{\partial H(P)}{\partial p_i}\right)\leq 0
\mbox{ for all } i, j, i\neq j
\end{equation}
These inequalities are closely related to another generalization of convexity, the Schur
convexity \cite{MarshallOlkin2011}. They turn into the definition of the Schur convexity
when equilibrium is the equidistribution with $p_i^*=1/n$ for all $i$. Universal Lyapunov
functions for nonlinear kinetics give one more generalization of the Schur convexity.

Introduction of many non-classical entropies leads to the ``uncertainty of uncertainty''
phenomenon: we measure uncertainty by entropy but we have uncertainty in the entropy
choice \cite{GorbanCAMWA2013}. The selection of the appropriate entropy and introduction
of new entropies are essentially connected with the class of kinetics. $H$-theorems in
physics are formalizations of the second law of thermodynamics: entropy of isolated
systems should increase in relaxation to equilibrium. If we know the class of kinetic
equations (for example, the Markov kinetics given by master equations) then the $H$
theorem states that it is possible to use this entropy with the given kinetics. If we
know the entropy and are looking for kinetic equations then such a statement turns into
the thermodynamic restriction on the thermodynamically admissible kinetic equations. For
information processing, the class of kinetic equations describes possible manipulations
with data. In this case, the $H$-theorems mean that under given class of manipulation the
information does not increase. It is not possible to compare different entropies without
any relation to kinetics. It is useful to specify the class of kinetic equations, for
which they are the Lyapunov functionals. For the GMAL equations, we can introduce the
dynamic equivalence between divergences (free entropies or conditional entropies). Two
functionals $H(N)$ and $F(N)$ in a convex set $U$ are {\em dynamically consistent} with
respect to the set of stoichiometric vectors $\{\gamma_{\rho}\}$ ($\rho=1, \ldots , m$)
if

\begin{enumerate}
\item[(1)]{$F$ and $H$ are directionally quasiconvex functions in directions
    $\{\gamma_{\rho}\}$ ($\rho=1, \ldots , m$)}
\item[(2)]{For all $\rho=1, \ldots , m$ and $N\in U$
$$ (\nabla_N F(N),\gamma_{\rho})(\nabla_N H(N),\gamma_{\rho})\geq 0$$}
\end{enumerate}
For the Markov kinetics, the partial equilibria criterion is sufficient for a convex
function $H(P)$ to be dynamically consistent with the relative entropy $\sum_i p_i
(\ln(p_i/p_i^*)-1)$ in the unit simplex $\Delta_n$. For GMAL, any convex function $H(N)$
defines a class of kinetic equations. Every reaction mechanism defines a family of
kinetic equations from this class and a class of Lyapunov functions $F$, which are
dynamically consistent with $H$. The main message of this paper is that it is necessary
to discuss the choice of the non-classical entropies in the context of kinetic equations.

\section*{Appendix: Quasiequilibrium entropies and forward--invariant peeling}

\subsection*{A1. Maximum of quasiequilibrium entropies -- a new family \\ of universal Lyapunov functions for generalized mass action law}

The general $H$ theorems for the Generalized Mass Action Law (GMAL) and for its linear
version, master equation, look very similar. For the linear systems many Lyapunov
functionals are known in the explicit form: for every convex function $h$ on the positive
ray $\mathbb{R}_+$ we have such a functional (\ref{Morimoto}). On the contrary, for the
nonlinear systems we, typically, know the only Lyapunov function $H$, it is the
thermodynamic potential which is used for the system construction. The situation looks
rather intriguing and challenging: for every finite reaction mechanism there should be
many Lyapunov functionals, but we cannot construct them. (There is no chance to find many
Lyapunov functions for {\em all} nonlinear mechanisms together under given thermodynamics
because in this case the cone of the possible velocities $\dot{N}$ is a half-space and
locally there is the only divergence with a given tangent hyperplane. Globally, such a
divergence can be given by an arbitrary monotonic function on the thermodynamic tree
\cite{G11984,GorbanSIADS2013}).

In this Appendix, we present a general procedure for the  construction of a family of new
Lyapunov functionals from $H$ for nonlinear GMAL kinetics and a given reaction mechanism.
We will use two auxiliary construction, the quasiequilibrium entropies (or divergences)
and the forward--invariant peeling.

Let us consider isochoric systems (constant volume $V$). For them, concentrations $c_i$
(intensive variables) and amounts $N_i$ (extensive variables are proportional with a
constant extensive factor $V$ and we take $N_i=c_i$ in a standard unit volume without
loss of generality.

We assume that $H$ is strongly  convex in the second approximation in $\mathbb{R}_+^n$.
This means that it is twice differentiable and the Hessian $\partial^2 H/\partial N_i
\partial N_j$ is positively definite in $\mathbb{R}_+^n$. In addition, we assume
logarithmic singularities of the partial derivatives of $H$ near zeros of concentrations:

\begin{equation}\label{FreeEnLogSIng}
H(N)=\sum_i N_i (\ln c_i-1 + \mu_{0i}(c))\, ,
\end{equation}
where the functions $\mu_{0i}(c)$ are bounded continuously differentiable functions in a
vicinity of the non-negative orthant. This assumption corresponds to the physical
hypothesis about the logarithmic singularity of the chemical potentials, $\mu_i= RT\ln
c_i +\ldots$ where $\ldots$ stands for a continuous function of $c,T$, and to the
supposition about the classical mass action law for small concentrations. Assume also
that all the described properties of $H$ hold for its restrictions on the faces of
$\overline{\mathbb{R}_+^n}$: these restrictions are strictly convex, differentiable in
the relative interior, etc.

For every linear subspace $E\subset \mathbb{R}^n$ and a given composition vector $N^0 \in
\mathbb{R}_+^n$ the {\em quasiequilibrium composition} is the partial equilibrium
$$N^*_{E}(N^0)= \underset{{N\in
(N^0+E)\cap \mathbb{R}_+^n}}{\operatorname{argmin}}H(N)$$

The {\em quasiequilibrium divergence} is the value of $H$ at the partial equilibrium:
$$H^*_{E}(N^0) = \min_{N\in (N^0+E)\cap \mathbb{R}_+^n} H(N)$$

Due to the assumption about strong convexity of $H$ and logarithmic singularity
(\ref{FreeEnLogSIng}), for a positive vector $N^0 \in \mathbb{R}_+^n$ and a subspace $E
\subset \mathbb{R}^n$ the quasiequilibrium composition $N^*_{E}(N^0)$ is also positive.

Such quasiequilibrium ``entropies'' are discussed by Jaynes \cite{Jaynes1965}. He
considered the quasiequilibrium $H$-function as the Boltzmann $H$-function $H_{\rm B}$ in
contrast to the original Gibbs $H$-function, $H_{\rm G}$. The Gibbs $H$-function is
defined for the distributions on the phase space of the mechanical systems. The Boltzmann
function is a conditional minimum of the Gibbs function, therefore the inequality holds
$H_{\rm B} \leq H_{\rm G}$ \cite{Jaynes1965}. Analogously,
$$H^*_{E}(N^0)\leq H(N^0)$$ and this inequality turns into the equality if and only if
$N^0$ is the quasiequilibrium state for the subspace $E$: $N^0=N^*_{E}(N^0)$. After
Jaynes, these functions are intensively used in the discussion of time arrow
\cite{Ocherki1986,Lebowith1993,GoldsteinLebow2004}. In the theory of information,
quasiequilibrium was studied in detail under the name {\em information projection} (or
I-projection) \cite{CsiszarMatus2003}.

Analysis of partial equilibria is useful in chemical engineering in the presence of
uncertainty: when the reaction rate constants are unknown then the chains of partial
equilibria together with information about the thermodynamically preferable directions of
reactions may give some important information about the process \cite{GorbanKagan2006}.

Let us prove several elementary properties of $H^*_{E}(N)$. Let $E$ and $L$ be subspaces
of $\mathbb{R}^n$.
\begin{proposition}
\begin{enumerate}
\item The function $H^*_{E}(N)$ is convex.
\item If $E$ is a proper subspace of $R^n$ then the function $H^*_{E}(N)$ is not
    strictly convex: for each $N \in \mathbb{R}_+^n$ the level set $\{N' \, | \,
    H^*_{E}(N')=H^*_{E}(N)\}$ includes faces $(N+E)\cap \mathbb{R}_+^n$.
\item The function $H^*_{E}([N])$ is strictly convex on the quotient space
    $\mathbb{R}_+^n/E$ (here, $[N] \in \mathbb{R}_+^n/E$ is the equivalence class,
    $[N]=(N+E)\cap \mathbb{R}_+^n$).
\item If  $E \subseteq L$ then $H^*_{E}(N) \geq H^*_{L}(N)$ and this inequality turns
    into the equality if and only if the corresponding quasiequlibria coincide:
    $N^*_{E}(N)=N^*_{L}(N)$ (this is a generalization of the Jaynes inequality
    $H_{\rm B} \leq H_{\rm G}$).
\item If $N=N^*_E(N)$ then  $H^*_E(N) \geq H^*_L(N)$ for all $L$ and this inequality
    turns into the equality if and only if $N=N^*_{(E+L)}(N)$.
\end{enumerate}
\end{proposition}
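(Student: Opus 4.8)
The plan is to derive all five statements from one structural fact: $H^*_E$ is a partial minimization (infimal projection) of $H$ along the subspace $E$. I would set $G(N,e)=H(N+e)$ when $N+e\in\overline{\mathbb{R}_+^n}$ and $G(N,e)=+\infty$ otherwise; since $H$ is convex and $\overline{\mathbb{R}_+^n}$ is convex, $G$ is jointly convex on $\mathbb{R}^n\times E$, and $H^*_E(N)=\inf_{e\in E}G(N,e)$. Statement~(1) is then immediate, because the partial infimum of a jointly convex function over its second (here linear, hence convex) argument is convex in the first. The infimum is attained at a positive point by the strong-convexity and logarithmic-singularity hypotheses (the positivity of $N^*_E$ recorded just above the proposition), so $H^*_E$ is finite and the ``$\inf$'' is a genuine minimum.

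For~(2) I would observe that the feasible set $(N+E)\cap\overline{\mathbb{R}_+^n}$, and hence the entire minimization problem, depends on $N$ only through the coset $N+E$; consequently $H^*_E(N')=H^*_E(N)$ whenever $N'\in(N+E)\cap\mathbb{R}_+^n$, so $H^*_E$ is constant on every face $(N+E)\cap\mathbb{R}_+^n$. As $E\neq\{0\}$ this face contains a nondegenerate segment, which rules out strict convexity and shows the level sets contain these faces. For~(3) the same coset computation shows $H^*_E$ descends to a well-defined function on $\mathbb{R}_+^n/E$, and I would prove strict convexity there directly: given distinct classes $[N_1]\neq[N_2]$, the quasiequilibria $N_1^*=N^*_E(N_1)$ and $N_2^*=N^*_E(N_2)$ lie in different cosets, hence $N_1^*\neq N_2^*$; since $\lambda N_1^*+(1-\lambda)N_2^*$ is feasible for the class $[\lambda N_1+(1-\lambda)N_2]$, strict convexity of $H$ yields
\[
H^*_E(\lambda N_1+(1-\lambda)N_2)\leq H(\lambda N_1^*+(1-\lambda)N_2^*)<\lambda H^*_E(N_1)+(1-\lambda)H^*_E(N_2).
\]

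Statement~(4) is monotonicity under inclusion: $E\subseteq L$ gives $(N+E)\cap\overline{\mathbb{R}_+^n}\subseteq(N+L)\cap\overline{\mathbb{R}_+^n}$, so the minimum over the smaller set is at least the minimum over the larger one, $H^*_E(N)\geq H^*_L(N)$. Since $N^*_E(N)$ is feasible for the $L$-problem, equality forces it to be an $L$-minimizer too, and uniqueness of the minimizer (strict convexity of $H$) gives $N^*_E(N)=N^*_L(N)$; the converse is trivial. For~(5), the hypothesis $N=N^*_E(N)$ gives $H^*_E(N)=H(N)$, while $N$ is always feasible for the $L$-problem, so $H^*_L(N)\leq H(N)=H^*_E(N)$. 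Equality holds iff $N=N^*_L(N)$, and to turn this into the stated condition $N=N^*_{E+L}(N)$ I would use the first-order optimality characterization: at an interior minimizer, $N=N^*_M(N)$ for a subspace $M$ is equivalent to $\nabla H(N)\perp M$. Then $N=N^*_E(N)$ and $N=N^*_L(N)$ give $\nabla H(N)\perp E$ and $\nabla H(N)\perp L$, hence $\nabla H(N)\perp E+L$, i.e. $N=N^*_{E+L}(N)$; conversely $L\subseteq E+L$ yields $\nabla H(N)\perp L$.

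\textbf{Main obstacle.} The only genuinely delicate point is the optimality characterization underlying the equality clauses of~(4) and~(5): I must use that strong convexity together with the logarithmic singularity of $\nabla H$ near the boundary (Equation~(\ref{FreeEnLogSIng})) pushes every quasiequilibrium into the open orthant, so each constrained minimizer on an affine slice $N+M$ is an ordinary interior critical point and the clean condition $\nabla H(N)\perp M$ holds with no boundary (KKT) multipliers. Granting that, the additivity $\nabla H(N)\perp E$ and $\nabla H(N)\perp L$ $\Rightarrow$ $\nabla H(N)\perp E+L$ is precisely what makes the sum $E+L$ surface in~(5). Everything else is routine convex-analysis bookkeeping.
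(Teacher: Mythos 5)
Your proof is correct and follows essentially the same route as the paper: your item 1 is the paper's three-inequality computation recast as partial minimization (infimal projection) of a jointly convex function; items 2 and 4 are identical to the paper's (constancy on cosets by construction; monotonicity of minima under enlargement of the feasible set, with the equality case settled by uniqueness of the minimizer of the strictly convex $H$). Two of your steps are in fact tighter than the paper's own write-up. In item 3 you locate the strict inequality in the Jensen step, $H(\lambda N^*_E(N^1)+(1-\lambda)N^*_E(N^2))<\lambda H(N^*_E(N^1))+(1-\lambda)H(N^*_E(N^2))$, justified by $N^*_E(N^1)\neq N^*_E(N^2)$ (they lie in distinct cosets when $N^1-N^2\notin E$) together with strict convexity of $H$; the paper instead asserts strictness of the other inequality, $H(\lambda N^*_E(N^1)+(1-\lambda)N^*_E(N^2))>H^*_E(\lambda N^1+(1-\lambda)N^2)$, which need not hold in general (for a quadratic $H$ the quasiequilibrium map is affine and that inequality degenerates to equality), so your placement is the one that actually carries the argument. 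In item 5 the paper says the claim ``follows directly from the definitions,'' which covers only the easy direction ($L\subseteq E+L$ forces $N=N^*_L(N)$ from $N=N^*_{E+L}(N)$); your interior first-order characterization $N=N^*_M(N)\Leftrightarrow \nabla H(N)\perp M$ --- legitimate because the logarithmic singularity (\ref{FreeEnLogSIng}) keeps all quasiequilibria in the open orthant, so no boundary multipliers appear --- is exactly what is needed to pass from $\nabla H(N)\perp E$ and $\nabla H(N)\perp L$ to $\nabla H(N)\perp(E+L)$ and back. No gaps.
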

\begin{proof}
\begin{enumerate}
\item{Convexity of $H^*_{E}(N)$ means that for every positive $N^1$ and $N^2$ and a
    number $\lambda \in [0,1]$ the inequality holds: $$H^*_{E}(\lambda N^1+
    (1-\lambda)N^2)\leq \lambda H^*_{E}(N^1)+(1-\lambda)H^*_{E}(N^2)$$ Let us prove
    this inequality. First, $$H(\lambda N^*_E (N^1)+(1-\lambda) N^*_E (N^2) \leq
    \lambda H(N^*_{E}(N^1))+(1-\lambda)H(N^*_{E}(N^2))$$ because convexity $H$.
    Secondly, $H(N^*_{E}(N^{1,2}))=H^*_{E}(N^{1,2})$ by definition and the last
    inequality reads $$H(\lambda N^*_E (N^1)+(1-\lambda) N^*_E (N^2) \leq \lambda
    H^*_{E}(N^1)+(1-\lambda)H^*_{E}(N^2)$$ Finally, $N^*_{E}(N^{1,2}) \in N^{1,2}+E$,
    hence, $$\lambda N^*_E (N^1)+(1-\lambda) N^*_E (N^2)\in \lambda N^1+(1-\lambda)
    N^2 + E$$ and $H(\lambda N^*_E (N^1)+(1-\lambda) N^*_E (N^2))\geq H^*_{E}(\lambda
    N^1+ (1-\lambda)N^2)$ because the last value is the minimum of $H$ on the linear
    manifold $\lambda N^1+(1-\lambda) N^2 + E$. Inequality is proven.}
\item{Indeed, the function $H^*_E$ is constant on the set $(N+E)\cap \mathbb{R}_+^n$,
    by construction.}
\item{In the proof of item 1 the inequality $H(\lambda N^*_E (N^1)+(1-\lambda) N^*_E
    (N^2))>  H^*_{E}(\lambda N^1+ (1-\lambda)N^2)$ is strong for $\lambda \neq 0,1$
    and $N^1-N^2\notin E$. Therefore, under these conditions the convexity inequality
    is strong.}
\item{If $E \subseteq L$ then $N+E\subset N+L$ and $H^*_{E}(N) \geq H^*_{L}(N)$ by
    definition of $H^*$ as a conditional minimum. This inequality turns into the
    equality if and only if the corresponding quasiequlibria coincide because of
    strong convexity of $H$.}
\item{This follows directly from the definitions of $H^*_E(N)$ as a conditional
    minimum and $N^*_E(N)$ as the corresponding minimizer of $H$ on $N+E \cap
    \mathbb{R}^n_+$}
\end{enumerate}
\end{proof}

Consider the reversible reaction mechanism (\ref{reversibleMechanism}) with the set of
the stoichiometric vectors $\Upsilon$. For each $\Gamma \subset \Upsilon$ we can take
$E={\rm Span}(\Gamma)$ and define the quasiequilibrium. The subspace ${\rm Span}(\Gamma)$
may coincide for different $\Gamma$ and the quasiequilibrium depends on the subspace $E$
only, therefore, it is useful to introduce the set of these subspaces for a given
reaction mechanism (\ref{reversibleMechanism}). Let $\mathcal{E}_{\Upsilon}$ be the set
of all subspaces of the form $E={\rm Span}(\Gamma)$ ($\Gamma \subset \Upsilon$). For each
dimension $k$ we denote $\mathcal{E}^k_{\Upsilon}$ the set of $k$-dimensional subspaces
from $\mathcal{E}_{\Upsilon}$.

For each dimension $k=0,\ldots, {\rm rank} (\Upsilon)$ we define the function $H^{k,
\max}_{\Upsilon}$: $H^{0, \max}_{\Upsilon}=H$, and for $0<k \leq {\rm rank} (\Upsilon)$
\begin{equation}\label{Hmax}
H^{k, \max}_{\Upsilon}(N) =\max_{E \in \mathcal{E}^k_{\Upsilon}} H^*_E(N)
\end{equation}

Immediate consequence of the definition of the quasiequilibrium divergence and
Theorem~\ref{theorem:DetBalGenHthGMAL} is:
\begin{proposition}\label{Prop:HmaxDetBal}
$H^{1,\max}_{\Upsilon}(N)$ is a Lyapunov function in $\mathbb{R}_+^n$ for all kinetic
equations (\ref{KinUrChemRev}) with the given thermodynamic Lyapunov function $H$,
reaction rates presented by Equation (\ref{DBreactionrate}) (detailed balance) and the
reversible reaction mechanism with the set of stoichiometric vectors $\Gamma \subseteq
\Upsilon$.
\end{proposition}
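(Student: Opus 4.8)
The plan is to invoke Theorem~\ref{theorem:DetBalGenHthGMAL}, which reduces the Lyapunov property to two things: convexity of $H^{1,\max}_{\Upsilon}$ and the partial equilibria criterion (Definition~\ref{def:partEquilGMAL}) relative to $H$ and the given mechanism. Convexity I would dispatch immediately: each $H^*_E$ is convex by item~1 of the preceding Proposition, and a pointwise maximum of convex functions is convex, so $H^{1,\max}_{\Upsilon}(N)=\max_{E\in\mathcal{E}^1_{\Upsilon}}H^*_E(N)$ is convex. All the real content then lies in the partial equilibria criterion.

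To check the criterion I would fix a stoichiometric vector $\gamma_{\rho}$ of the mechanism ($\gamma_\rho\in\Gamma\subseteq\Upsilon$) and a point $X$, set $E_\rho={\rm Span}(\gamma_\rho)$, and let $N^*=N^*_{E_\rho}(X)$ be the minimizer of $H$ on the line $I_{X,\rho}$. Because $\gamma_\rho\in\Upsilon$, the subspace $E_\rho$ lies in $\mathcal{E}^1_{\Upsilon}$, so $H^*_{E_\rho}$ is one of the terms in the maximum defining $H^{1,\max}_{\Upsilon}$; this is the structural reason the construction works. By strong convexity of $H$ the point $N^*$ is the unique minimizer of $H$ on $I_{X,\rho}$, so it suffices to show that $N^*$ also minimizes $H^{1,\max}_{\Upsilon}$ on that line.

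The argument rests on two facts about the single term $H^*_{E_\rho}$. First, by item~2 of the preceding Proposition $H^*_{E_\rho}$ is constant along the fiber $(X+E_\rho)\cap\mathbb{R}_+^n=I_{X,\rho}$, with constant value $H^*_{E_\rho}(X)=H(N^*)$. Second, by the Jaynes inequality $H^*_E\leq H$, which is tight precisely at the quasiequilibrium, one has $H^*_{E_\rho}(N^*)=H(N^*)$ while $H^*_E(N^*)\leq H(N^*)$ for every other $E\in\mathcal{E}^1_{\Upsilon}$, whence
\begin{equation}
H^{1,\max}_{\Upsilon}(N^*)=\max_{E\in\mathcal{E}^1_{\Upsilon}}H^*_E(N^*)=H(N^*).
\end{equation}
For any other $N\in I_{X,\rho}$ the maximum is bounded below by its $E_\rho$ term, so the constancy just noted gives
\begin{equation}
H^{1,\max}_{\Upsilon}(N)\geq H^*_{E_\rho}(N)=H(N^*)=H^{1,\max}_{\Upsilon}(N^*).
\end{equation}
Thus $N^*$ minimizes $H^{1,\max}_{\Upsilon}$ on $I_{X,\rho}$, giving $\operatorname{argmin}_{N\in I_{X,\rho}}H\subseteq\operatorname{argmin}_{N\in I_{X,\rho}}H^{1,\max}_{\Upsilon}$, which is exactly the criterion.

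I expect the only delicate step to be the identification $H^{1,\max}_{\Upsilon}(N^*)=H(N^*)$: it is precisely here that one must combine the tightness of the Jaynes inequality at the quasiequilibrium (so the term indexed by the reaction direction rises to meet $H$ at $N^*$ while no competing term can overshoot $H(N^*)$) with the flatness of that same term along the reaction line. Everything else is bookkeeping. Once the criterion is established, Theorem~\ref{theorem:DetBalGenHthGMAL} immediately yields the Lyapunov property and the proof is complete.
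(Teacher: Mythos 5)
Your proof is correct, and it takes a genuinely different route from the paper's. The paper's argument is local: it distinguishes the case where the partial equilibrium $N^{**}=N^*_{\mathbb{R}\gamma}(N^0)$ is a partial equilibrium for the single subspace $\mathbb{R}\gamma$ only (so that $H^{1,\max}_{\Upsilon}$ coincides with the constant function $H^*_{\mathbb{R}\gamma}$ near $N^{**}$, and a convex function constant on a subinterval is minimal there) from the case where $N^{**}$ equilibrates several directions simultaneously, in which case it identifies the active set $\mathcal{B}$, shows $N^{**}$ minimizes $H$ on $N^{**}+\bigoplus_{E\in\mathcal{B}}E$, and invokes the fact that local minima of convex functions are global. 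You bypass the case analysis entirely with a global two-line estimate: the term $H^*_{E_\rho}$ is constant at the value $H(N^*)$ along the entire fiber $I_{X,\rho}$ and bounds the maximum from below everywhere on that line, while at $N^*$ itself the Jaynes inequality $H^*_E\leq H$ caps every competing term at $H(N^*)$, which the $E_\rho$ term attains; hence $H^{1,\max}_{\Upsilon}(N^*)=H(N^*)$ is the minimum of $H^{1,\max}_{\Upsilon}$ on the line. This is shorter, needs no neighborhood or active-set analysis, and uses convexity of $H^{1,\max}_{\Upsilon}$ only as a hypothesis of Theorem~\ref{theorem:DetBalGenHthGMAL} rather than inside the minimization argument (so it would survive a relaxation to directional quasiconvexity more transparently). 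What the paper's longer route buys in exchange is the explicit description of the local structure of $H^{1,\max}_{\Upsilon}$ near a multiple partial equilibrium, which is then reused in the subsequent quadratic-approximation analysis; your argument does not yield that by-product, but it fully establishes the proposition.
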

\begin{proof}
$H^{1, \max}_{\Upsilon}(N)$ is a convex function as the maximum of several convex
functions. Let us consider a restriction of this function onto an interval of the
straight line $I=(N^0+\mathbb{R} \gamma)\cap \mathbb{R}^n_+$ for a stoichiometric vector
$\gamma \in \Upsilon$. The partial equilibrium $N^{**}=N^*_{\{\mathbb{R} \gamma\}}(N^0)$
is the minimizer of $H(N)$ on $I$. Assume that this partial equilibrium is not a partial
equilibrium for other 1D subspaces $E\in \mathcal{E}^1_{\Upsilon}$. Then for all $E\in
\mathcal{E}^1_{\Upsilon}$ ($E\neq \mathbb{R} \gamma$) $H^*_E(N^{**})<H^*_{\{\mathbb{R}
\gamma\}}(N^0)$ and
$$H^{1,\max}_{\Upsilon}(N)=H^*_{\{\mathbb{R} \gamma\}}(N)$$ in some vicinity of
$N^{**}$. This function is constant on $I$.

If a convex function $h$ on an interval $I$ is constant on an subinterval $J=(a,b)
\subset I$ ($a \neq b$) then the value $h(J)$ is the minimum of $h$ on $I$.  Therefore,
$N^*_{\{\mathbb{R} \gamma\}}(N^0)$ is a minimizer of the convex function $H^{1,
\max}_{\Upsilon}(N)$ on $I$ in the case, when $N^{**}$ is not a partial equilibrium for
other 1D subspaces $E\in \mathcal{E}^1_{\Upsilon}$ ($E\neq \mathbb{R} \gamma$).

Let us assume now that the partial equilibrium $N^{**}=N^*_{\{\mathbb{R} \gamma\}}(N^0)$
is, at the same time, the partial equilibrium for several other $E\in
\mathcal{E}^1_{\Upsilon}$. Let $\mathcal{B}$ be the set of subspaces  $E\in
\mathcal{E}^1_{\Upsilon}$ for which $N^{**}$ is a partial equilibrium, i.e.
$H(N^{**})=H^*_E(N^{**})$. In this case, for all $E \notin \mathcal{B}$ ($E\in
\mathcal{E}^1_{\Upsilon}$) $H^*_E(N^{**})<H(N^{**})$. Therefore, in a sufficiently small
vicinity of $N^{**}$ the function $H^{1,\max}_{\Upsilon}(N)$ can be defined as
$$H^{1,\max}_{\Upsilon}(N)=\max_{E\in \mathcal{B}}H^*_E(N) $$
Point $N^{**}$ is a minimizer of $H$ on a linear manifold $N^{**}+(\bigoplus_{E\in B}
E)$. It is also a minimizer of convex function $H^{1,\max}_{\Upsilon}(N)$ on this linear
manifold, particularly, it is a minimizer of this convex function on the interval $I$.
(Convexity plays a crucial role in this reasoning because for convex functions the local
minima are the global ones.)

We proved that the function $H^{1, \max}_{\Upsilon}(N)$ satisfies the partial equilibria
criterion and, hence, it is a Lyapunov function in $\mathbb{R}_+^n$ for all kinetic
equations (\ref{KinUrChemRev}) with the given thermodynamic Lyapunov function $H$,
reaction rates presented by Equation (\ref{DBreactionrate}) (detailed balance) and the
reversible reaction mechanism with the set of stoichiometric vectors $\Gamma \subseteq
\Upsilon$.
\end{proof}
Let a positive vector $N^{**}$ be a minimizer of $H$ on  $(N^{**}+E)\cap \mathbb{R}^n$,
where $E$ is a linear subspace of $\mathbb{R}^n$. It may be useful to represent the
structure of the Lyapunov function $H^{1,\max}_{\Upsilon}(N)$ near $N^{**}$ in the
quadratic approximation. Assume that $H(N)$ is $m$ times continuously differentiable in
$\mathbb{R}^n$ for sufficiently large $m$. In the vicinity of $N^{**}$
$$H(N)-H(N^{**})=(DH)_{**}(\Delta)+ \frac{1}{2}\langle \Delta, \Delta \rangle_{**}+o(\|\Delta\|^2)$$
where $\Delta= N-N^{**}$, $(DH)_{**}=(DH)\left|_{N^{**}}\right.$ is the differential of
$H$ at $N^{**}$, and $\langle \bullet,\bullet \rangle_{**}=(\bullet, (D^2 H)_{**}
\bullet)$ is the {\em entropic inner product}, with the positive symmetric operator $(D^2
H)_{**}=(D^2 H)\left|_{N^{**}}\right.$ (the second differential of $H$ at $N^{**}$). The
entropic inner product is widely used in kinetics and nonequilibrium thermodynamics, see,
for example \cite{G11984,Ocherki1986,UNIMOLD,GorGorKar2004,InChLANL,GorKarLNP2005}.

Let us split $\mathbb{R}^n$ into the orthogonal sum: $\mathbb{R}^n=E\oplus E^{\bot}$,
$E^{\bot}$ is the orthogonal supplement to $E$ in the entropic inner product $\langle
\bullet,\bullet \rangle_{**}$. Each vector $\Delta\in \mathbb{R}^n$ is represented in the
form $\Delta=\Delta^{\|}\oplus \Delta^{\bot}$, where $\Delta^{\|} \in E$ and
$\Delta^{\bot} \in E^{\bot}$. By the definition of the partial equilibrium as a
conditional minimizer of $H$, $(DH)_{**}(\Delta^{\|})=0$, and we have the following
representation of $H$
$$H(N)-H(N^{**})=(DH)_{**}(\Delta^{\bot})+ \frac{1}{2}\langle \Delta^{\|}, \Delta^{\|} \rangle_{**}
+\frac{1}{2}\langle \Delta^{\bot}, \Delta^{\bot} \rangle_{**}  + o(\|\Delta\|^2)$$ In particular, when $\Delta \in E$ ($\Delta^{\bot}=0$), this formula gives
$$H(N)-H(N^{**})=\frac{1}{2}\langle \Delta, \Delta \rangle_{**} + o(\|\Delta\|^2)$$

From these formulas, we easily get  the approximations of $N^*_E(N)$ and $H^*_E(N)$ in a
vicinity of $N^{**}$. Let $N-N^{**}=\Delta=\Delta^{\|}\oplus \Delta^{\bot}$. Then
\begin{equation}
N^*_E(N)-N^{**}=\Delta^{\bot}+o(\|\Delta\|)
\end{equation}
 in particular, $(N^*_E(N)-N^{**})^{\bot}=\Delta^{\bot}$ (exactly) and
 $(N^*_E(N)-N^{**})^{\|}=o(\|\Delta\|)$. Therefore,
\begin{equation}
H^*_E(N)-H(N^{**})=H(N^*_E(N))-H(N^{**})=(DH)_{**}(\Delta^{\bot})+\frac{1}{2}\langle
\Delta^{\bot}, \Delta^{\bot} \rangle_{**}  + o(\|\Delta\|^2)
\end{equation}
If $E$ is a 1D subspace with the directional vector $\gamma$ ($E=\mathbb{R} \gamma$) then
$$\Delta^{\|}=\frac{\gamma \langle \gamma, \Delta  \rangle_{**}}{\langle \gamma, \gamma  \rangle_{**}}, \;\;
\Delta^{\bot} = \Delta - \frac{\gamma \langle \gamma, \Delta  \rangle_{**}}{\langle
\gamma, \gamma  \rangle_{**}}$$
 here, $\frac{\gamma \langle \gamma|}{\langle \gamma, \gamma  \rangle}$ is the
 orthogonal projector onto $E$ and $1-\frac{\gamma \langle \gamma|}{\langle \gamma, \gamma
 \rangle}$ is the orthogonal projector onto $E^{\bot}$, the orthogonal complement to $E$.

Let us use the normalized vectors $\gamma$. In this case, $$N^*_E(N)-N^{**}=\Delta -
\gamma \langle \gamma, \Delta \rangle_{**}+o(\|\Delta\|)$$
\begin{equation}\label{QuadraticH*}
H^*_E(N)-H(N^{**})=(DH)_{**}(\Delta - \gamma \langle \gamma, \Delta
\rangle_{**})+\frac{1}{2}\langle \Delta - \gamma \langle \gamma, \Delta
\rangle_{**},\Delta - \gamma \langle \gamma, \Delta \rangle_{**} \rangle_{**} +
o(\|\Delta\|^2)
\end{equation}

Let us assume now that $N^{**}$ is the partial equilibrium  for several (two or more)
different 1D subspaces $E$ and  $\mathcal{B}$ is a finite set of these subspaces. The set
$\mathcal{B}$ includes two or more different subspaces $E$. Select a normalized
directional vector $\gamma_E$ for each $E\in \mathcal{B}$. Let $E_{\mathcal{B}}={\rm
Span}\{\gamma_E\, | \, E \in \mathcal{B}\}$. $N^{**}$ is a critical point of $H$ on
$(N^{**}+E_{\mathcal{B}})\cap \mathbb{R}^n$ because $\gamma_E \in \ker (DH)_{**}$ for all
$E\in \mathcal{B}$ and, therefore, $E_{\mathcal{B}}\subset \ker (DH)_{**}$.

Consider a function $H^{1,\max}(N)=\max_{E\in \mathcal{B}}H^*_E(N) $. This function is
strictly convex on $(N^{**}+E_{\mathcal{B}})\cap \mathbb{R}^n$ and $N^{**}$ is its
minimizer on this set. Indeed, in a vicinity of $N^{**}$ in $(N^{**}+E_{\mathcal{B}})\cap
\mathbb{R}^n$ for every $E\in \mathcal{B}$ Equation (\ref{QuadraticH*}) holds. Consider a
direct sum of $k=|\mathcal{B}|$ copies of $E_{\mathcal{B}}$, $E_1\oplus E_2 \oplus \ldots
\oplus E_k$, where all $E_i$ are the copies of $E_{\mathcal{B}}$ equipped by the entropic
inner product $\langle \bullet,\bullet \rangle_{**}$ and the corresponding Euclidean
norm, and the norm of the sum is the maximum of the norm in the summands: $\|x_1\oplus
\ldots \oplus x_k \|= \max \{\|x_1\|,\ldots , \|x_k\|\}$. The following linear map $\psi$
is a surjection $\psi:E_{\mathcal{B}}\to E_1\oplus E_2 \oplus \ldots \oplus E_k$ because
$E_{\mathcal{B}}={\rm Span}\{\gamma_E\, | \, E \in \mathcal{B}\}$ and if all the summands
are zero for $\Delta \in E_{\mathcal{B}}$ then $\Delta=0$:

$$\psi: \Delta \mapsto \bigoplus_{E\in \mathcal{B}} (\Delta- \gamma_E \langle \gamma_E,
\Delta\rangle_{**})$$

Let us mention that on  $E_{\mathcal{B}}$
$$H^{1,\max}(N)-H(N^{**})=\frac{1}{2}\|\psi(\Delta)\|^2 + o(\|\Delta\|^2)$$
 and, therefore, $N^{**}$ is a unique minimizer of $H^{1,\max}(N)$ on $(N^{**}+E_{\mathcal{B}})\cap \mathbb{R}^n$

Because of the local equivalence of the systems with detailed and complex balance and
Theorem~\ref{theor:ComplBalGMALH} we also get the following proposition.
\begin{proposition}\label{Prop:HmaxComBal}
$H^{1, \max}_{\Upsilon}(N)$ is a Lyapunov function in $\mathbb{R}_+^n$ for all kinetic
equations (\ref{KinUrChemRev}) with the given thermodynamic Lyapunov function $H$, the
complex balance condition (\ref{complexbalanceGENKIN}) and the reaction mechanism
(\ref{stoichiometricequation}) with the set of stoichiometric vectors $\Gamma \subseteq
\Upsilon \cup -\Upsilon$.
\end{proposition}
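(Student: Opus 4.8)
The plan is to reduce the claim to what has already been established for the detailed balance case together with the local equivalence theorem, so that no new dynamical computation is required. The engine is Theorem~\ref{theor:ComplBalGMALH}, which already asserts that \emph{any} convex function satisfying the partial equilibria criterion (Definition~\ref{def:partEquilGMAL}) is a Lyapunov function for all kinetic equations under the complex balance condition. Hence it suffices to verify two static properties of $H^{1,\max}_{\Upsilon}$: convexity and the partial equilibria criterion for the relevant set of stoichiometric directions.

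First I would recall from the proof of Proposition~\ref{Prop:HmaxDetBal} that $H^{1,\max}_{\Upsilon}(N)=\max_{E\in\mathcal{E}^1_{\Upsilon}} H^*_E(N)$ is convex, being a finite maximum of the convex quasiequilibrium divergences $H^*_E$, and that it satisfies the partial equilibria criterion with the thermodynamic potential $H$: on every line $X+\mathbb{R}\gamma$ with $\gamma\in\Upsilon$ the $H$-minimizer on that line is also a minimizer of $H^{1,\max}_{\Upsilon}$. These facts are proven there, so I would import them rather than redo the argument involving the entropic inner product.

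The one point that needs a short remark is the enlargement of the direction set from $\Upsilon$ to $\Upsilon\cup(-\Upsilon)$ appearing in the statement, which reflects that under complex balance the mechanism~(\ref{stoichiometricequation}) is not collapsed into reversible pairs. Here I would observe that the partial equilibria criterion (Definition~\ref{def:partEquilGMAL}) is formulated on the full lines $I_{X,\rho}=\{X+\lambda\gamma_{\rho}\mid\lambda\in\mathbb{R}\}\cap U$, and that $\{X+\lambda\gamma\}=\{X+\lambda(-\gamma)\}$ as sets; consequently the family of lines, the family of one-dimensional subspaces $\mathcal{E}^1_{\Upsilon}$, and therefore $H^{1,\max}_{\Upsilon}$ itself are all invariant under $\gamma\mapsto-\gamma$. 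Thus satisfying the criterion for every $\gamma\in\Upsilon$ is literally the same as satisfying it for every $\gamma\in\Upsilon\cup(-\Upsilon)$.

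With convexity and the partial equilibria criterion in hand, I would close the argument by invoking Theorem~\ref{theor:ComplBalGMALH} (whose validity rests on the local equivalence of detailed and complex balance, Theorem~\ref{theor:equivalence}), which immediately yields that $H^{1,\max}_{\Upsilon}$ is a Lyapunov function for all complex balance kinetic equations with the given $H$ and mechanism~(\ref{stoichiometricequation}). I do not expect a genuine obstacle: the entire difficulty was already absorbed into Proposition~\ref{Prop:HmaxDetBal} (verification of the criterion) and Theorem~\ref{theor:equivalence} (the cycle-by-cycle reduction of complex balance to detailed balance). The only thing to be careful about is precisely the direction-reversal bookkeeping above, making sure that the enlarged index set $\Upsilon\cup(-\Upsilon)$ introduces no new lines on which the criterion could fail.
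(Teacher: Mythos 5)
Your proposal is correct and follows essentially the same route as the paper, which derives this proposition directly from Theorem~\ref{theor:ComplBalGMALH} (itself resting on the local equivalence Theorem~\ref{theor:equivalence}) together with the convexity and partial equilibria criterion already established for $H^{1,\max}_{\Upsilon}$ in the proof of Proposition~\ref{Prop:HmaxDetBal}. Your explicit remark that the lines $I_{X,\rho}$, and hence the criterion, are invariant under $\gamma\mapsto-\gamma$ is exactly the bookkeeping the paper disposes of in the sentence following the proposition about why $\Upsilon\cup-\Upsilon$ appears.
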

Here, we use the set $\Upsilon \cup -\Upsilon$ instead of just $\Upsilon$ in
Proposition~\ref{Prop:HmaxDetBal} because the direct and reverse reactions are included
in the stoichiometric equations (\ref{stoichiometricequation}) separately.

\subsection*{A2. Forward--invariant peeling}

We use the quasiequilibrium functions and their various combinations for construction of
new Lyapunov functions from the known thermodynamic Lyapunov functions, $H$, and an
arbitrary convex function $F$. In this procedure, we delete some parts from the sublevel
sets of $F$ to make the rest positively--invariant with respect to GMAL kinetics with the
given reaction mechanism and detailed or complex balance. We call these procedures the
{\em forward--invariant peeling}.
\begin{figure}[t]
\centering{
\boxed{\includegraphics[height=0.24\textwidth]{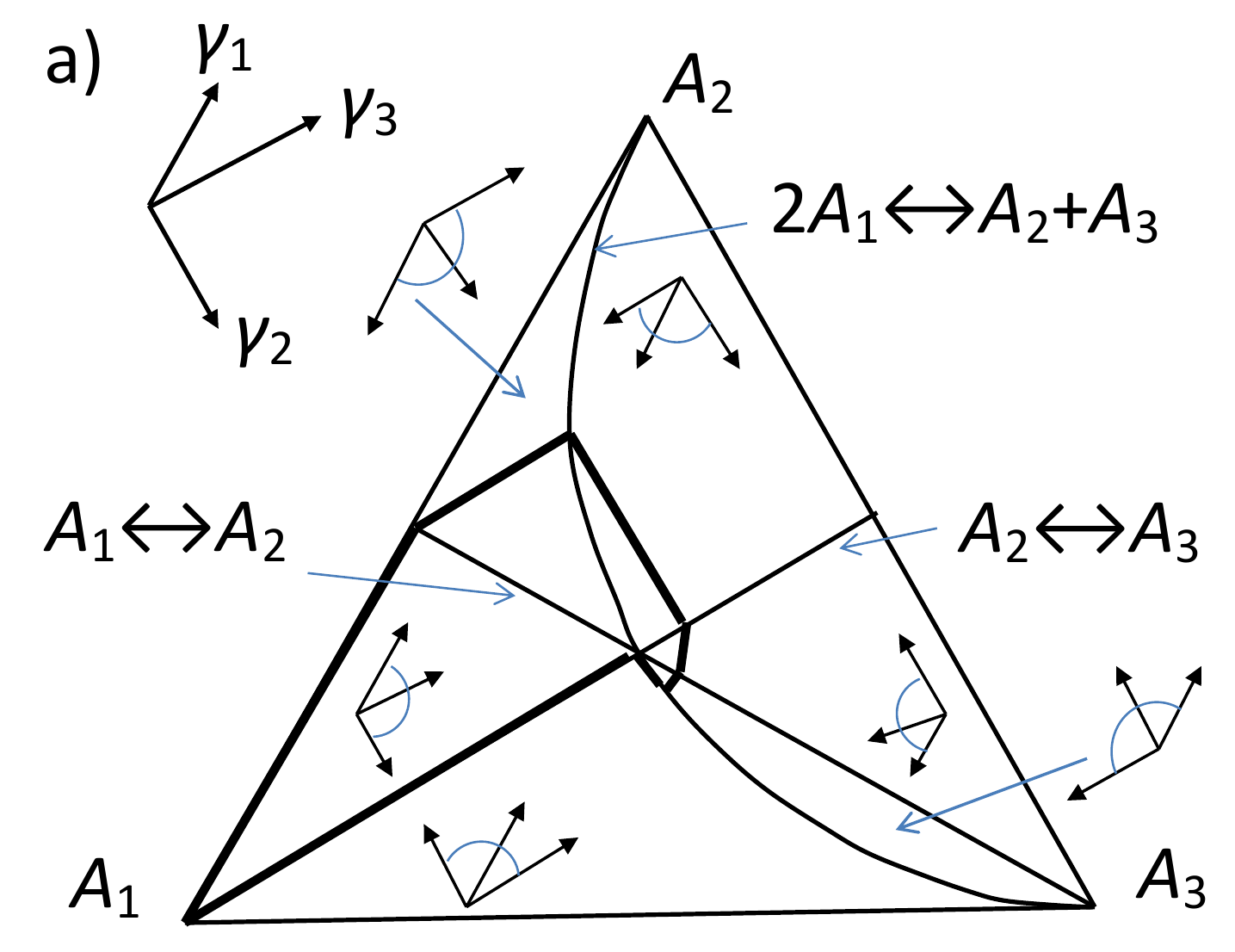}}
\boxed{\includegraphics[height=0.24\textwidth]{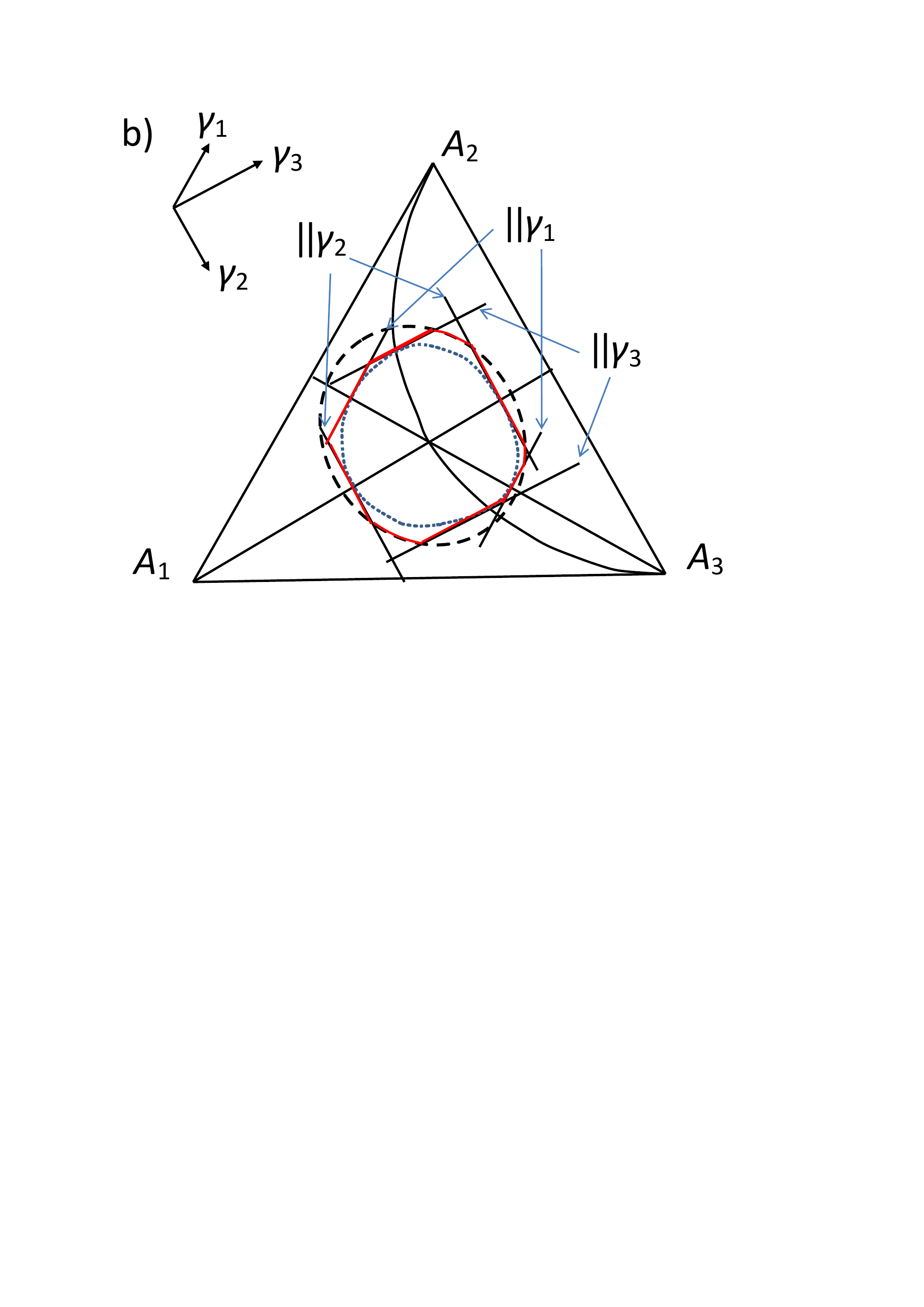}}
\boxed{\includegraphics[height=0.24\textwidth]{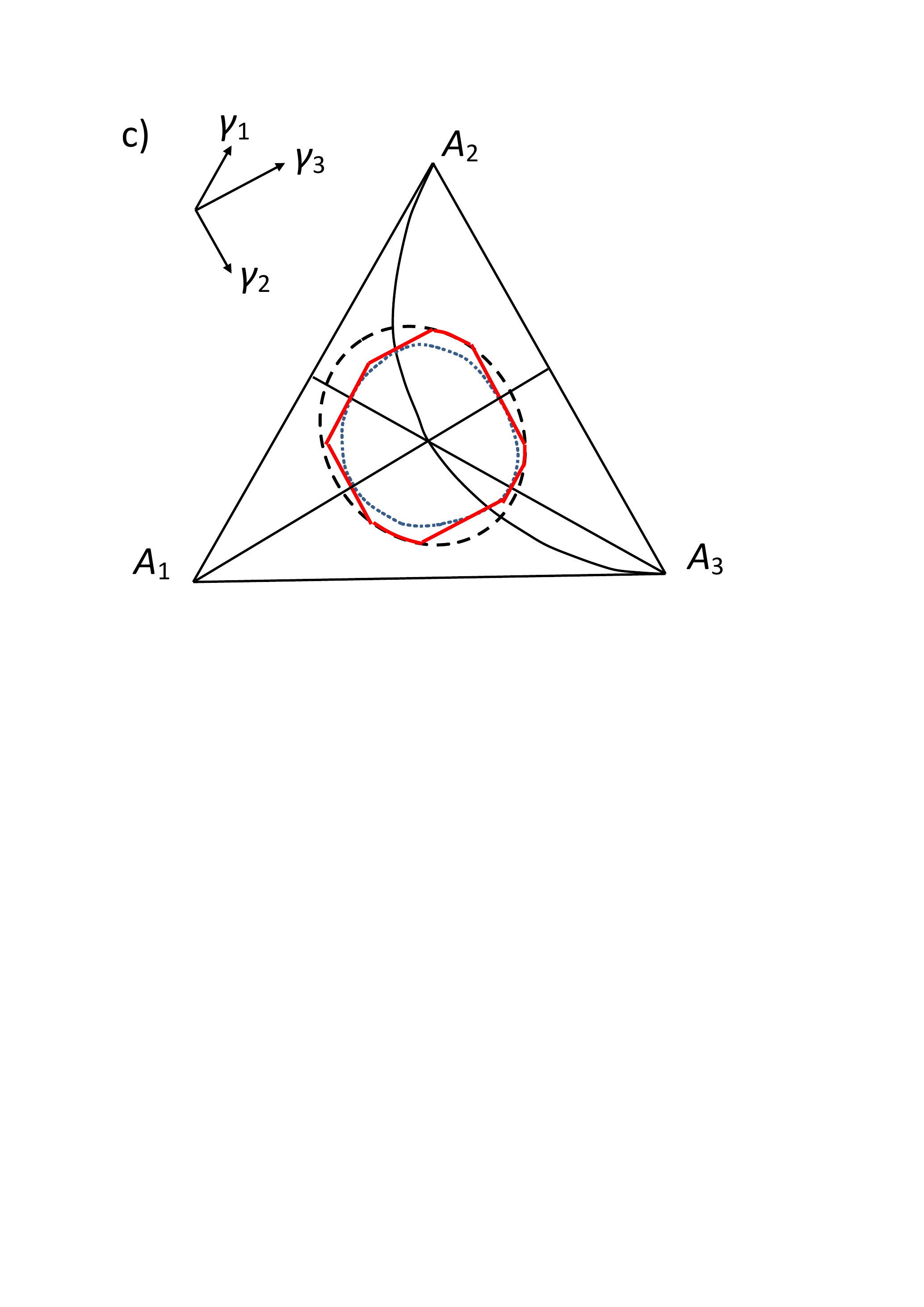}}
\caption{Peeling of convex sets (2D): (a) The reaction mechanism $A_1 \rightleftharpoons A_2$,
$A_2 \rightleftharpoons A_3$, $2A_1 \rightleftharpoons A_2+A_3$. The concentration triangle $c_1+c_2+c_3=b$
is split by the partial equilibria lines into six compartments. In each compartment, the cone of possible directions (the angle) $\mathbf{Q}_{\rm DB}$
is presented. The positively invariant set which includes the $A_1$ vertex is outlined by bold. (b) The set $U$
is outlined by the dashed line, the level set $H=h-\varepsilon$ is shown by the dotted line. The levels of $H^*_{\gamma}$ are the
straight lines $\|\gamma$. The boundary of the peeled set $U^{\varepsilon}_{\{\gamma_1,\gamma_2, \gamma_3\}}$ is shown by red.
(c) The sets $U$, $U^{\varepsilon}_{\{\gamma_1,\gamma_2, \gamma_3\}}$  and  the level set $H=h-\varepsilon$ without auxiliary lines are presented. \label{2Dpeeling}}}
\end{figure}

\begin{figure}[p]
\centering{
\boxed{a) \includegraphics[height=0.37\textwidth]{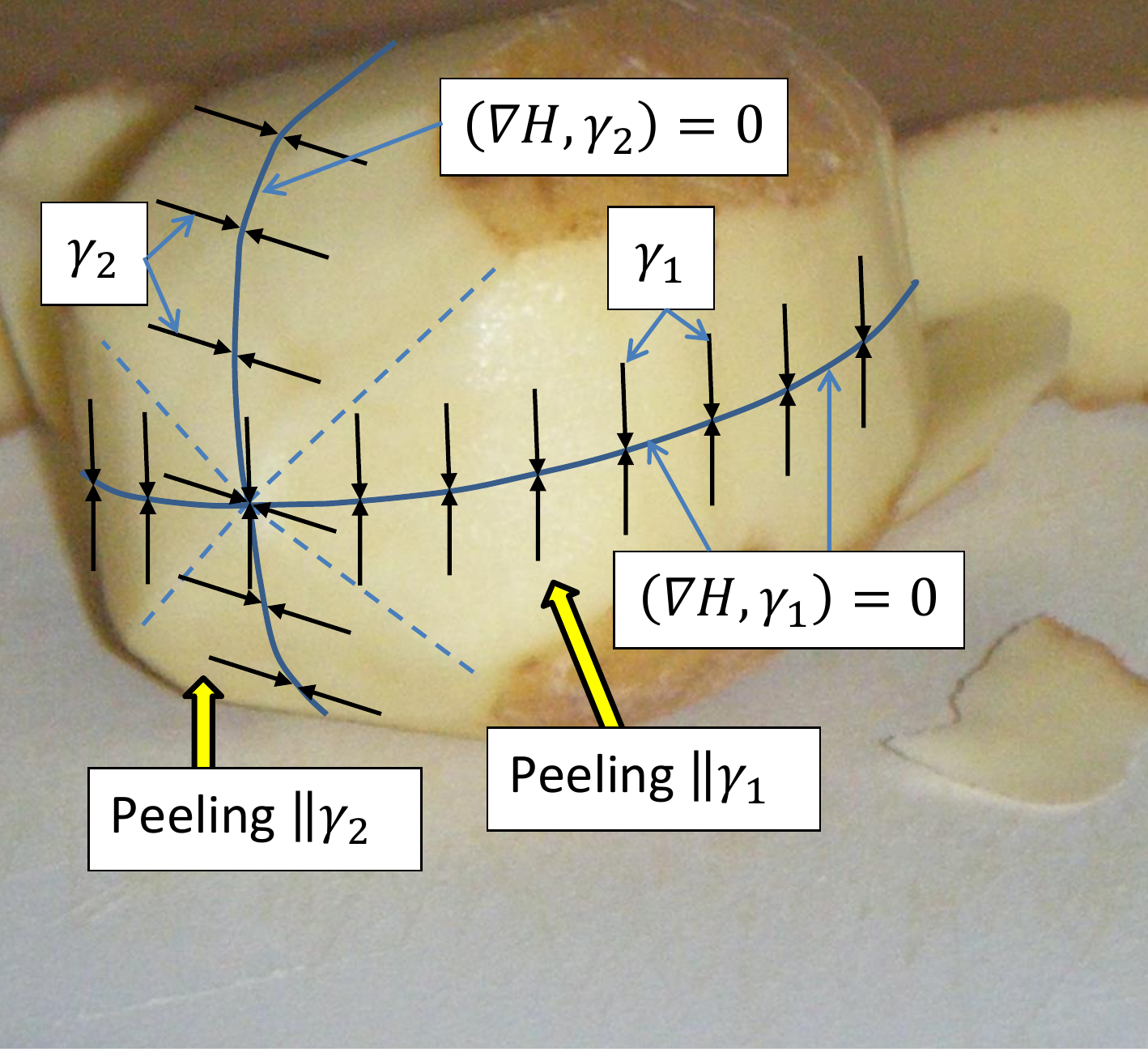}}\hspace{3mm}
\boxed{b) \includegraphics[height=0.37\textwidth]{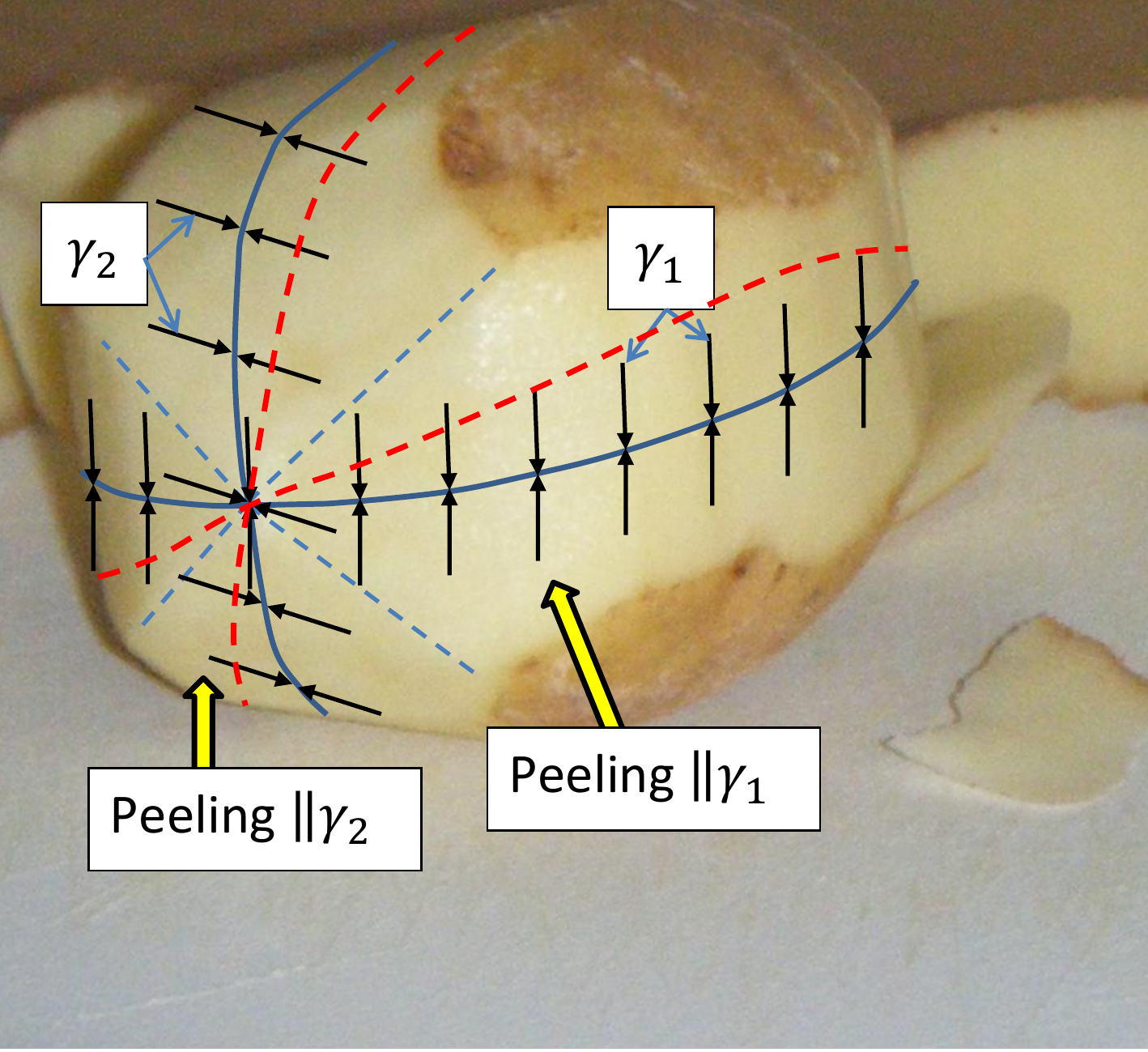}} \\
\boxed{c) \includegraphics[height=0.37\textwidth]{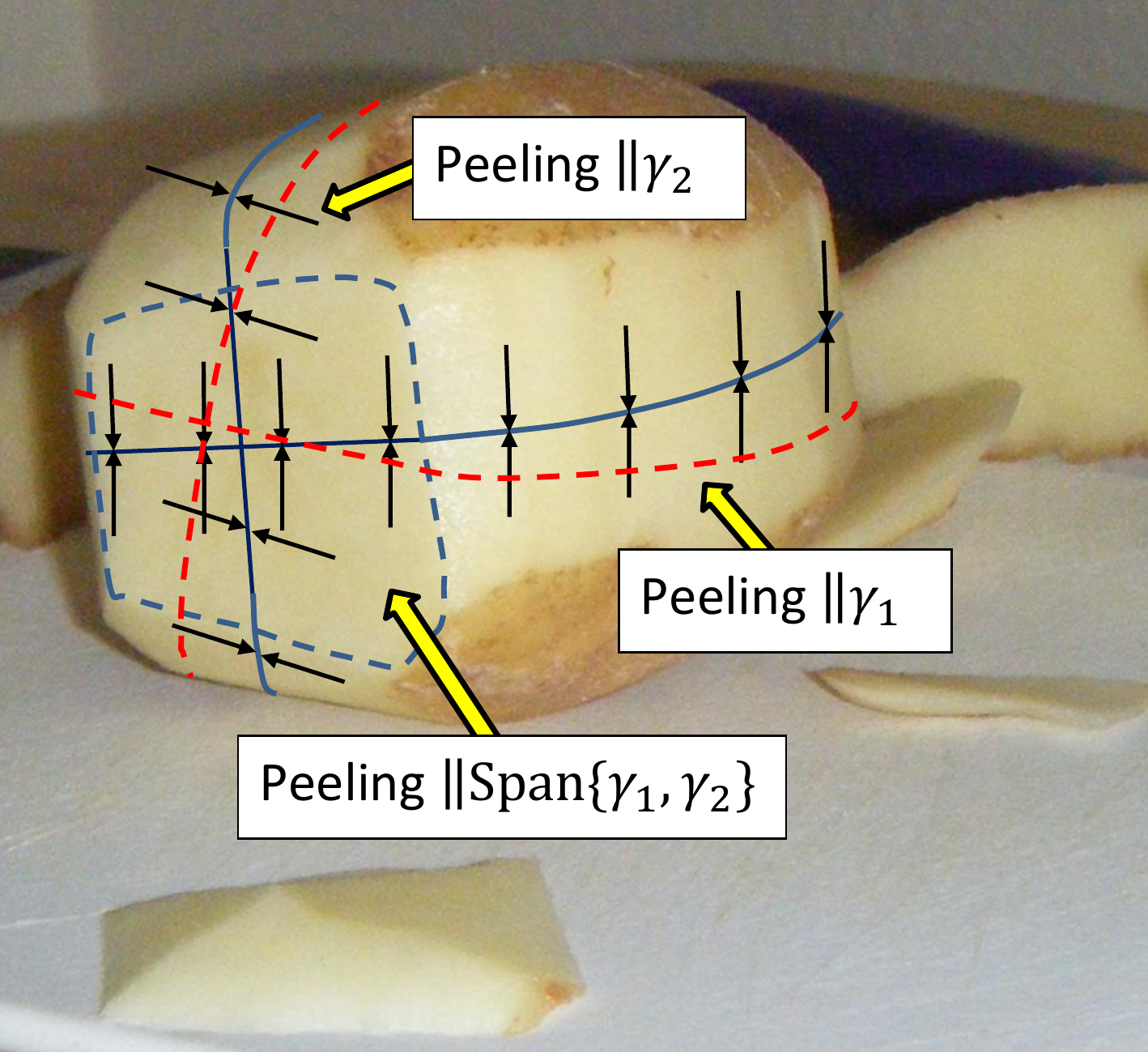}}
\caption{Peeling of convex sets (3D): The unpeeled potato corresponds to the convex set $U$. The partial equilibria $(\nabla H, \gamma_i)=0$ ($i=1,2$)
are presented with the corresponding stoichiometric vectors $\gamma_i$. (a) Near the
intersection of the partial equilibrium surfaces the peeling $\| \gamma_1$ is separated from the peeling $\| \gamma_2$  by the cross of the dashed lines.
(b) After deformation of the partial equilibria (red dashed lines) the peeled set remains forward--invariants if the deformed partial
equilibria for individual reactions do not leave the corresponding peeled 1D faces and, in particular, the intersection of the partial equilibria does not change.
(c) The additional peeling $\| {\rm Span}\{\gamma_1,\gamma_2\}$ makes the peeled set forward--invariant with respect to
the set of systems with interval reaction rate constants. The partial equilibria for any combination of reactions
can move in the limits of the corresponding faces (red dashed lines and their intersection in the Figure, panel c).  \label{Potatopeeling}}}
\end{figure}
Let $U \subset \overline{\mathbb{R}_+^*}$ be a convex compact set of non-negative
$n$-dimensional vectors $N$ and for some $\eta>\min H$ the $\eta$-sublevel set of $H$
belongs to $U$: $\{N \, | \, H(N)\leq \eta\} \subset U $. Let $h> \min H$ be the maximal
value of such $\eta$. Select a {\em thickness of peel}  $\varepsilon>0$.

We define the {\em peeled} set $U$ as
$$U^{\varepsilon}_{\Upsilon}= U \cap \{N  \in \overline{R^n_+} \, | \, H^{1,  max}_{\Upsilon}(N)\leq h-\varepsilon
\}$$ For sufficiently small $\varepsilon>0$ ($\varepsilon< h- \min H$) this set is non-empty and forward--invariant.
\begin{proposition}\label{Prop:Forward-peeledDB}
For sufficiently small $\varepsilon >0$ ($\varepsilon< h- \min H$) the peeled set
$U^{\varepsilon}_{\Upsilon}$ is non-empty. If it is non-empty then it is
forward--invariant with respect to kinetic equations (\ref{KinUrChemRev}) with the
thermodynamic Lyapunov function $H$, reaction rates presented by Equation
(\ref{DBreactionrate}) (detailed balance) and the reversible reaction mechanism
(\ref{reversibleMechanism}) with the set of stoichiometric vectors $\Gamma \subseteq
\Upsilon$.
\end{proposition}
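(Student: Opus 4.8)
The plan is to treat the two assertions separately, and the non-emptiness one is immediate. Let $N_{\min}$ be the (positive) global minimizer of $H$, which lies in $\{H\le h\}\subseteq U$. Since the global minimizer of $H$ is simultaneously the minimizer of $H$ on every affine subset through it, we have $H^*_E(N_{\min})=H(N_{\min})=\min H$ for all $E$, hence $H^{1,\max}_{\Upsilon}(N_{\min})=\min H$. For $\varepsilon<h-\min H$ this value is $<h-\varepsilon$, so $N_{\min}\in U^{\varepsilon}_{\Upsilon}$ and the peeled set is non-empty.

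For forward-invariance I would argue by a first-exit-time contradiction. First, Proposition~\ref{Prop:HmaxDetBal} guarantees that $H^{1,\max}_{\Upsilon}$ is a Lyapunov function for the detailed-balance kinetics (\ref{KinUrChemRev}), so the sublevel set $B=\{N\,|\,H^{1,\max}_{\Upsilon}(N)\le h-\varepsilon\}$ is forward-invariant. A trajectory starting in $U^{\varepsilon}_{\Upsilon}=U\cap B$ therefore remains in $B$, and it only remains to show it cannot leave $U$. Suppose it did, and let $t_0$ be the first exit time, so that $N(t_0)\in\partial U\cap B$ while $N(t)\notin U$ for $t$ just above $t_0$. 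The goal is to contradict this by showing that the velocity $\dot N(t_0)$ points strictly into $U$.

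This inward-pointing claim is the heart of the argument. Write the detailed-balance velocity via (\ref{DBreactionrate}) as $\dot N=V\sum_{\rho}\varphi_{\rho}(\exp(\alpha_{\rho},\check\mu)-\exp(\beta_{\rho},\check\mu))\gamma_{\rho}$; each summand is a nonnegative multiple of the oriented direction along which the single-reaction flow on the line $N+\mathbb{R}\gamma_{\rho}$ moves toward its partial equilibrium $N^*_{\mathbb{R}\gamma_{\rho}}(N)$ (this is the one-dimensional $H$-theorem underlying Theorem~\ref{theorem:DetBalGenHthGMAL}). Write $N^*_{\rho}:=N^*_{\mathbb{R}\gamma_{\rho}}(N(t_0))$. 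The \emph{key} observation is that every such target lies in the interior of $U$: since $\mathbb{R}\gamma_{\rho}\in\mathcal{E}^1_{\Upsilon}$, the conditional-minimum bound $H^*_E\le H$ gives $H(N^*_{\rho})=H^*_{\mathbb{R}\gamma_{\rho}}(N(t_0))\le H^{1,\max}_{\Upsilon}(N(t_0))\le h-\varepsilon<h$, and $\{H<h\}$ is an open set contained in $U$, hence in $\operatorname{int}U$. Thus at $N(t_0)\in\partial U$ every active reaction pushes toward an interior point $N^*_{\rho}\in\operatorname{int}U$, so $\dot N(t_0)=\sum_{\rho}b_{\rho}(N^*_{\rho}-N(t_0))$ with all $b_{\rho}\ge0$. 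For small $\tau>0$ the point $N(t_0)+\tau\dot N(t_0)$ is a convex combination of $N(t_0)\in U$ and the interior points $N^*_{\rho}$ with positive weight, hence lies a positive distance inside $U$; this margin dominates the higher-order remainder of the trajectory, so $N(t)\in\operatorname{int}U$ for $t$ slightly past $t_0$ (and if $\dot N(t_0)=0$ then $N(t_0)$ is an equilibrium and never leaves). Either way the assumed exit is impossible.

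The main obstacle is exactly this localization of the partial equilibria inside $U$; once the bound $H(N^*_{\rho})\le h-\varepsilon<h$ is secured, convexity of $U$ finishes the step. Two routine points remain for full rigor: that $\{H<h\}$ is genuinely open in the ambient orthant and bounded away from the coordinate faces---guaranteed by the logarithmic singularity (\ref{FreeEnLogSIng}), which makes the sublevel sets compact in $\mathbb{R}_+^n$ and keeps all quasiequilibria positive, so no trajectory can escape through a face of $\overline{\mathbb{R}_+^n}$---and that the nonsmoothness of $H^{1,\max}_{\Upsilon}$ at points with a non-unique maximizing $E$ does not spoil the monotonicity, which is already delivered by Proposition~\ref{Prop:HmaxDetBal}. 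With these in hand the first-exit argument yields forward-invariance of $U^{\varepsilon}_{\Upsilon}$.
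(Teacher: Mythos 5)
The paper states this proposition without giving a proof, so there is nothing to match line by line; your reconstruction is sound and, I believe, is exactly the mechanism the author intends. The two halves of your argument correspond to the two ingredients the paper supplies elsewhere: Proposition~\ref{Prop:HmaxDetBal} makes the sublevel set $B=\{H^{1,\max}_{\Upsilon}\le h-\varepsilon\}$ forward--invariant, and your ``inward--pointing velocity'' step is the analytic form of the geometric remark the author makes after Proposition~\ref{FaceCondition} (that at a boundary point of the peeled set which is a partial equilibrium in direction $\gamma$, the vector $\gamma$ lies in a supporting hyperplane). Your key inequality $H(N^*_{\rho})=H^*_{\mathbb{R}\gamma_{\rho}}(N)\le H^{1,\max}_{\Upsilon}(N)\le h-\varepsilon<h$ is the whole point of peeling by $H^{1,\max}_{\Upsilon}$ rather than by $H$, and the decomposition $\dot N=\sum_{\rho}b_{\rho}(N^*_{\rho}-N)$ with $b_{\rho}\ge 0$ is correct: the sign of $\exp(\alpha_{\rho},\check\mu)-\exp(\beta_{\rho},\check\mu)$ equals the sign of $-(\gamma_{\rho},\check\mu)$, so each summand points toward the one-dimensional quasiequilibrium. (For non-emptiness, note also the slightly cleaner observation that $H^{1,\max}_{\Upsilon}\le H$, so the whole set $\{H\le h-\varepsilon\}$ sits inside $U^{\varepsilon}_{\Upsilon}$.) The only place where your write-up is thinner than I would like is the boundary of the orthant: $\{H<h\}$ is open only relative to $\overline{\mathbb{R}_+^n}$, so $N^*_{\rho}\in\operatorname{int}U$ (interior in $\mathbb{R}^n$) requires strict positivity of $N^*_{\rho}$, which the paper guarantees for positive base points; combined with positivity preservation of the kinetics this closes the case of exit points on a coordinate face, but you assert rather than carry out this step. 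You flag it yourself, and the paper is no more rigorous on this point, so I would not call it a gap --- just the one spot to expand if a referee asks.
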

\begin{proposition}\label{Prop:Forward-peeledCB}
For sufficiently small $\varepsilon >0$ ($\varepsilon< h- \min H$) the peeled set
$U^{\varepsilon}_{\Upsilon}$ is non-empty. If it is non-empty then it is
forward--invariant with respect to all kinetic equations (\ref{KinUrChemRev}) with the
given thermodynamic Lyapunov function $H$, the complex balance condition
(\ref{complexbalanceGENKIN}) and the reaction mechanism (\ref{stoichiometricequation})
with any set of stoichiometric vectors $\Gamma \subseteq \Upsilon \cup -\Upsilon$.
\end{proposition}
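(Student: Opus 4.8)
The plan is to deduce this complex-balance statement from the detailed-balance version, Proposition~\ref{Prop:Forward-peeledDB}, via the local equivalence Theorem~\ref{theor:equivalence}, treating the non-emptiness claim directly. For non-emptiness, note that $H^{1,\max}_{\Upsilon}(N)=\max_{E}H^*_E(N)$ and each $H^*_E(N)\le H(N)$ by the Jaynes inequality, so $H^{1,\max}_{\Upsilon}\le H$ everywhere. Hence the $H$-sublevel set $\{N\mid H(N)\le h-\varepsilon\}$ is contained in $\{N\mid H^{1,\max}_{\Upsilon}(N)\le h-\varepsilon\}$ and, being a subset of $\{H\le h\}\subseteq U$, lies inside $U^{\varepsilon}_{\Upsilon}$. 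For $\varepsilon<h-\min H$ this sublevel set is non-empty, so $U^{\varepsilon}_{\Upsilon}\neq\emptyset$.

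For forward-invariance I would decompose $\partial U^{\varepsilon}_{\Upsilon}$ into three pieces and check, by the Nagumo subtangentiality condition, that every admissible velocity $\dot N$ of (\ref{KinUrChem}) under the complex balance condition (\ref{complexbalanceGENKIN}) lies in the tangent cone of $U^{\varepsilon}_{\Upsilon}$ at each boundary point. The pieces are: (i) the coordinate faces of $\overline{\mathbb{R}_+^n}$, where positivity preservation of GMAL ($\dot N_i\ge 0$ when $c_i=0$, guaranteed by the logarithmic singularity (\ref{FreeEnLogSIng})) keeps trajectories in the orthant; (ii) the level-set part $\{H^{1,\max}_{\Upsilon}=h-\varepsilon\}$, where $\dot H^{1,\max}_{\Upsilon}\le 0$ by Proposition~\ref{Prop:HmaxComBal}, so the sublevel set $\{H^{1,\max}_{\Upsilon}\le h-\varepsilon\}$ is preserved; and (iii) the part $\partial U\cap\{H^{1,\max}_{\Upsilon}\le h-\varepsilon\}$ coming from $\partial U$ itself.

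Piece (iii) is the crux, and the decisive reduction is that forward-invariance is a pointwise condition on the admissible velocity cone. By Theorem~\ref{theor:equivalence}, at every state $\check\mu$ the set of velocities $\dot N$ produced by the complex-balance equations (with stoichiometric vectors $\Gamma\subseteq\Upsilon\cup-\Upsilon$) coincides with the detailed-balance cone $\mathbf{Q}_{\rm DB}(\check\mu)$; hence the invariance condition at each boundary point is identical to that of the detailed-balance system and the claim follows from Proposition~\ref{Prop:Forward-peeledDB}. The underlying geometric mechanism, which I would spell out to make the reduction transparent, is this: the extreme rays of $\mathbf{Q}_{\rm DB}(\check\mu)$ are the directions $\gamma_{\rho}\,\mathrm{sign}(\exp(\alpha_{\rho},\check\mu)-\exp(\beta_{\rho},\check\mu))$ from (\ref{DBreactionrate}), each pointing from $N$ along $N+\mathbb{R}\gamma_{\rho}$ toward the $H$-minimizer on that line, namely the one-dimensional partial equilibrium $N^*_{\mathbb{R}\gamma_{\rho}}(N)$. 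The constraint $H^{1,\max}_{\Upsilon}(N)\le h-\varepsilon$ forces $H(N^*_{\mathbb{R}\gamma_{\rho}}(N))=H^*_{\mathbb{R}\gamma_{\rho}}(N)\le h-\varepsilon<h$, so every such partial equilibrium lies in $\{H<h\}\subseteq\operatorname{int}U$. For a boundary point $N$ of the convex set $U$, the direction toward an interior point points into $U$; since the tangent cone of a convex set is convex, every nonnegative combination of these rays, i.e. every vector of $\mathbf{Q}_{\rm DB}(\check\mu)=\mathbf{Q}_{\rm CB}(\check\mu)$, points into $U$ as well. Together with (ii), this shows $\dot N$ lies in the tangent cone of $U^{\varepsilon}_{\Upsilon}$.

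The main obstacle is precisely piece (iii): certifying that the admissible velocity cone is inward-pointing on the portion of $\partial U^{\varepsilon}_{\Upsilon}$ inherited from $\partial U$ rather than from a Lyapunov level set. The two features that make it work are the \emph{strict} gap $H^{1,\max}_{\Upsilon}\le h-\varepsilon<h$, which places all the relevant partial equilibria $N^*_{\mathbb{R}\gamma_{\rho}}(N)$ in $\operatorname{int}U$, and the convexity of $U$, which turns ``target in the interior'' into ``direction inward''; once these are in place, the passage from detailed to complex balance is immediate from Theorem~\ref{theor:equivalence}.
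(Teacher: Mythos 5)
Your argument is correct, and it is worth noting that the paper states this proposition (and its detailed-balance twin, Proposition~\ref{Prop:Forward-peeledDB}) without any written proof, so there is no official argument to compare against; what you have written is a legitimate filling-in of that gap. Your route uses exactly the ingredients the paper assembles for this purpose: non-emptiness from the Jaynes-type inequality $H^{1,\max}_{\Upsilon}\leq H$; invariance of the sublevel set $\{H^{1,\max}_{\Upsilon}\leq h-\varepsilon\}$ from Proposition~\ref{Prop:HmaxComBal}; and, for the part of the boundary inherited from $\partial U$, the observation that the generators $\gamma_{\rho}\,\mathrm{sign}(\exp(\alpha_{\rho},\check{\mu})-\exp(\beta_{\rho},\check{\mu}))$ of $\mathbf{Q}_{\rm DB}(\check{\mu})$ point toward the one-dimensional partial equilibria, which the peeling constraint forces into $\{H<h\}\subseteq\operatorname{int}U$, followed by the reduction $\mathbf{Q}_{\rm CB}(\check{\mu})=\mathbf{Q}_{\rm DB}(\check{\mu})$ of Theorem~\ref{theor:equivalence}. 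This matches the mechanism the author gestures at after Proposition~\ref{FaceCondition} (``if a positive boundary point of the peeled set is a partial equilibrium in direction $\gamma$ then $\gamma$ belongs to a supporting hyperplane''), only phrased dually: you certify inward-pointing velocities at non-equilibrium boundary points rather than supporting hyperplanes at equilibrium ones. Two small points deserve one more line each in a write-up: (a) passing from ``$\dot N$ lies in the tangent cone of $U$ and in the tangent cone of $\{H^{1,\max}_{\Upsilon}\leq h-\varepsilon\}$'' to ``$\dot N$ lies in the tangent cone of their intersection'' needs the standard constraint qualification for convex sets, which holds here because the intersection contains the nonempty open set $\{H<h-\varepsilon\}\cap\mathbb{R}_+^n$; (b) the positivity of the quasiequilibria $N^*_{\mathbb{R}\gamma_{\rho}}(N)$, and hence their membership in $\operatorname{int}U$, is guaranteed by the paper only for positive $N$, so boundary points lying simultaneously on a coordinate face and on $\partial U$ must be swept into your case (i) rather than case (iii). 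Neither issue affects the validity of the argument.
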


The forward--invariant peeling for a 2D nonlinear kinetic scheme is demonstrated in
Figure~\ref{2Dpeeling}. A 3D example is presented in Figure~\ref{Potatopeeling}. It is
worth to mention that the peeled froward--invariant sets have 1D faces near the partial
equilibria. These faces are parallel to the stiochiometric vectors of the equilibrating
reactions.

Let $F(N)$ be a continuous strictly convex function with bounded level sets on the
non-negative orthant. For each level of $H$, $h \in {\rm im} H$, we define the  level
$$f(h)=\max_{H(N)=h}F(N)$$. Let $f^*(h) \geq f(h)$ be any strictly increasing function. In particular, we
can take $f^*(h)=f(h)+\varepsilon$ ($\varepsilon>0$).
 Introduce the {\em peeled} function
$$F^{f^*(h)}_{\Upsilon}(N)=\max \{F(N),f^*(H^{1, \max}_{\Upsilon}(N))\}$$
Applying Proposition~\ref{Prop:Forward-peeledDB} to sublevel sets of
$F^{f^*(h)}_{\Upsilon}$ we obtain the following propositions.
\begin{proposition}\label{Prop:Lyap-peeledDB}
$F^{f^*(h)}_{\Upsilon}(N)$  is a Lyapunov function in $\mathbb{R}_+^n$  for all kinetic
equations (\ref{KinUrChemRev}) with the thermodynamic Lyapunov function $H$, reaction
rates presented by Equation (\ref{DBreactionrate}) (detailed balance) and the reversible
reaction mechanism (\ref{reversibleMechanism}) with any set of stoichiometric vectors
$\Gamma \subseteq \Upsilon$.
\end{proposition}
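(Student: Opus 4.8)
The plan is to establish the Lyapunov property through the standard equivalence with forward--invariance of sublevel sets: a function $G$ is a Lyapunov function for a family of kinetic equations precisely when each sublevel set $\{N\mid G(N)\le c\}$ is forward--invariant, since then a trajectory starting in such a set can never leave it and $G$ is non--increasing in time. It therefore suffices to show that every sublevel set of $G:=F^{f^*(h)}_{\Upsilon}$ is forward--invariant under the detailed--balance equations (\ref{KinUrChemRev}), (\ref{DBreactionrate}); the tool for this is Proposition~\ref{Prop:Forward-peeledDB}, which guarantees forward--invariance of the peeled sets $U^{\varepsilon}_{\Upsilon}$ for every $\Gamma\subseteq\Upsilon$.

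First I would unfold the definition $G=\max\{F,\,f^*(H^{1,\max}_{\Upsilon})\}$ at a fixed level $c$: its $c$--sublevel set is the intersection $\{F\le c\}\cap\{f^*(H^{1,\max}_{\Upsilon})\le c\}$. Because $f^*$ is strictly increasing it is invertible on its range, so the second factor equals $\{H^{1,\max}_{\Upsilon}\le (f^*)^{-1}(c)\}$. Writing $U_c:=\{N\in\overline{\mathbb{R}_+^n}\mid F(N)\le c\}$, which is convex and compact because $F$ is strictly convex with bounded level sets, this yields the identity $\{G\le c\}=U_c\cap\{H^{1,\max}_{\Upsilon}\le (f^*)^{-1}(c)\}$.

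Next I would verify that this intersection is exactly a peeled set. For the convex compact $U_c$ let $h_{U_c}$ be the largest $\eta$ with $\{H\le\eta\}\subseteq U_c$. Since $\{H\le\eta\}=\bigcup_{\eta'\le\eta}\{H=\eta'\}$, one has $\{H\le\eta\}\subseteq U_c$ iff $f(\eta')\le c$ for every $\eta'\le\eta$. Put $\eta_0:=(f^*)^{-1}(c)$. As $f\le f^*$ and $f^*$ is increasing, for all $\eta'\le\eta_0$ we get $f(\eta')\le f^*(\eta')\le f^*(\eta_0)=c$, so $\{H\le\eta_0\}\subseteq U_c$ and $h_{U_c}\ge\eta_0$. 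Taking the canonical choice $f^*=f+\varepsilon$ even gives the uniform margin $f(\eta')\le c-\varepsilon$ on $\eta'\le\eta_0$, whence $h_{U_c}>\eta_0$ and $\varepsilon_c:=h_{U_c}-\eta_0>0$. Therefore $\{G\le c\}=U_c\cap\{H^{1,\max}_{\Upsilon}\le h_{U_c}-\varepsilon_c\}=(U_c)^{\varepsilon_c}_{\Upsilon}$, and Proposition~\ref{Prop:Forward-peeledDB} yields forward--invariance with peel thickness $\varepsilon_c$.

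Finally, levels $c<\min G$ give empty, hence trivially invariant, sublevel sets, so every sublevel set of $G$ is forward--invariant; consequently $G$ is non--increasing along all trajectories of (\ref{KinUrChemRev}) with detailed balance (\ref{DBreactionrate}), i.e. it is a Lyapunov function. The step I expect to be most delicate is the strict positivity of the peel thickness $\varepsilon_c$. This needs the level--set maximum $f(\eta)=\max_{H(N)=\eta}F(N)$ to be finite and continuous in $\eta$, so that a positive margin at $\eta_0$ propagates to a genuinely larger admissible $\eta$; I would obtain this from continuity of $F$, compactness of the level sets, and the assumed regularity of $H$ and $F$ on the faces of $\overline{\mathbb{R}_+^n}$, together with the effective domination $f^*\ge f$ (which is why the strict choice $f^*=f+\varepsilon$ is convenient). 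The potential worry that $\max_{H\le\eta}F$ be attained on a coordinate face rather than on $\{H=\eta\}$ dissolves here, because $f$ is already a maximum over the full closed level set, so the sublevel--set maximum is simply the running maximum $\sup_{\eta'\le\eta}f(\eta')$.
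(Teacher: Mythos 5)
Your proposal is correct and follows exactly the route the paper itself takes (the paper's entire proof is the single remark that one applies Proposition~\ref{Prop:Forward-peeledDB} to the sublevel sets of $F^{f^*(h)}_{\Upsilon}$); you merely fill in the details, correctly identifying each sublevel set $\{F^{f^*(h)}_{\Upsilon}\le c\}$ as a peeled set $(U_c)^{\varepsilon_c}_{\Upsilon}$ with $U_c=\{F\le c\}$ and $\varepsilon_c=h_{U_c}-(f^*)^{-1}(c)>0$, and rightly flagging that the strict positivity of the peel thickness is the one point needing care (guaranteed, as you note, by the canonical choice $f^*=f+\varepsilon$).
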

\begin{proposition}
$F^{f^*(h)}_{\Upsilon}(N)$ is a Lyapunov function in $\mathbb{R}_+^n$ for all kinetic
equations (\ref{KinUrChemRev}) with the given thermodynamic Lyapunov function $H$, the
complex balance condition (\ref{complexbalanceGENKIN}) and the reaction mechanism
(\ref{stoichiometricequation}) with the set of stoichiometric vectors $\Gamma \subseteq
\Upsilon \cup -\Upsilon$.
\end{proposition}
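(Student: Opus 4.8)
The plan is to mirror the proof of Proposition~\ref{Prop:Lyap-peeledDB} almost verbatim, replacing the detailed--balance forward--invariant peeling (Proposition~\ref{Prop:Forward-peeledDB}) by its complex--balance counterpart (Proposition~\ref{Prop:Forward-peeledCB}). A continuous function is a Lyapunov function for a given family of kinetic equations precisely when every one of its sublevel sets is forward--invariant under those equations. Hence it suffices to prove that for each level $c$ the set $\{N\mid F^{f^*(h)}_{\Upsilon}(N)\leq c\}$ is forward--invariant with respect to all kinetic equations (\ref{KinUrChemRev}) with thermodynamic Lyapunov function $H$, the complex balance condition (\ref{complexbalanceGENKIN}), and stoichiometric vectors $\Gamma\subseteq\Upsilon\cup-\Upsilon$.

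First I would write the sublevel sets explicitly. Because $F^{f^*(h)}_{\Upsilon}=\max\{F,\,f^*(H^{1,\max}_{\Upsilon})\}$ and $f^*$ is strictly increasing, so that $(f^*)^{-1}$ is defined on its image,
$$\{N\mid F^{f^*(h)}_{\Upsilon}(N)\leq c\}=\{F\leq c\}\cap\{H^{1,\max}_{\Upsilon}\leq (f^*)^{-1}(c)\}.$$
Put $U=\{N\in\overline{\mathbb{R}_+^n}\mid F(N)\leq c\}$; this set is convex and compact because $F$ is strictly convex with bounded level sets. Since $F$ is convex, its maximum over the convex body $\{H\leq h\}$ is attained on the boundary $\{H=h\}$, so $\{H\leq h\}\subseteq U$ holds exactly when $f(h)=\max_{H(N)=h}F(N)\leq c$, and the largest admissible $h$ is $f^{-1}(c)$. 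The peeling level $(f^*)^{-1}(c)$ then equals $f^{-1}(c)-\varepsilon$ with
$$\varepsilon=f^{-1}(c)-(f^*)^{-1}(c)>0,$$
positivity being exactly the inequality $f^*\geq f$ (strict, e.g.\ $f^*=f+\varepsilon$). Thus the displayed sublevel set is literally the peeled set $U^{\varepsilon}_{\Upsilon}$.

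Having identified each sublevel set as a peeled set, I would invoke Proposition~\ref{Prop:Forward-peeledCB} directly. For every $c$ in the relevant range the peeled set contains the global minimizer of $H$ and is therefore non-empty, so the proposition yields its forward--invariance under all kinetic equations (\ref{KinUrChemRev}) with the given $H$, the complex balance condition (\ref{complexbalanceGENKIN}), and mechanism (\ref{stoichiometricequation}) with $\Gamma\subseteq\Upsilon\cup-\Upsilon$. Since this holds at every level, $F^{f^*(h)}_{\Upsilon}$ cannot increase along any such trajectory and is the asserted Lyapunov function. The complex--balance content is not reproved here: it is inherited through Proposition~\ref{Prop:Forward-peeledCB} and Proposition~\ref{Prop:HmaxComBal}, both of which rest on the local equivalence of detailed and complex balance (Theorem~\ref{theor:equivalence}).

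The step I expect to require the most care is the bookkeeping rather than the kinetics: one must check, uniformly over admissible $c$, that $f$ is genuinely invertible on the relevant range (strict convexity of $F$ is what makes $f$ strictly increasing) and that the pair $(U,\varepsilon)$ obtained above meets the non-emptiness hypothesis $\varepsilon<h-\min H$ of Proposition~\ref{Prop:Forward-peeledCB}, which reduces to $c>f^*(\min H)$. Note that the potentially awkward nonsmoothness of $F^{f^*(h)}_{\Upsilon}$ along the ``seam'' where the face $F=c$ meets the face $H^{1,\max}_{\Upsilon}=(f^*)^{-1}(c)$ never has to be confronted directly: working with forward--invariance of sublevel sets, rather than with a pointwise time derivative, is precisely what lets us bypass the absence of a single gradient at such points.
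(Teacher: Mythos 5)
Your proposal is correct and follows the same route the paper takes: the paper's entire argument is the one-line remark that the propositions follow by applying the forward--invariant peeling statements to the sublevel sets of $F^{f^*(h)}_{\Upsilon}$, and you carry out exactly that identification (each sublevel set $\{F^{f^*(h)}_{\Upsilon}\leq c\}$ is the peeled set $U^{\varepsilon}_{\Upsilon}$ for $U=\{F\leq c\}$ with $\varepsilon=f^{-1}(c)-(f^*)^{-1}(c)>0$), correctly substituting Proposition~\ref{Prop:Forward-peeledCB} for Proposition~\ref{Prop:Forward-peeledDB} so that the complex--balance case is inherited from the local equivalence theorem. Your extra bookkeeping (invertibility of $f$, non-emptiness) is sound, though strictly unnecessary for the conclusion since empty sublevel sets are vacuously forward--invariant.
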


The level sets of $F^{f^*(h)}_{\Upsilon}(N)$ have 1D faces parallel to the stoichiometric
vectors $\gamma \in \Upsilon$ near the corresponding partial equilibria outside a
vicinity of the intersections of these partial equilibria hypersurfaces. At the
intersections of two partial equilibria there is a singularity and the size of both faces
tends to zero (Figure~\ref{Potatopeeling}~a).

Let us consider kinetic systems with perturbed thermodynamic potentials $H'=H+\Delta H$,
where $\Delta H$ is uniformly small with it second derivatives. For such the perturbed
systems in a bounded set all the partial equilibria  are close to the partial equilibria
of the original system. (An important case is the perturbation of $H$ by a linear
functional $\Delta H$.) We will modify the peeling procedure to create forward--invariant
sets for sufficiently small perturbations.

Let us look on the forward--invariant peeled set on Figure~\ref{Potatopeeling}~a. If we
slightly deform the partial equilibria surfaces for each reaction
(Figure~\ref{Potatopeeling}~b, red dashed lines) but keep their intersection unchanged
then the peeled set may remain forward--invariant. It is sufficient that the
intersections of the partial equilibria with the border of $U$ in $\mathbb{R}^n$ do not
leave the corresponding 1D faces (Figure~\ref{Potatopeeling}~b). If we perform additional
peeling near the intersections of the partial equilibria (Figure~\ref{Potatopeeling}~c),
then the peeled set may be positively invariant with respect to the kinetic equations
with perturbed thermodynamic potentials (or, even a bit stronger, with respect to kinetic
equations with interval coefficients for sufficiently small intervals).

We consider the reversible reaction mechanism (\ref{reversibleMechanism}) with the set of
the stoichiometric vectors $\Upsilon$. $U \subset \overline{R^n_+}$  is a convex compact
set, and for some $\eta>\min H$ the $\eta$-sublevel set of $H$ belongs to $U$: $\{N \in
\overline{R^n_+}\, | \, H(N)\leq \eta\} \subset U $. Let $h> \min H$ be the maximal value
of such $\eta$.

Select a sequence of thicknesses of peels $\varepsilon_1,\varepsilon_2, \ldots,
\varepsilon_k>0$, where $k= {\rm rank}\Upsilon$. Let us use for the peeling the functions
$H^{i, \max}_{\Upsilon}(N)$ (\ref{Hmax}) ($i=1,\ldots {\rm rank} \Upsilon$).

For each $i$ we consider the sublevel set $$U_i=\{N\in \overline{\mathbb{R}_+^n} \, | \,
H^{i, \max}_{\Upsilon}(N)\leq h- \sum_{j=1}^i \varepsilon_j \}$$ for sufficiently small
numbers $\varepsilon_j>0$ all these sets are non-empty. The peeled $U$ for this sequence
of thicknesses is defined as
\begin{equation}\label{SecondPeeled}
U^{\varepsilon_1, \ldots, \varepsilon _k}_{\Upsilon}= \bigcap_{i=0}^k U_i
\end{equation}
 where we take $U_0=U$.

Definition of $U_k$ ($k= {\rm rank}\Upsilon$) requires some comments. If ${\rm
rank}\Upsilon =n$ then ${\rm Span}(\Upsilon)=\mathbb{R}^n$ and the corresponding
quasiequilibrium $N^*_{\mathbb{R}^n}$ is the global equilibrium, i.e.

$$H^*_{\mathbb{R}^n}=\min_{N\in \mathbb{R}^n_+} H(N)\,(=\min H), \;\;
N^*_{\mathbb{R}^n}=\underset{{N\in \mathbb{R}^n_+}}{\operatorname{argmin}}H(N)$$

In this case, $H^{k, \max}_{\Upsilon}(N)\equiv \min H$ and either $U_k$ is the
nonnegative orthant (if $h- \sum_{j=1}^i \varepsilon_j \geq \min H$) or it is empty (if
$h- \sum_{j=1}^i \varepsilon_j < \min H$). Therefore, in this case the term $U_k$ is not
needed in Equation (\ref{SecondPeeled}).

If $k= {\rm rank}\Upsilon<n$ then the term $U_k$ is necessary. In this case, $H^{k,
\max}_{\Upsilon}(N)=H^*_{{\rm Span}(\Upsilon)}$ and $U_k$ defines non-trivial peeling.
\begin{proposition}\label{Prop:SecPeeling}
\begin{enumerate}
\item For sufficiently small thicknesses $\varepsilon_1, \ldots, \varepsilon _k>0$
    the peeled set $U^{\varepsilon_1, \ldots, \varepsilon _k}_{\Upsilon}$ is
    non-empty and forward--invariant with respect to kinetic equations
    (\ref{KinUrChemRev}) with the thermodynamic Lyapunov function $H$, reaction rates
    presented by Equation (\ref{DBreactionrate}) (detailed balance) and the
    reversible reaction mechanism with the set of stoichiometric vectors $\Gamma
    \subseteq \Upsilon$.
\item For these thicknesses $\varepsilon_1, \ldots, \varepsilon _k>0$ the peeled set
    $U^{\varepsilon_1, \ldots, \varepsilon _k}_{\Upsilon}$ is forward--invariant with
    respect to kinetic equations (\ref{KinUrChemRev}) with the perturbed
    thermodynamic Lyapunov function $H+\Delta H$, reaction rates presented by
    Equation (\ref{DBreactionrate}) (detailed balance) and the reversible reaction
    mechanism with the set of stoichiometric vectors $\Gamma \subseteq \Upsilon$ if
    the perturbation $\Delta H$ is sufficiently uniformly small with its second
    derivatives.
\end{enumerate}
\end{proposition}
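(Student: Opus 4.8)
The plan is to prove the two conclusions separately: non-emptiness, which is elementary, and forward-invariance, for which I would apply the tangent-cone (Nagumo) criterion to the convex set $U^{\varepsilon_1,\ldots,\varepsilon_k}_{\Upsilon}=\bigcap_{i=0}^k U_i$, each $U_i$ being a sublevel set of the convex function $H^{i,\max}_{\Upsilon}$. Two structural facts organize the argument. First, the monotonicity $H^{1,\max}_{\Upsilon}\ge H^{2,\max}_{\Upsilon}\ge\cdots\ge H^{k,\max}_{\Upsilon}$, which follows from item~4 of the properties of $H^*_E$ above (every $(i+1)$-dimensional $L\in\mathcal{E}^{i+1}_{\Upsilon}$ contains an $i$-dimensional $E\in\mathcal{E}^{i}_{\Upsilon}$ with $H^*_E\ge H^*_L$). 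Second, for non-emptiness, let $N^{\star}=\operatorname{argmin}_{\overline{\mathbb{R}_+^n}}H$: for every $E$ one has $H^*_E(N^{\star})\le H(N^{\star})=\min H$ by the Jaynes inequality and $H^*_E(N^{\star})\ge \min H$ since it is a value of $H$, hence $H^{i,\max}_{\Upsilon}(N^{\star})=\min H$ for all $i$; so $N^{\star}\in U_i$ whenever $\sum_{j\le i}\varepsilon_j\le h-\min H$, which holds for small thicknesses, and $N^{\star}$ lies in the peeled set, in fact in its interior.

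The forward-invariance is genuinely a property of the intersection, not of the individual layers: for $i\ge2$ the function $H^{i,\max}_{\Upsilon}$ is not a Lyapunov function. Indeed, differentiating along a detailed-balance trajectory and using that the quasiequilibrium gradient is $\nabla H^*_E(N)=\check\mu^*_E:=\nabla H(N^*_E(N))$ and is orthogonal to $E$,
\begin{equation*}
\frac{\D}{\D t}H^*_E(N)=V\sum_{\rho:\,\gamma_\rho\notin E}(r^+_\rho-r^-_\rho)\,(\check\mu^*_E,\gamma_\rho),
\end{equation*}
and the transversal terms $\gamma_\rho\notin E$ need not be sign-definite far from the $E$-partial equilibrium. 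I would therefore stratify the boundary of the peeled set by the binding constraint. Where only $i\in\{0,1\}$ binds, the higher layers have slack, the tangent cone coincides with that of $U\cap U_1$, and forward-invariance is exactly Proposition~\ref{Prop:Forward-peeledDB} (equivalently, $H^{1,\max}_{\Upsilon}$ satisfies the partial equilibria criterion, Proposition~\ref{Prop:HmaxDetBal} and Theorem~\ref{theorem:DetBalGenHthGMAL}).

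The substantive case is a binding layer $i\ge2$. Let $E^{\star}={\rm Span}(\Gamma^{\star})$ ($\dim E^{\star}=i$) realize the maximum in $H^{i,\max}_{\Upsilon}(N)$; the Nagumo condition reduces to $(\check\mu^*_{E^{\star}},\dot N)\le0$, i.e.\ to $\frac{\D}{\D t}H^*_{E^{\star}}\le0$. The reactions with $\gamma_\rho\in E^{\star}$ contribute zero because $\check\mu^*_{E^{\star}}\perp E^{\star}$; this identity is purely geometric. For the transversal reactions $\gamma_\rho\notin E^{\star}$ I need $\operatorname{sign}(\gamma_\rho,\check\mu(N))=\operatorname{sign}(\gamma_\rho,\check\mu^*_{E^{\star}})$, so that every term $(r^+_\rho-r^-_\rho)(\check\mu^*_{E^{\star}},\gamma_\rho)$ is $\le0$. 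The segment $[N,\,N^*_{E^{\star}}(N)]$ lies in the coset $N+E^{\star}$, and on that coset the set where $E^{\star}$ realizes the maximum is an intersection of sublevel sets of convex functions, hence convex, so the whole segment lies in the active region of $E^{\star}$; the desired sign identity holds provided no transversal reaction reaches its partial equilibrium $\Pi_\rho$ along this segment. Here is the hard part: this sign constancy is not automatic (a transversal reaction can gain second-order reduction and cross $\Pi_\rho$ inside the active region), and it is forced only by localization. The inductively chosen thickness hierarchy $\varepsilon_1\gg\varepsilon_2\gg\cdots\gg\varepsilon_k$ must confine each layer-$i$ boundary piece to a neighbourhood—shrinking with the thicknesses, quantified through the quadratic estimate \eqref{QuadraticH*}—of the stratum where the $i$ reactions spanning $E^{\star}$ are simultaneously at equilibrium and all transversal reactions are uniformly off-equilibrium. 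Making this uniform over the compact set $U$ reduces, by compactness, to finitely many transversality statements about the arrangement of partial-equilibria surfaces, and this is the crux of the proof.

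For statement~2, I would argue by robustness. The zero contribution of the reactions $\gamma_\rho\in E^{\star}$ is geometric and so is unchanged when the dynamics is perturbed; the remaining sign inequalities are strict on the strata because the transversal reactions are uniformly off-equilibrium there. A $C^2$-small perturbation $H'=H+\Delta H$ shifts the potentials $\check\mu'=\nabla H'$, and hence the partial equilibria, only slightly, so the strict transversality persists and each displaced partial equilibrium stays inside the corresponding $1$D face of the peeled set—these faces are parallel to the $\gamma_\rho$ and of positive length, while the higher peelings keep their common refinement near the intersections intact (Figure~\ref{Potatopeeling}). Consequently the same peeled set $U^{\varepsilon_1,\ldots,\varepsilon_k}_{\Upsilon}$ remains forward-invariant for all sufficiently small perturbations, and the perturbed equilibrium, being near $N^{\star}$, still lies in its interior.
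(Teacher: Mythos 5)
Your reduction is the right one and it matches the strategy the paper itself follows: non-emptiness via the global minimizer of $H$ (correct and complete), and forward-invariance via the partial-equilibria (Nagumo) criterion applied stratum by stratum to the binding constraint, with the identity $\frac{\D}{\D t}H^*_{E}(N)=V\sum_{\rho}(r^+_\rho-r^-_\rho)(\check\mu^*_{E},\gamma_\rho)$, the vanishing of the terms with $\gamma_\rho\in E^{\star}$ because $\check\mu^*_{E^{\star}}\perp E^{\star}$, and the required sign condition $\operatorname{sign}(\gamma_\rho,\check\mu(N))=\operatorname{sign}(\gamma_\rho,\check\mu^*_{E^{\star}})$ for the transversal reactions. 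All of that is correct, and your observation that $H^{i,\max}_{\Upsilon}$ for $i\geq 2$ is not itself a Lyapunov function, so that invariance is a property of the intersection, is exactly the point.

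The gap is where you flag it, and it is not closed. The assertion that a hierarchy $\varepsilon_1\gg\cdots\gg\varepsilon_k$ confines each $E$-face to a region where every transversal reaction is uniformly off-equilibrium --- i.e.\ that $\Psi^{\varepsilon_1,\ldots,\varepsilon_k}_{\Upsilon,\,E}\cap\Phi_L=\emptyset$ for $L\nsubseteq E$ --- is stated, not derived; ``reduces by compactness to finitely many transversality statements'' is not an argument, since those statements are geometric conditions on the arrangement of the surfaces $\Phi_L$ relative to the level sets of $H^*_E$ and are not automatic. This is precisely the condition the paper isolates as the \emph{hypothesis} of Proposition~\ref{FaceCondition} instead of deriving it from smallness of the thicknesses (the paper offers no more of a proof of Proposition~\ref{Prop:SecPeeling} than you do). What your sketch omits is the actual mechanism by which small thicknesses could force the avoidance: the quadratic estimate (\ref{QuadraticH*}) pins the $E$-face to an $O(\sqrt{\varepsilon_i})$-neighbourhood of $\Phi_E\cap\{H^*_E=h-\sum_{j\leq i}\varepsilon_j\}$; the dangerous points are where this set approaches $\Phi_L$, i.e.\ near $\Phi_E\cap\Phi_L=\Phi_{E+L}$; and there $H^*_{E+L}$ is close to $H^*_E=h-\sum_{j\leq i}\varepsilon_j$, which strictly exceeds the level $h-\sum_{j\leq \dim(E+L)}\varepsilon_j$ allowed by the layer-$\dim(E+L)$ constraint, so that constraint excises a neighbourhood of the intersection from the face (this is what Figure~\ref{Potatopeeling}c depicts). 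Turning this into a proof requires quantitative inequalities tying $\varepsilon_{i+1},\ldots,\varepsilon_k$ to $\varepsilon_i$ through the moduli of convexity of $H$ on the compact set $U$, and neither your proposal nor the paper supplies them. Your treatment of part 2 is fine, but only conditionally: it rests on the strict inequalities on the faces that part 1 has not actually established.
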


The similar proposition is valid for the systems with complex balance (because of the
local equivalence theorem). Peeling of the sublevel sets of a convex function will
produce a Lyapunov function similarly to Proposition~\ref{Prop:Lyap-peeledDB}.

Essential difference of Proposition~\ref{Prop:SecPeeling} from
Proposition~\ref{Prop:Forward-peeledDB} is in the ultimate positive--invariance of
$U^{\varepsilon}_{\Upsilon}$  if it is non-empty. To provide the forward--invariance of
$U^{\varepsilon_1, \ldots, \varepsilon _k}_{\Upsilon}$ we need an additional property.

Let $E\in \mathcal{E}_{\Upsilon}$ be a subspace of the form $E={\rm Span}(\Gamma)$,
$\Gamma \subset \Upsilon$. The {\em quasiequilibrium surface} $\Phi_E \subset
\mathbb{R}^n_+$ is a set of all quasiequilibria $N^*_E(N)$ ($N \in \mathbb{R}^n_+$). The
Legendre transform of $\Phi_E$ (its image in the space of potentials $\check{\mu}$) is
the orthogonal supplement to $E$, for every $\check{\mu}$ from this image
$(\check{\mu},\gamma)=0$, and this is an equivalent definition of $\Phi_E$.

For every $E\in \mathcal{E}_{\Upsilon}$ we define the $E$-faces of $U^{\varepsilon_1,
\ldots, \varepsilon _k}_{\Upsilon}$ as follows. Let $\dim E=i$. Consider the generalized
cylindrical surface with the given value $H^*_E (N)=q$

$$S_E^q=\{N\in R^n_+\, | \, H^*_E (N)=q\}$$

The $E$-faces of $U^{\varepsilon_1, \ldots, \varepsilon _k}_{\Upsilon}$ belong to the
intersection

$$\Psi ^{\varepsilon_1, \ldots, \varepsilon
_k}_{\Upsilon,\, E}= S_E^{h-(\varepsilon_1+\ldots+\varepsilon_i)}\cap U^{\varepsilon_1,
\ldots, \varepsilon _k}_{\Upsilon}$$
\begin{proposition}\label{FaceCondition}Let $h-\sum_{j=1}^k \varepsilon_j > \min H$ and
 for every $E,L\in \mathcal{E}_{\Upsilon}$ the following property holds:
$$\Psi ^{\varepsilon_1, \ldots, \varepsilon _k}_{\Upsilon,\, E} \cap \Phi_L =\emptyset$$
if $L \nsubseteq E$. Then the peeled set $U^{\varepsilon_1, \ldots, \varepsilon
_k}_{\Upsilon}$ is non-empty and forward--invariant with  respect to kinetic equations
(\ref{KinUrChemRev}) with the perturbed thermodynamic Lyapunov function $H+\Delta H$,
reaction rates presented by Equation (\ref{DBreactionrate}) (detailed balance) and the
reversible reaction  mechanism with the set of stoichiometric vectors $\Gamma \subseteq
\Upsilon$ if the perturbation $\Delta H$ is sufficiently uniformly small with its second
derivatives.
\end{proposition}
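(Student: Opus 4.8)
The plan is to verify forward--invariance through Nagumo's subtangentiality criterion: a closed convex set is forward--invariant for a Lipschitz vector field if and only if at every boundary point the velocity lies in the tangent (contingent) cone. First I would describe the boundary of $U^{\varepsilon_1,\ldots,\varepsilon_k}_{\Upsilon}=\bigcap_{i=0}^k U_i$. At a boundary point $N$ the active constraints are those $U_i$ with $H^{i,\max}_{\Upsilon}(N)=h-\sum_{j\le i}\varepsilon_j$, and the active subspaces are the $E\in\mathcal{E}^i_{\Upsilon}$ realizing the maximum in $H^{i,\max}_{\Upsilon}$, so that $N$ lies on the $E$-face $\Psi^{\varepsilon_1,\ldots,\varepsilon_k}_{\Upsilon,E}$. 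Since each $H^{i,\max}_{\Upsilon}$ is a maximum of the smooth convex functions $H^*_E$, its subdifferential at $N$ is generated by the gradients $\nabla H^*_E(N)$ of the active $E$, and because the whole set is convex, Nagumo's condition reduces to the single family of inequalities
\[
(\nabla H^*_E(N),\dot N)\le 0 \quad\text{for every active }E .
\]
By the envelope identity $\nabla H^*_E(N)=\nabla H(N^*_E(N))$ and the definition of the quasiequilibrium, this normal is orthogonal to $E$.

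Writing $\dot N=V\sum_\rho r_\rho\gamma_\rho$ from \eqref{KinUrChemRev}--\eqref{DBreactionrate}, the reactions with $\gamma_\rho\in E$ drop out of the inequality — their stoichiometric vectors are orthogonal to $\nabla H(N^*_E(N))$ — and they move tangentially to the $E$-face. Only the transverse reactions $\gamma_\rho\notin E$ can threaten subtangentiality, and here the face condition is decisive: for such $\rho$ the line $L=\mathbb{R}\gamma_\rho$ satisfies $L\nsubseteq E$, so the hypothesis $\Psi^{\varepsilon_1,\ldots,\varepsilon_k}_{\Upsilon,E}\cap\Phi_L=\emptyset$ forces $(\gamma_\rho,\nabla H(N))\neq 0$ on the whole $E$-face. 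Thus every transverse reaction is strictly off its own partial equilibrium $\Phi_{\mathbb{R}\gamma_\rho}$, and by compactness of the faces there is a uniform gap $|(\gamma_\rho,\nabla H)|\ge\delta>0$. Combined with the unperturbed forward--invariance of the \emph{same} set (Proposition~\ref{Prop:SecPeeling}), which already gives $(\nabla H(N^*_E(N)),\dot N)\le 0$ for the original kinetics, this off--equilibrium margin should upgrade each transverse contribution to a strict inequality; making this strictness precise is the delicate point addressed below.

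The perturbation step is then routine. Replacing $H$ by $H'=H+\Delta H$ changes the kinetic equations only through the potentials $\check{\mu}'=\nabla H'$, whereas the set $U^{\varepsilon_1,\ldots,\varepsilon_k}_{\Upsilon}$ and the normals $\nabla H(N^*_E(N))$ are built from the unperturbed $H$ and do not move. Since $\Delta H$ is uniformly small together with its second derivatives, $\nabla H'$ is $C^{0}$-close to $\nabla H$, so as soon as $\|\nabla\Delta H\|$ is below the gap $\delta$ the sign of each transverse factor $(\alpha_\rho,\check{\mu}')-(\beta_\rho,\check{\mu}')=-(\gamma_\rho,\check{\mu}')$, and hence of each perturbed rate $r_\rho$ in \eqref{DBreactionrate}, coincides with its unperturbed sign. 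The perturbed velocity therefore differs from the unperturbed one by an amount that cannot overcome the strict margin, so $(\nabla H(N^*_E(N)),\dot N)\le 0$ persists at every active $E$, and Nagumo's criterion yields forward--invariance of the perturbed system; non--emptiness for $h-\sum_j\varepsilon_j>\min H$ is inherited from Proposition~\ref{Prop:SecPeeling}.

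I expect the genuine difficulty to sit in the strict-sign claim, namely controlling $\nabla H$ at the quasiequilibrium $N^*_E(N)$ rather than at $N$ itself, since the normal to the face is $\nabla H(N^*_E(N))$ while the rate sign is governed by $\nabla H(N)$. The clean way to close it is to observe that the segment $[N,N^*_E(N)]$ lies in the coset $N+E$, along which $H^*_E$ is constant, so the whole segment sits on the surface $S_E^{\,h-(\varepsilon_1+\cdots+\varepsilon_i)}$; by convexity of the peeled set it lies in $\Psi^{\varepsilon_1,\ldots,\varepsilon_k}_{\Upsilon,E}$, whence the face condition applies along the \emph{entire} segment and $(\gamma_\rho,\nabla H)$ cannot change sign between $N$ and $N^*_E(N)$. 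This is precisely the geometry of Figure~\ref{Potatopeeling}: the intersections of the partial equilibria with the boundary never leave the faces to which they belong. The one monotonicity fact that must be checked with care is that $N^*_E(N)$ indeed remains inside the peeled set (so that convexity may be invoked), which I would establish from the properties of the quasiequilibrium divergences $H^*_E$ proved above (in particular the nesting $H^*_E\ge H^*_L$ for $E\subseteq L$ and the Jaynes-type equality cases). The corners where several $E$ are simultaneously active then require no extra work, since Nagumo's condition there is just the conjunction of the single-$E$ inequalities already obtained.
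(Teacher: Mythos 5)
Your overall architecture is sound and, as far as one can tell, is the route the paper intends: the paper offers only a one-sentence indication (``an application of the general $H$-theorem based on the partial equilibrium criterion''), so you are supplying genuine detail. The reduction to per-face inequalities $(\nabla H^*_E(N),\dot N)\le 0$ for each active $E$, the envelope identity $\nabla H^*_E(N)=\nabla H(N^*_E(N))\perp E$, the observation that reactions with $\gamma_\rho\in E$ contribute nothing, the use of the hypothesis $\Psi_{\Upsilon,E}^{\varepsilon_1,\ldots,\varepsilon_k}\cap\Phi_{\mathbb{R}\gamma_\rho}=\emptyset$ to get a uniform sign gap for transverse reactions on each face, and the treatment of the perturbation (the set and its normals are built from the unperturbed $H$, only the rate signs are perturbed) are all correct. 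You also correctly locate the crux: the rate sign is governed by $(\gamma_\rho,\nabla H(N))$ while the face normal is $\nabla H(N^*_E(N))$, and these must agree in sign.

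That crux, however, is not closed. Your segment argument needs $N^*_E(N)$ (hence the segment $[N,N^*_E(N)]$, by convexity) to lie in the peeled set, so that the face condition can be applied along it. The nesting properties you invoke dispose of only part of the constraints: for $E\subseteq E'$ one has $H^*_{E'}$ constant on $N+E$, and for $\dim E'\le\dim E$ one has $H^*_{E'}\le H=h-\sum_{j\le \dim E}\varepsilon_j\le h-\sum_{j\le \dim E'}\varepsilon_j$ along the segment; but for constraints with $\dim E'>\dim E$ and $E\nsubseteq E'$ nothing forces $H^*_{E'}(N^*_E(N))\le h-\sum_{j\le \dim E'}\varepsilon_j$ --- this is a quantitative drop of size $\sum_{\dim E<j\le\dim E'}\varepsilon_j$ that the mere disjointness hypothesis does not deliver. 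This is precisely the corner situation (Figure~\ref{Potatopeeling}c) that the higher-dimensional peeling is meant to control, so the case cannot be waved away; your argument is complete only for faces of maximal dimension $k={\rm rank}\,\Upsilon$. The fallback on Proposition~\ref{Prop:SecPeeling} does not repair this: that proposition asserts unperturbed forward-invariance only ``for sufficiently small thicknesses,'' a hypothesis that is neither assumed in Proposition~\ref{FaceCondition} nor shown to follow from the face condition (which is introduced exactly to replace it), so using its conclusion here is circular. To finish, you would need to show directly that along the coset $N+E$ the sign of $(\gamma_\rho,\nabla H)$ cannot flip before the $E$-quasiequilibrium is reached whenever all the face conditions hold simultaneously --- for instance by a first-violation argument that exploits the conditions $\Psi_{\Upsilon,E'}^{\varepsilon_1,\ldots,\varepsilon_k}\cap\Phi_L=\emptyset$ for the higher-dimensional subspaces $E'$ and $L=E$, $L=E+\mathbb{R}\gamma_\rho$ --- and this step is currently missing.
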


The proof is an application of the general $H$-theorem based on the partial equilibrium
criterion (for illustration see Figure~\ref{Potatopeeling}~c). The condition of
Proposition~\ref{FaceCondition} means that the result of peeling in higher dimensions
does not destroy the main property of the 1D peeling
(Proposition~\ref{Prop:Forward-peeledDB}): if a positive boundary point $N$ of the bodily
peeled set is a partial equilibrium in direction  $\gamma\in \Upsilon$ then the
stoichiometric vector $\gamma$ belongs to a supporting hyperplane of this peeled set at
$N$.

\subsection*{A3. Greedy peeling}

The goal of the forward--invariant peeling is to create a forward--invariant convex set
from an initial convex set $U$ by deletion (peeling) of its non-necessary parts. The
resulting (peeled) set should be forward--invariant with respect to any GMAL kinetics
with a given reaction mechanism and the thermodynamic Lyapunov function $H$ (``free
energy''). In more general but practically useful settings, we consider not  a single
system but a family of systems with the given reaction mechanism but for a set of
Lyapunov functions $H(N)=H_0(N)+(l,N)$ where $l\in Q$ and $Q\subset \mathbb{R}^n$ is a
convex compact set. We would like to produce a set that is forward--invariant with
respect to all these systems (and, therefore, with respect to the differential inclusion
(compare to (\ref{KinUrChem}))
\begin{equation}\label{KinInclChem}
\frac{\D N}{\D t} \in  V
\sum_{\rho} \gamma_{\rho} \varphi_{\rho}\left[\exp(\alpha_{\rho}, \check{\mu})-\exp(\beta_{\rho}, \check{\mu})\right]
\end{equation}
where
$$\check{\mu}- \nabla H_0(N)=l\in Q$$
and $\varphi_{\rho} \in \mathbb{R}_+ $.

Further on, we consider systems with fixed volume, therefore we omit the factor $V$ and
make no difference between the amounts $N_i$ and the concentrations $c_i$.

The peeling procedure proposed in the previous subsection works but it is often too
extensive and produces not the maximal possible forward--invariant set. We would like to
produce the maximal forward--invariant subset of $U$ and, therefore, have to minimize
peeling. Here we meet a slightly unexpected obstacle. The union (and the closure) of
forward--invariant sets is also forward--invariant, whereas the union of convex sets may
be non-convex. Therefore, there exists the unique maximal forward--invariant subset of
$U$ but it may be non-convex and the maximal convex forward--invariant subset may be
non-unique. If we would like to find the maximal forward--invariant subset then we have
to relax the requirement of convexity.

A  set $U$ is {\em directionally convex} with respect to a set of vectors $\Gamma$ if for
every $x\in U$ and $\gamma \in \Gamma$ the intersection $(x+\mathbb{R} \gamma) \cap U$ is
a segment of a straight line:
$$(x+\mathbb{R} \gamma) \cap U=(x+[a,b] \gamma),\;\mbox{or}\;(x+]a,b] \gamma), \;
\mbox{or}\;(x+[a,b[ \gamma), \mbox{or}\;(x+]a,b[ \gamma)$$

The minimal forward--invariant non-convex (but directionally convex) sets  were
introduced in \cite{Gorban1979} and studied for chemical kinetics in \cite{G11984} and
for Markov chains (master equation) in \cite{Zylka1985}.

Let $U\subset \mathbb{R}^n_+$ be a compact subset. The greedy peeling of $U$  is
constructed for an inclusion (\ref{KinInclChem}) as a sequence of peeling operations
$\Pi_{\gamma}$, where $\gamma$ is a stoichiometric vector of an elementary reaction. A
point $x \in U$ belongs to $\Pi_{\gamma}(U)$ if and only if there exists such a segment
$[a,b] \subset \mathbb{R}$ that
\begin{itemize}
\item $0\in [a,b]$;
\item $x+[a,b] \gamma \subset U$;
\item if $y \in (x+\mathbb{R} \gamma) \cap \mathbb{R}_+^n$ and $(\nabla_N
    H_0(N)\left.\right|_{N=y}+l,\gamma)= 0$ for some $l\in Q$ then $y \in x+[a,b]
    \gamma$.
\end{itemize}

We  call the set
$$S_{\gamma}=\{N\in \mathbb{R}_+^n \, | \, (\nabla     H_0\left.\right|_{N}+l,\gamma)= 0 \mbox{ for some }l\in Q\}$$
the {\em equilibrium strip } for the elementary reaction with the stoichiometric vector
$\gamma$.

Another equivalent description of the operation $\Pi_{\gamma}$ may be useful. Find the
orthogonal projection of $U \cap S_{\gamma}$ onto the orthogonal complement to $\gamma$,
the hyperplane $\gamma^{\bot}\subset\mathbb{ R}^n$. Let $\pi_{\gamma}^{\bot}$ be the
orthogonal projector onto this hyperplane. Find all such $z \in \pi_{\gamma}^{\bot} (U
\cap S_{\gamma})$ that

$$((\pi_{\gamma}^{\bot})^{-1}z) \cap S_{\gamma} =U \cap S_{\gamma}$$

This set is the {\em base} of $\Pi_{\gamma}(U)$, i.e. it is
$$B_{\gamma}(U)=\pi_{\gamma}^{\bot}(\Pi_{\gamma}(U))$$

For each $z \in B_{\gamma}(U)$ consider the straight line $(\pi_{\gamma}^{\bot})^{-1} z$.
This line is parallel to $\gamma$ and its orthogonal projection onto ${\gamma}^{\bot}$ is
one point $z$. The intersection $S_{\gamma}\cap ((\pi_{\gamma}^{\bot})^{-1} z)$ is a
segment. Find the maximal connected part of $U \cap ((\pi_{\gamma}^{\bot})^{-1} z)$ that
includes this segment. This is also a segment (a {\em fiber}). Let us call it
$F_{z,\gamma}(U)$. We define $$\Pi_{\gamma}(U)=\bigcup_{z\in B_{\gamma}(U)}
F_{z,\gamma}(U) $$

The set $\Pi_{\gamma}(U)$ is forward--invariant with respect to the differential
inclusion (\ref{KinInclChem}) if the reaction mechanism consists of one reaction with the
stoichiometric vector $\gamma$. It is directionally convex in the direction $\gamma$. Of
course, if we apply the operation $\Pi_{\gamma'}$ with a different stoichiometric vector
$\gamma'$  to $\Pi_{\gamma}(U)$ then the forward-invariance with respect to the
differential inclusion (\ref{KinInclChem}) for one reaction with the previous
stoichiometric vector $\gamma$ may be destroyed. Nevertheless, if we apply an infinite
sequence of operations $\Pi_{\gamma_{\rho}}$ (${\rho}=1, \ldots, m$) where all the
stoichiometric vectors $\gamma_{\rho}$ from the reaction mechanism appear infinitely many
times then the sequence converges to the maximal forward--invariant subset of $U$ because
of monotonicity (in particular, this limit may be empty if there is no positively
invariant subset in $U$). The limit set is directionally convex in directions
${\gamma_{\rho}}$ (${\rho}=1, \ldots, m$) and is the same for all such sequences.

\subsection*{A4. A toy example}

Let us consider a reaction mechanism

\begin{equation}\label{3Dtoy}
A_1{ \overset{k_{1}}{\rightarrow}}  A_2 { \overset{k_{2}}{\rightarrow}} A_3 {
\overset{k_{3}}{\rightarrow}} A_1, \;\;\; 2A_1 \underset{k_{-4}}{ \overset{k_{4}}{\rightleftharpoons}} 3A_2
\end{equation}
with the classical mass action law and interval constants $0< k_{i \, \min} \leq k_i \leq
k_{i \, \max}< \infty$. Consider the kinetic equations with such interval constants and
classical mass action law.

The stoichiometric vectors of the reactions are
\begin{equation}
\gamma_1=\left(
\begin{array}{c}
-1\\ 1\\ 0
\end{array}
\right);\;
\gamma_2=\left(
\begin{array}{c}
0\\ -1\\ 1
\end{array}
\right);\;
\gamma_3=\left(
\begin{array}{c}
1\\ 0\\ -1
\end{array}
\right);\;
\gamma_4=\left(
\begin{array}{c}
-2\\ 3\\ 0
\end{array}
\right);\;
\end{equation}

We will demonstrate how to use peeling for solving of the following problem for the
system (\ref{3Dtoy}): is it possible that the solution of the differential inclusion with
these interval constants starting from a positive vector will go to zero when $t\to
\infty$? (This question for this system was considered recently as an unsolved problem
\cite{GopaShiu2013}.)

Let us use the local equivalence of systems with complex and detailed balance and
represent this system as a particular case of differential inclusion (\ref{KinInclChem})
(with possible extension of the interval of constants).

The equilibrium concentrations $c_i^*$ in the irreversible cycle satisfy the following
identities:
$$k_1c_1^*=k_2c_2^*=k_3c_3^*,\;\; \frac{c_i^*}{c^*_j}=\frac{k_j}{k_i}$$

Instead of the irreversible cycle of linear reactions we will take the reversible cycle
\begin{equation}\label{reversibleCycle}
A_1\underset{\kappa_{-1}}{ \overset{\kappa_{1}}{\rightleftharpoons}}  A_2
\underset{\kappa_{-2}}{ \overset{\kappa_{2}}{\rightleftharpoons}} A_3
\underset{\kappa_{-3}}{ \overset{\kappa_{3}}{\rightleftharpoons}} A_1
\end{equation}
with the interval restrictions on the {\em equilibrium constants} (the ratios of the
reaction rate constants $\kappa_j/\kappa_{-j}$)
\begin{equation}\label{KappaRestrictions}
 \frac{\min k_2}{\max k_1} \leq \frac{\kappa_1}{\kappa_{-1}} \leq \frac{\max
 k_2}{\min{k_1}}, \;\;
 \frac{\min k_3}{\max k_2} \leq \frac{\kappa_2}{\kappa_{-2}} \leq \frac{\max
 k_3}{\min{k_2}}, \;\;
 \frac{\min k_1}{\max k_3} \leq \frac{\kappa_3}{\kappa_{-3}} \leq \frac{\max
 k_1}{\min{k_3}}
 \end{equation}
The detailed balance condition should also hold for the constants $\kappa_{\pm j}$:
\begin{equation}\label{Kappadetbal}
\kappa_1 \kappa_2 \kappa_3= \kappa_{-1} \kappa_{-2} \kappa_{-3}
\end{equation}

The equilibria for this cycle satisfy the conditions
$$\kappa_{1}c_1^*=\kappa_{-1}c_2^*, \; \kappa_{2}c_2^*=\kappa_{-2}c_3^*, \;
\kappa_{3}c_3^*=\kappa_{-3}c_1^*$$.

\begin{figure}[t]
\centering{\includegraphics[height=0.5\textwidth]{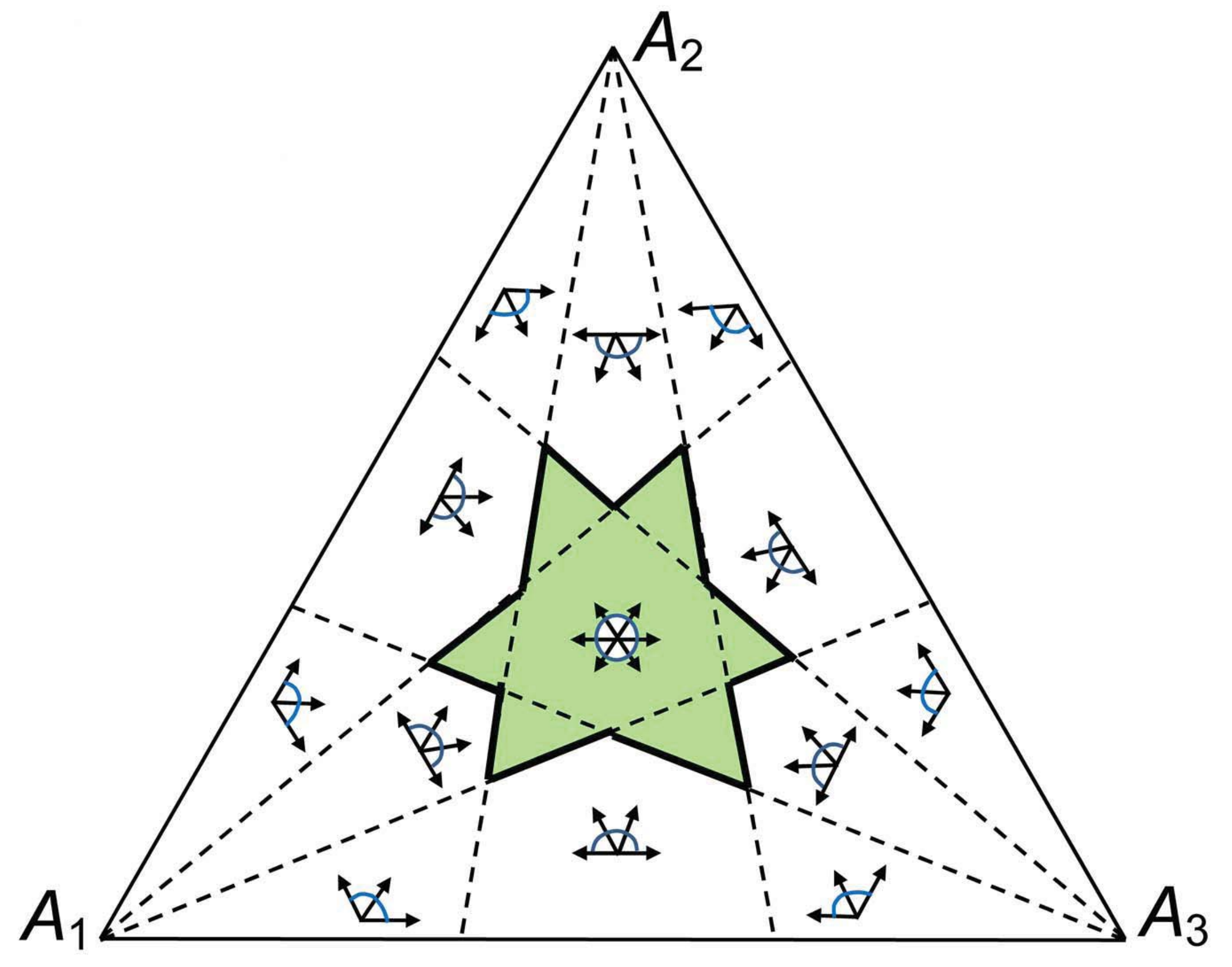}}
\caption{Partial equilibria of the reversible cycle (\ref{reversibleCycle})
with the interval restrictions on the equilibrium constants. The triangle is split by the lines of partial
equilibria $A_i \rightleftharpoons A_j$ into several compartments. The borders of these compartments
are combined from the segments of the dashed lines. These dashed lines correspond to
the minima and maxima of the equilibrium constants $\kappa_{j}/\kappa_{-j}$.  In each compartment, the
cone (the angle) of possible directions of $\dot{c}$ is given. This is a proper cone (an angle that is less than $\pi$)
outside the equilibrium strips, a halfplane in an equilibrium strip of a single reaction, and a
whole plane in the intersection of two such strips. The area of the possible equilibria
(where the angle of possible directions of $\dot{c}$ is the whole plane) is outlined by bold
line and colored in green. \label{IntervalLinearCycle}}
\end{figure}

These conditions provide the same range of  equilibrium concentrations for the reversible
and irreversible cycles. Therefore, the possible value of $\dot{c}$ for the irreversible
cycle in the given interval of reaction rate constants always belongs to the cone of
possible values of $\dot{c}$ of the reversible cycle under the given restrictions
(\ref{KappaRestrictions}) and the detailed balance condition (\ref{Kappadetbal}).

For the reversible cycle the reaction rates are
$$r_1=\kappa_1 c_1-\kappa_{-1}c_2, \; r_2=\kappa_2 c_2- \kappa_{-2}c_3,  \; r_3= \kappa_3 c_3-\kappa_{-3}c_1$$
The reaction rate of the reaction $2A_1 \rightleftharpoons 3A_2$ is $r_4=k_4
c_1^2-k_{-4}c_3^3$.

The time derivatives of the concentrations are
\begin{equation}\label{toyKinur}
\dot{c}_1=-r_1+r_3-2r_4, \; \dot{c}_2=r_1-r_2+3r_4, \; \dot{c}_3=r_2-r_3
\end{equation}

The differential inclusion for the reversible linear cycle (\ref{reversibleCycle}) is
represented in Fig.~\ref{IntervalLinearCycle}. There are three types of areas: (i) area
where the equilibria may be located and the direction of $\dot{c}$ may coincide with any
vector of the linear subspace $\sum_i \dot{c}_i=0$, (ii) areas where direction of one
reaction is indefinite but the signs of two other  reactions rates are fixed, and (iii)
areas where the signs of all reaction rates are fixed. The cones (angles) of possible
vectors $\dot{c}$ are drawn in Fig.~\ref{IntervalLinearCycle}

For the linear system the scheme presented in Figure~\ref{IntervalLinearCycle} does not
depend on the positive value of the balance $\sum_i c_i = \varepsilon$. We can just
rescale $c_i \leftarrow c_i/\varepsilon$ and return to the unit triangle with the unit
sum of $c_i$. The situation is different for the nonlinear reaction $2 A_1
\rightleftharpoons 3 A_2$. Consider the ``equilibrium strip'' where the reaction rate
$r_4=k_4 c_1^2-k_{-4}c_2^3$ may be zero for the admissible reaction rate  constants:
$$\frac{\min k_{-4}}{\max
k_4}\leq\frac{c_1^2}{c_2^3}\leq \frac{\max k_{-4}}{\min k_4}$$
 Let us take this strip on
the plane $\sum_i c_i =\varepsilon$ and return it to the unit triangle by rescaling
($c_i\leftarrow c_i/ \varepsilon$). For small $\varepsilon$ this strip approaches the
$[A_2,A_3]$ edge of the triangle. It is situated between the line
$${c_1}=\sqrt{\varepsilon}\sqrt{\frac{\max k_{-4}}{\min k_4}}(1-{c_3})^{3/2}$$ and the segment
$[A_2,A_3]$. Further we use the notation $\vartheta$ for the coefficient in this formula:
$$\vartheta=\sqrt{\varepsilon}\sqrt{\frac{\max k_{-4}}{\min k_4}}$$
The line
\begin{equation}\label{3/2separator}
c_1=\vartheta (1-c_3)^{3/2}
\end{equation}
separates the equilibrium strip of the reaction $2 A_1 \rightleftharpoons 3 A_2$ (where
$r_4=0$ for some admissible combinations of the reaction rate constants) from the area
where $r_4>0$ (i.e. $k_4 ({\varepsilon c_1})^2-k_{-4}(\varepsilon c_2)^3>0$ for all
admissible $k_4, k_{-4}$. (We use the rescaling from the triangle with $\sum
c_i=\varepsilon$ to the unit triangle without further comments.)

We will study intersection of the equilibrium strip for the reaction $2 A_1
\rightleftharpoons 3 A_2$ with different planes and then scale the result to the balance
plane $\sum_i c_i =1$. The projection of the strip from all planes $\sum_i c_i = a
\varepsilon$ onto the unit triangle for $a\in [\min a,\max a]>0$ belong to the projection
of the strip from the plane $\sum_i c_i = \varepsilon $ with the extended range of the
equilibrium constants:
\begin{equation}\label{rescalingK4}
 \min a \frac{\min k_{-4}}{\max k_4}  \geq \frac{k_{-4}}{k_4}\leq \max a \frac{\max k_{-4}}{\min k_4}
\end{equation}
This rescaling does not cause any difficulty but requires additional check at the end of
construction: does the set of the constructed faces (``peels'') has the bounded ratio
$$\frac{\max {\sum_i c_i}}{\min {\sum_i c_i}}$$
with the upper estimate does not dependent on the values of $\frac{k_{-4}}{k_4}$.

This line is tangent to the segment at the vertex $A_3$ (Fig.~\ref{CurveStrip}). On the
other side of the line the time derivative of $\sum_i c_i$ is positive:

$$\sum_i \dot{c}_i= r_4>0$$

\begin{figure} \caption{The equilibrium strip of the reaction $2 A_1 \rightleftharpoons 3 A_2$ (yellow)
and the area where $\sum_i \dot{c}_i>0$ (blue) rescaled from the triangle with $\sum c_i=\varepsilon$
to the unit triangle  \label{CurveStrip}}
\centering{\includegraphics[height=0.4\textwidth]{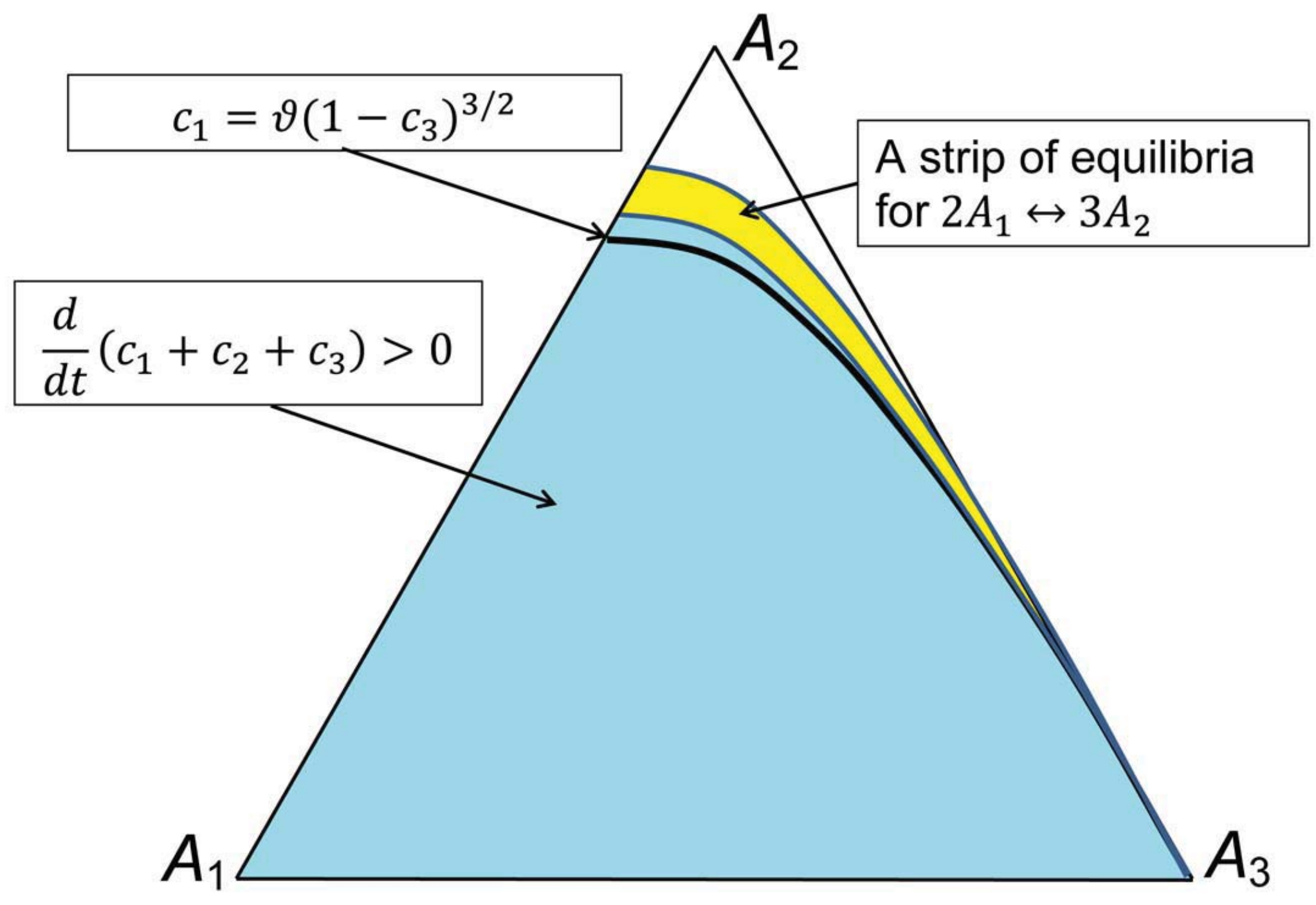}}\end{figure}

\begin{figure} \caption{Faces of the peeled invariant set in the central projection onto unit
triangle. The borders between faces are highlighted by bold.  \label{FaceStructure}}
\centering{\includegraphics[width=0.9\textwidth]{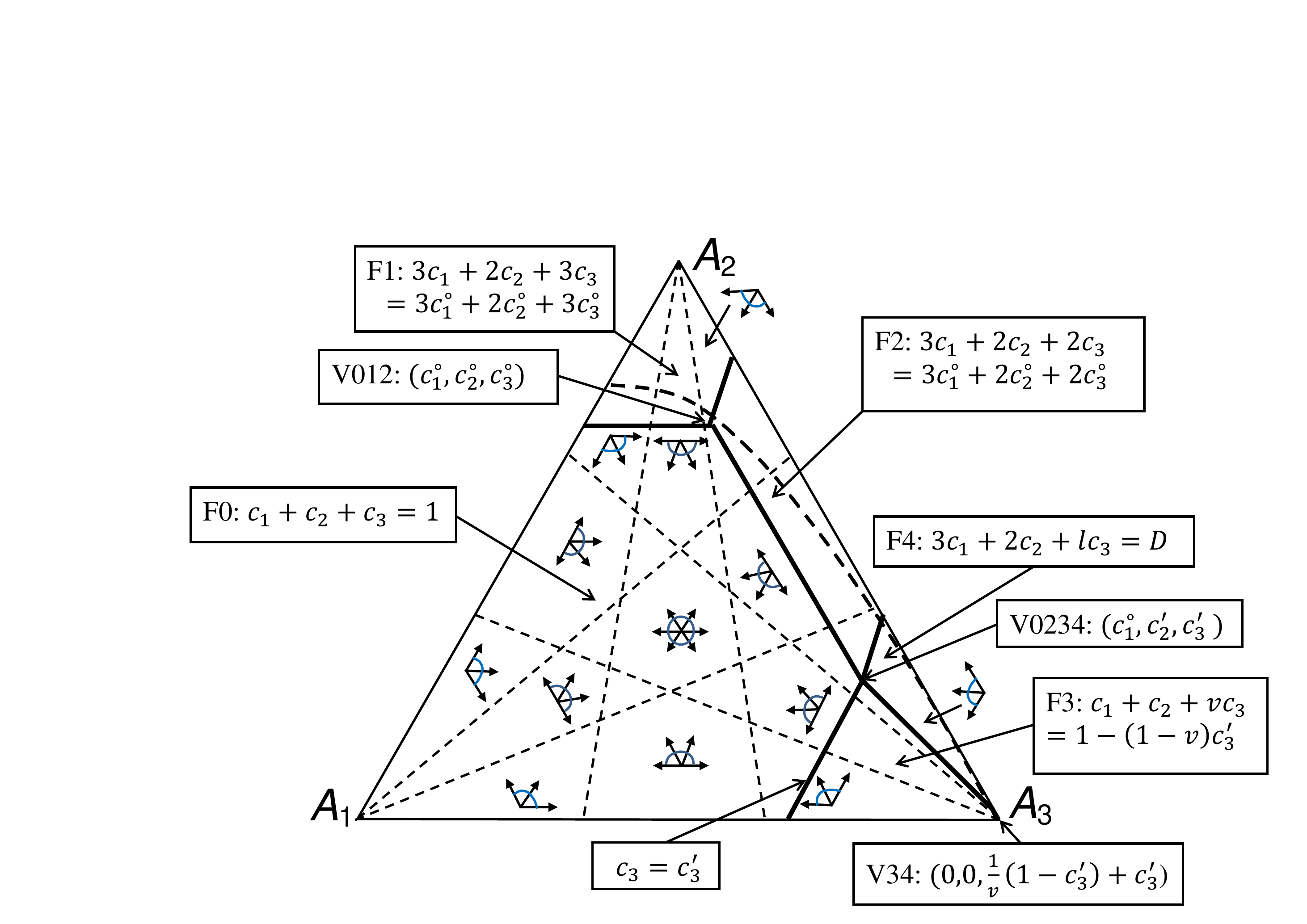}}\end{figure}

Let us describe first the structure of the peeled set. Select for peeling the set $U=\{c
\, | \, \sum_i c_i \geq \varepsilon, \, c_i \geq 0\}$. The structure of peeling scaled to
$c_1+c_2+c_3=1$ is presented in Fig.~\ref{FaceStructure}. It appears that the piecewise
linear peeling is sufficient. There are five faces different from the coordinate planes.
The face F0 is a polygon on the plane $\sum c_i=1$. The face F1 is situated at the $A_2$
corner. It is produced by the peeling parallel to ${\rm Span}\{\gamma_3, \gamma_4\}$. The
plane of F1 is given by the equation $3c_1+2c_2+3c_3=const$. The face F2 is presented by
a parallelogram at the middle of the edge $[A_2,A_3]$ (Fig.~\ref{FaceStructure}). It
covers the intersection of the equilibrium strips of the reactions $2A_1
\rightleftharpoons 3A_2$ and the reaction $A_2 \rightleftharpoons A_3$. F2 is produced by
the peeling parallel to ${\rm Span}\{\gamma_2, \gamma_4\}$. The plane is given by the
equation $3c_1+2c_2+2c_3=const$. Its intersection with the plane $c_1+c_2+c_3=1$ is a
straight line $c_1=c_1^{\circ}, c_2+c_3=1-c_1^{\circ}$ for a sufficiently small
$c_1^{\circ}>0$.

The final fragment of peeling is situated near the vertex $A_3$
(Fig.~\ref{FaceStructure}). It consists of two triangles. The first (F3) is a fragment of
a plane $c_1+c_2+vc_3=const$ ($0<v<1$). Parameter $v$ is defined from the condition of
positive invariance below.

The second triangle (F4) situated near the vertex $A_3$ is parallel to $\gamma_4$ and has
the common edge with F3. The general   plane parallel to $\gamma_4$ is given by the
equation $3c_1+2c_2+lc_3=D$. We will define the parameters $l$ and $D$ using  the
vertices of the face F3, V34 and V0234 (see Fig.~\ref{FaceStructure}).

Let us define the parameters of this peeling. At the $A_2$ corner the peeling is parallel
to ${\rm Span}\{\gamma_3, \gamma_4\}$. The plane can be given by the equation
$3c_1+2c_2+3c_3=const$. The edge between this face and the face $\sum c_i = 1$ belongs to
the straight line $c_2=c_2^{\circ}$, $c_1+c_3=1-c_2^{\circ}$. The level $c_2^{\circ}$
should be selected above all the equilibria of the linear reactions
(Fig.~\ref{IntervalLinearCycle}) but below the intersection of the curve
(\ref{3/2separator}) with the right border of the equilibrium strip of the reaction
$A_1\rightleftharpoons A_3$ given by the equation $c_3 = c_1
\max\left\{\frac{\kappa_{-3}}{\kappa_3}\right\}$. For the intersection we have
$$c_3=\vartheta \max\left\{\frac{\kappa_{-3}}{\kappa_3}\right\}(1-c_3)^{3/2} $$
Therefore, at this point
$$c_3< \vartheta \max\left\{\frac{\kappa_{-3}}{\kappa_3}\right\}$$
and $c_1 <\vartheta $ on the line (\ref{3/2separator}). Therefore, we can select
$$c_2^{\circ}=1-\vartheta \left(\max\left\{\frac{\kappa_{-3}}{\kappa_3}\right\}+1
\right)$$ This $c_2^{\circ}$ is smaller than the value of $c_2$ at the intersection, and
for sufficiently small $\vartheta$ the line $c_2=c_2^{\circ}$ is close to the vertex
$A_2$ and does not intersect the area of possible equilibria of linear reactions (the
area colored in green in Fig.~\ref{IntervalLinearCycle}).

Consider intersection of the straight line $c_2=c_2^{\circ}$, $c_1+c_3=1-c_2^{\circ}$
with the curve (\ref{3/2separator}) and evaluate the value of $c_1$ at this intersection
from above: $c_1=\vartheta (c_2^{\circ}+c_1)^{3/2}$, $c_1<\vartheta$, hence,
$c_1<c_1^{\circ} =\vartheta (c_2^{\circ}+\vartheta)^{3/2}$.

Thus, the vertex V012 at the intersection of three faces, F0, F1, and F2 is selected as
$(c_1^{\circ},c_2^{\circ},c_3^{\circ})$, where
$$c_1^{\circ}=\vartheta \left(1-\vartheta \max\left\{\frac{\kappa_{-3}}{\kappa_3}\right\}\right)^{3/2}$$
$$c_2^{\circ}=1-\vartheta \left(\max\left\{\frac{\kappa_{-3}}{\kappa_3}\right\}+1
\right)$$
$$c_3^{\circ}=1-c_1^{\circ}-c_2^{\circ}=\vartheta\left(1+\max\left\{\frac{\kappa_{-3}}{\kappa_3}\right\}
- \left(1-\vartheta \max\left\{\frac{\kappa_{-3}}{\kappa_3}\right\}\right)^{3/2}\right)$$

To check that this point is outside the equilibrium strip of the reaction
$A_1\rightleftharpoons A_3$, we calculate

$$\frac{c_3^{\circ}}{c_1^{\circ}}=\frac{1+\max\left\{\frac{\kappa_{-3}}{\kappa_3}\right\}}{\left(1-\vartheta
\max\left\{\frac{\kappa_{-3}}{\kappa_3}\right\}\right)^{3/2}} -1>\max\left\{\frac{\kappa_{-3}}{\kappa_3}\right\}$$

The next group of parameters we have to identify are the coordinates of the vertex V0234
$(c_1',c_2',c_3')$ at the intersection of four faces F0, F2, F3, and F4. We will define
it as the intersection of F0, F2, and F3 and then use its coordinates for defining the
parameters of F4. One coordinate, $c_1'$ is, obviously, $c_1'=c_1^{\circ}$ because the
intersection of F2 and F0 is parallel to $\gamma_2$, i.e. it is parallel to the edge
$[A_2,A_3]$ of the unit triangle and $c_1$ is constant on this edge. Another coordinate,
$c_3'$ can be easily determined from the condition that the line $c_3=c_3'$ in the unit
triangle should not intersect the strips of equilibria for the reactions $A_2
\rightleftharpoons A_3$ and $A_1 \rightleftharpoons A_3$. Immediately, these condition
give the inequalities that should hold for all admissible reaction rate constants:
$$c_3'>\frac{\kappa_{-3}}{\kappa_3+\kappa_{-3}}, \;\; c_3'>\frac{\kappa_{2}}{\kappa_2+\kappa_{-2}}$$
Finally,
$$c_3'>\max\left\{\frac{1}{\min\left\{\frac{\kappa_3}{\kappa_{-3}}\right\}+1},\; \frac{1}{1+\min\left\{\frac{\kappa_{-2}}{\kappa_{2}}\right\}}\right\} $$

We can take $c_3'$ between this maximum and 1: for example, we propose
$$c_3'=\frac{1}{2}+\frac{1}{2}\max\left\{\frac{1}{\min\left\{\frac{\kappa_3}{\kappa_{-3}}\right\}+1},\;
\frac{1}{1+\min\left\{\frac{\kappa_{-2}}{\kappa_{2}}\right\}}\right\} $$ For sufficiently
small $\vartheta$, the inequality  $c_3'+c_1^{\circ}<1$ holds, and we can take
$c_2'=1-c_3'-c_1^{\circ}>0$.

If we know $c_3'$ and $v$ then we know the equation of the plane F4:
$$c_1+c_2+vc_3=1-(1-v)c_3'$$

We also find immediately the coordinates of the vertex V34, the intersection of F3 (and
F4) with the coordinate axis $A_3$. This vertex is $(0,0,\frac{1}{v}(1-c_3')+c_3')$.

Let us define the parameters $l$ and $D$ for the face F4. This face should include the
vertices V0234 $(c_1^{\circ},c_2',c_3')$ and V34 $(0,0,\frac{1}{v}(1-c_3')+c_3')$.
Therefore,
$$l=v\left(2+\frac{c_1^{\circ}}{c_1^{\circ}+c_2'}\right), \; D=3c_1^{\circ}+2c_2'+lc_3'$$

To demonstrate the positive invariance of the peeled set we have to evaluate the sign of
the inner product of $\dot{c}$ onto the inner normals to the faces on the faces.

The signs of some reaction rates are unambiguously defined on the faces:
\begin{itemize}
\item On F0 $r_4>0$;
\item On F1 $r_1<0$, and $r_2>0$;
\item On F2 $r_1<0$, and $r_3>0$;
\item On F3 $r_2<0$, $r_3>0$, and $r_4>0$;
\item On F4 $r_1<0$, $r_2<0$, and $r_3>0$.
\end{itemize}

The inner products of $\dot{c}$ (\ref{toyKinur}) onto the inner normals to the faces are:

\begin{itemize}
\item On F0 $\frac{d}{dt}(c_1+c_2+c_3)=r_4>0$;
\item On F1 $\frac{d}{dt}(3c_1+2c_2+3c_3)=-r_1+r_2>0$;
\item On F2 $\frac{d}{dt}(3c_1+2c_2+2c_3)=-r_1+r_3>0$;
\item On F3 $\frac{d}{dt}(c_1+c_2+vc_3)=(1-v)(-r_2+r_3)+r_4>0$ ($0<v<1$);
\item On F4 $\frac{d}{dt}(3c_1+2c_2+lc_3)=-r_1-(2-l)r_2+(3-l)r_3<0$ if $0<l<2$.
\end{itemize}
Thus, the peeled set is positively invariant if $0<l<2$. This means
$$0<v<\frac{1}{1+\frac{c_1^{\circ}}{2(c_1^{\circ}+c_2')}}$$
It is sufficient to take $0<v\leq\frac{2}{3}$ (for example, $v=\frac{2}{3}$) because of
the obvious inequality, $\frac{c_1^{\circ}}{2(c_1^{\circ}+c_2')}<\frac{1}{2}$.

We see that the peeled faces are located between the planes $\sum_i c_i =\varepsilon$ and
$\sum_i c_i =\frac{3}{2}\varepsilon$ (for $v=2/3$). Therefore, it is sufficient to take
in the rescaling (\ref{rescalingK4})  the constants $\max a=\frac{3}{2}$, $\min a =1$
which do not depend on the equilibrium constant.

We have demonstrated that for any  given range of positive kinetic constants any positive
solution of the kinetic inclusion for the system (\ref{3Dtoy}) cannot approach the origin
when $t\to \infty$.

We have started from a system (\ref{3Dtoy}) with interval rate constants and have
embedded the corresponding differential inclusion into a differential inclusion for a
reversible system with detailed balance (\ref{Kappadetbal}) and interval restrictions
onto equilibrium constants (\ref{KappaRestrictions}).

We have constructed a piecewise-linear surface that isolated the $\varepsilon$-vicinity
of the origin from the outside for sufficiently small $\varepsilon>0$. This surface
cannot be intersected by the solutions of the kinetic inclusion in the motion from the
outside to the origin.

The peeling procedure used in this toy-example differs from the universal greedy peeling.
(It is the simplified version of the greedy peeling.) We have guessed the structure of
the corner near $A_3$ and build two plain faces, F3 and F4, instead of a sequence of the
curvilinear ``cylindric'' faces. This piecewise peeling is not minimal but is simpler for
drawing.

\subsection*{Acknowledgement} I am very grateful to Dr Anne Shiu from the  Department
of Mathematics at the University of Chicago. She ensured me that my results annotated in
1979 \cite{GorbanRKCL1980} may be still of interest for the chemical dynamics community.
In this Appendix, I explain one of the methods (forward--invariant peeling) used in this
work, the further details will follow.

\end{document}